\definecolor{DarkBlue}{rgb}{0.1,0,0.55}
\definecolor{DarkGreen}{RGB}{24,126,35}
\definecolor{lgrey}{RGB}{245,245,245}     % background
\definecolor{colKeys}{RGB}{0,0,255}       % blue
\definecolor{colIdentifier}{RGB}{0,0,0}	  % black
\definecolor{colComments}{RGB}{34,139,34} % green
\definecolor{colString}{RGB}{160,32,240}  % purple
\newtheoremstyle{standard}% name
  {0,5cm}      % Space above
  {0,5cm}      % Space below
  {\upshape} % Body font
  {}         % Indent amount (empty = no indent, \parindent = para indent)
  {\bfseries}% Thm head font
  {}        % Punctuation after thm head
  {\newline} % Space after thm head: " " = normal interword space;
\newcounter{saveenumi}
\newcommand{\1}{\mbox{$\mathrm{1\hspace*{-2.5pt}l}$\,}}
\newcommand{\beq}{\begin{equation}}
\newcommand{\eeq}{\end{equation}}
\newcommand{\beqno}{\begin{equation*}}
\newcommand{\eeqno}{\end{equation*}}
\newcommand{\beqn}{\begin{eqnarray}}
\newcommand{\eeqn}{\end{eqnarray}}
\newcommand{\beqnn}{\begin{eqnarray*}}
\newcommand{\eeqnn}{\end{eqnarray*}}
\newcommand{\balgn}{\begin{align}}
\newcommand{\ealgn}{\end{align}}
\newcommand{\balgnn}{\begin{align*}}
\newcommand{\ealgnn}{\end{align*}}
\newcommand{\ben}{\begin{enumerate}}
\newcommand{\een}{\end{enumerate}}
\newcommand{\bit}{\begin{itemize}}
\newcommand{\eit}{\end{itemize}}
\newcommand{\bbm}{\begin{bmatrix}}
\newcommand{\ebm}{\end{bmatrix}}
\newcommand{\E}[1]{\mathbb{E}\left[#1\right]}
\newcommand{\Prob}[1]{\mathbb{P}\left(#1\right)}
\newcommand{\trace}[1]{\text{tr}\left(#1\right)}
\newcommand{\traces}[1]{\text{tr}\left[#1\right]}
\newcommand{\spec}[1]{\left\lVert #1\right\lVert}
\newcommand{\frob}[1]{\left\lVert #1\right\lVert_F}
\setlist[itemize]{leftmargin=2cm}
\theoremstyle{plain}
\newtheorem{assumption}{Assumption}
\newtheorem{theorem}{Theorem}
\newtheorem{lemma}[theorem]{Lemma}
\numberwithin{equation}{section}
\renewenvironment{proof}{{\bfseries Proof.}}{\qed}
\numberwithin{equation}{section}
\numberwithin{figure}{section}
\newcommand{\blind}{0}
\begin{document}

\if0\blind
{
    \title{\renewcommand{\thefootnote}{\fnsymbol{footnote}}\vspace{-1.5cm}\textbf{A Regularized Factor-augmented Vector Autoregressive Model}}
    %Thank you to Ralf Br\"uggemann and Joachim Grammig for useful comments.}
    
    \author{
        \Large{Maurizio Daniele}\vspace{.2cm}
        \\ Department of Economics, University of Konstanz\\
        \texttt{\href{mailto:maurizio.daniele@uni-konstanz.de}{maurizio.daniele@uni-konstanz.de}}
        \vspace{.2cm} \\
        and 
        \vspace{.2cm} \\
        \Large{Julie Schnaitmann}\footnote{Corresponding Author. We thank Ralf Br\"uggemann, Mehmet Caner, and the participants at the $($EC$)^2$ 2018 in Rome, the QFFE 2019 in Marseille and the IAAE 2019 in Nicosia for useful comments on earlier versions of the paper. Financial support by the Graduate School of Decision Sciences (GSDS) and the German Science Foundation (DFG) grant number BR 2941/3-1 is gratefully acknowledged.}\vspace{.2cm}
        \\Department of Economics, University of Konstanz\\
        \texttt{\href{mailto:julie.schnaitmann@uni-konstanz.de}{julie.schnaitmann@uni-konstanz.de}}
    }
    \date{\today}
    
    \maketitle
} \fi

\if1\blind
{
    \bigskip
    \bigskip
    \bigskip
    $ $
    \vspace{1cm}
    $ $
    \begin{center}
        {\LARGE\bf A Regularized Factor-augmented Vector Autoregressive Model} \\
        \bigskip
        \today
    \end{center}
    \medskip
} \fi

\bigskip
\begin{abstract}\footnotesize
    We propose a regularized factor-augmented vector autoregressive (FAVAR) model that allows for sparsity in the factor loadings. 
    In this framework, factors may only load on a subset of variables which simplifies the factor identification and their economic interpretation. We identify the factors in a data-driven manner without imposing specific relations between the unobserved factors and the underlying time series. %Our method circumvents the identification schemes conventionally used in the FAVAR literature that impose specific relations between the factors and observed time series that are difficult to justify economically. %enables to interpret the resulting factors economically.
    Using our approach, the effects of structural shocks can be investigated on economically meaningful factors and on all observed time series included in the FAVAR model.
    We prove consistency for the estimators of the factor loadings, the covariance matrix of the idiosyncratic component, the factors, as well as the autoregressive parameters in the dynamic model.
    In an empirical application, we investigate the effects of a monetary policy shock on a broad range of economically relevant variables. We identify this shock using a joint identification of the factor model and the structural innovations in the VAR model. 
    We find impulse response functions which are in line with economic rationale, both on the factor aggregates and observed time series level.\\ 

    \noindent{\em Keywords: } Factor-augmented VAR model, SVAR, $L_{1}$-regularization, Monetary policy shock \\
    {\em JEL classification: C32, C38, C55, E52} 
\end{abstract}

%\tableofcontents
\newpage

\section{Introduction} \label{sec:intro}
%\input{1_INTRODUCTION.tex}
% This chapter should motivate why we choose to work with a factor-augmented vector autoregressive model. 
In this paper, we propose a regularized factor-augmented vector autoregressive (FAVAR) model to investigate the effects of a structural macroeconomic shock on the economy in the presence of many observed time series.
%We are interested in conducting structural analysis in a macroeconomic context where we have many observed time series at hand and are concerned with the impact of a shock to the economic system. 
Since the seminal work of \cite{sims1980macroeconomics}, small scale vector autoregressive (VAR) models are conventionally used to analyze the dynamic effects of structural shocks on economic systems.
%Using a VAR model, the effects of a shock in any of the reduced form innovations can be traced out on all endogenous variables. %We can impose structure on the innovations by pre-multiplying the composite innovations by a matrix which identify each shock structurally. %, i.e.\ $ \eta^*_t = A \eta_t$.
Structural shocks arise as linear combinations of the reduced form innovations which depend on the variables included. Since VAR models typically incorporate only relatively few variables, the information set spanned by those models is rather limited and important characteristics of the underlying economy may be omitted. This problem is known as ``non-fundamentalness'', which implies that there is no direct mapping between the reduced form and structural innovations and misspecified structural shocks are obtained.\footnote{An overview of the literature is provided in a review paper by \cite{AlessiBarigozziCapasso2011}.} In an empirical application, this informational deficiency may result in misleading impulse response patterns, e.g.\ in the form of price puzzles as described in \cite{Sims1992} and \cite{ramey2016macroeconomic}.
Hence, in order to span the entire space of the structural shocks, it is crucial to incorporate all relevant variables. However, as the number of parameters increases with the square of the number of included variables, the extent to which a VAR model can be enlarged is limited by the number of observations.% the fact that the number of parameters may not exceed the number of observations. 

%{\textbf{Problem : Structural VAR trade-off structural interpretability (ovb) vs. degree-of-freedom problem }}

To circumvent these drawbacks, dimension reduction techniques that enable the use of the informational content of many time series for structural analysis have obtained increasing attention. A frequently used approach is the FAVAR model introduced by \cite{bernanke2005measuring}. 
FAVAR models have been used to investigate structural monetary policy, fiscal or oil price shocks. An incomplete list comprises the studies by \cite{del200799}, \cite{boivin2007global}, \cite{boivin2009sticky}, \cite{mumtaz2009transmission}, \cite{kilian2011does} and \cite{stock2016factor}.
%We should cite some studies here and see what they do in their application: \cite{kilian2016structural}, \cite{stock2016factor}, , , etc.
%{\textbf{FAVAR: even better dimension reduction - model specification easier to explain economically + structural interpretation of the observed factors}}

A FAVAR model decomposes the co-movement of the observed time series into a common and idiosyncratic component. The common component is composed of latent and observed factors that affect the underlying time series according to the corresponding weights represented by the factor loadings matrix.
The idiosyncratic innovations allow for weak cross-sectional and serial correlations in the spirit of \cite{chamberlain1983arbitrage}.
As the number of latent and observed factors is much smaller than the number of time series included, the information contained in a large panel of variables is condensed into a small number of factors.
To allow for dynamics, the FAVAR model incorporates autoregressive structures for both types of factors. Due to the large number of time series incorporated in the FAVAR model, the model accommodates enough variables to span macroeconomic shocks without an omitted variable bias.

There is an identification problem associated with the FAVAR approach. In contrast to the observed factors that can be interpreted, only the common component corresponding to the latent factors can be estimated consistently. To identify the latent factor and factor loadings estimates, restrictions have to be imposed on the FAVAR model. %These restrictions can be imposed on the factors, the factor loadings or the dynamics of the factors.
In the context of structural analysis, it is crucial to introduce a scheme that allows an economic interpretation of the factors. The named factor identification scheme is used in the factor literature which deals with the analysis of structural shocks. Examples of named factor identification schemes can be found in \cite{bernanke2005measuring}, \cite{stock2016factor} or \cite{bai2016estimation}. This scheme associates each latent factor with a unique observed time series, by which the factor is defined. However, these time series may not represent an entire economic sector appropriately. Furthermore, the naming time series potentially impose implausible relations between the factors and the remaining observed time series that affect the structure of the factor model and its dynamics.

%To enhance flexibility, we propose a regularized FAVAR model that does not rely on the restrictive identification assumptions. 
%Furthermore, our framework introduces sparsity in the factor loadings which leads to factors that can be interpreted sensibly.
In this paper, we propose a regularized FAVAR model that allows for sparsity in the factor loadings matrices of the latent and observed factors. 
In this framework, factors may only load on a subset of variables which represent different economic sectors. Hence, our framework also incorporates weak factors. Therefore, this allows to identify and interpret the factors economically without imposing restrictive identification restrictions through a named factor scheme. % and enhances flexib%enables to interpret the resulting factors economically.
%The DFM has been introduced and investigated by \cite{geweke1977dynamic}, \cite{sargent1977business}, \cite{forni2000generalized} and \citeauthor{stock2002forecasting}, (\citeyear{stock2002forecasting}, \citeyear{stock2005implications},  \citeyear{stock2011dynamic}) among others.
%Our model additionally allows for the distinction of strong and weak factors where the latter only influence a small fraction of the observed time series. This separation enables to identify ``too weak factors'' that are not consistently estimable by principal component analysis (c.p. \cite{onatski2012asymptotics}) as they cannot be disentangled from the idiosyncratic errors. By omitting ``too weak factors'' the estimation uncertainty in the vector autoregressive part may be substantially reduced. 
%\item estimation strategy
Our estimation procedure relies on a penalized quasi-maximum likelihood approach and is based on $L_1$-norm regularization of the factor loadings. The estimation of the factor loadings for the latent and observed factors is conducted in a single step. This allows for a similar degree of shrinkage for both type of factors, which is necessary to retain their structural interpretation.

In the theoretical part of the paper, we prove consistency for the estimators of the factor loadings, the factors and the covariance matrix of the idiosyncratic component under the average Frobenius norm. Moreover, the autoregressive parameters in the dynamic model are consistently estimated as well.

Using our approach, the impact of structural shocks can be investigated on economically meaningful factors and observed time series using impulse response functions. We propose a joint identification scheme of the factor model and the structural innovations in the VAR model. More specifically, the scheme focuses on the identification of shocks to the observed factors. In a monetary policy application, the key interest lies in the monetary policy shock which amounts to a shock in the structural innovation of the Federal Funds rate (FFR). As many time series react contemporaneously to changes in the FFR, it is commonly treated as an observed factor, see e.g.\ \cite{bernanke2005measuring} or \cite{boivin2009sticky}.
%More precisely, we show that the identification restrictions imposed on the factor model imply the identification of the shock in the Federal Funds rate.

Using monthly US macroeconomic data, we apply our framework to investigate the effects of a monetary policy shock on the factors and the underlying time series. We are able to identify the factors in a data-driven manner and % without implicitly assuming their form by the choice of the naming variables. 
extract five latent factors that relate to the labor market, prices, industrial production, the stock market and credit spreads. %, and an observed interest rate factor given by the Federal Funds rate.
The impulse response functions are in line with economic rationale both on the factor aggregates and the observed time series level. In particular, following a tight monetary policy, prices and industrial production fall, credit conditions deteriorate and the level of employment decreases. These results are similar to those obtained by \cite{ForniGambetti2010} in a structural dynamic factor model framework. However, the latter is very sensitive to changes in the model specification in comparison to our regularized FAVAR model.  

The remainder of this paper is organized as follows. %in Section \ref{sec:lit} we motivate the use of a factor-augmented vector autoregressive model for structural analysis, outline the relevant literature and we discuss the estimation methods introduced in the literature. 
In Section \ref{sec:model} we describe the regularized FAVAR approach and propose an estimation procedure, where we impose shrinkage on the factor loadings of both the observed and latent factors. We further provide an identification scheme that jointly identifies the factor model and the structural innovations in the dynamic equation. %Moreover, as we are interested in impulse response functions obtained from the FAVAR model to investigate structural shocks, we discuss two different bootstrap methods to generate confidence bounds for impulse response functions of the FAVAR model. 
In Section \ref{sec:theo} the large sample properties of the sparse factor loadings, the latent factor estimator, the covariance matrix of the idiosyncratic innovations, as well as the coefficients of the dynamic equation are provided. 
% we describe in detail the majorized log likelihood function and the projection gradient algorithm used to minimize the penalized log likelihood function. %Furthermore, we elaborate on how we specify the number latent and observed factors, the number of lags used in the dynamic equation and how we choose the tuning parameters. 
%In Section \ref{sec:appl} we provide an empirical application in which we investigate the structural impact of an exogenous monetary policy shock on the estimated factors and the underlying time series. We are able to identify the estimated factors in a data-driven manner without implicitly assuming their form aprioi in our identification scheme. Moreover, we find impulse response function which are in line with economic rationale both on the factor aggregate and observed time series level. We also provide results for different time spans, sampling frequency, estimation procedures and alternative choices of identification schemes.
In Section \ref{sec:appl} our empirical application investigates the effects of a monetary policy shock on the US economy. Finally, Section \ref{sec:con} concludes.

The following notation is used throughout the paper: $\pi_{\max}(A)$ and $\pi_{\min}(A)$ are the maximum and minimum eigenvalue of a matrix $A$. Further, $\spec{A}$ and $\frob{A}$ denote the spectral and Frobenius norm, respectively. They are defined as $\spec{A} = \sqrt{\pi_{\max}(A'A)}$ and  $\frob{A} = \sqrt{\trace{A'A}}$. For some constant $c > 0$ and a non-random sequence $b_N$, we use the notation $b_N = \mathcal{O}(N)$, if $N^{-1} b_N \to c$, for $N \to \infty$. Moreover, $b_N = o(N)$, if $N^{-1} b_N \to 0$, for $N \to \infty$. Similarly, for a random sequence $d_N$, we say $d_N = \mathcal{O}_p(N)$, if $N^{-1} d_N \overset{p}{\to} c$, for $N \to \infty$ and $d_N = o_p(N)$, if $N^{-1} d_N \overset{p}{\to} 0$, for $N \to \infty$, where $\overset{p}{\to}$ denotes convergence in probability.

%\section{Literature}\label{sec:lit}
%\input{2_LITERATURE.tex}

%\newpage
\section{Econometric modeling framework}\label{sec:model}
%\section{MODEL}

\subsection{Regularized factor-augmented vector autoregressive model}

\noindent We define the factor-augmented vector autoregressive model as 
%\underline{Factor-augmented vector autoregression (FAVAR)}:
\begin{equation}\label{favar1}
    x_t = \Lambda^f f_t + \Lambda^g g_t + e_t,
\end{equation}
\begin{equation}\label{favar2}
    \left[\begin{matrix}
        f_t \\ g_t \\ \end{matrix} \right] = \left[ \begin{matrix}
        \Phi^{f f}(L) & \Phi^{f g}(L) \\ \Phi^{g f}(L) & \Phi^{gg}(L) \\ \end{matrix} \right] \left[\begin{matrix}
        f_{t-1} \\ g_{t-1} \\ \end{matrix} \right] + \left[ \begin{matrix}
        \eta^f_t \\ \eta^g_t \\
    \end{matrix} \right], \quad t = 1, \hdots, T,
\end{equation}

\noindent where $x_t$ is a $(N \times 1)$ vector of the observable time series, $f_t$ is a $(r_1 \times 1)$ vector of latent factors, $g_t$ is a $(r_2 \times 1)$ vector of observed factors, $\Lambda^f$ is a $(N \times r_1)$ matrix of factor loadings of the latent factors, and $\Lambda^g$ is a $(N \times r_2)$ matrix of factor loadings of the observed factors. $e_t$ is a $(N \times 1)$ vector of idiosyncratic innovations which may be cross-sectionally and serially correlated, as in the approximate factor model framework. Their covariance matrix is given by $\Sigma_e = \E{e_t e_t'}$. Moreover, $\eta^f_t$ denotes a $(r_1 \times 1)$ and $\eta^g_t$ a $(r_2 \times 1)$ vector of factor innovations associated with the latent and observed factors, respectively, and $\Phi^{ff}(L)$, $\Phi^{fg}(L)$, $\Phi^{gf}(L)$ and $\Phi^{gg}(L)$ are lag polynomials of order $p$. We define the vector of latent and observed factors as $h_t = \left[f_t', g_t'\right]'$.

% We stay mostly in the \cite{bai2016estimation} estimation framework but implement a regularized FAVAR model were sparsity is imposed on the factor loadings to circumvent restrictive identification assumptions which are hard to justify economically.

In the FAVAR specification the latent factors are not identified without further restrictions. % are prone to an identification problem. 
Their interpretability hinges on the identification restrictions imposed on the factor model. For an exact identification of the model, $r_1^2 + r_1 r_2$ restrictions have to be imposed, see e.g.\ \cite{bernanke2005measuring} or \cite{bai2016estimation}. Among alternative identification schemes, \cite{bai2016estimation} analyze the named factor scheme which yields estimated factors that can be interpreted economically. Intuitively, the idea is to re-order the underlying observed time series such that the first $r_1$ time series define the latent factors. It is assumed that these time series only load onto one factor.\footnote{The resulting factor loadings matrix is given by $\Lambda^f_{NF} = \left[ 
    I_{r_1}, \Lambda^{f\cdot'}\right]'$, where $I_{r_1}$ is a $(r_1 \times r_1)$ identity matrix and $\Lambda^{f\cdot} = \Lambda^{f}_2 \left(\Lambda^{f}_1\right)^{-1}$, with $\Lambda^f = \left[ 
    \Lambda^{f'}_1, \Lambda^{f'}_2 \right]'$ and $\Lambda^{f'}_1$ denotes the upper $(r_1 \times r_1)$ block.}
Hence, the ordering of the time series is crucial for the structure of the factor model.
%NAMED FACTOR; EXPLAIN WHAT IS DOES AND WE WE DO NOT WANT IT
%WRITE MORE ABOUT IDENTIFICATION PROBLEM; LITERATURE SOLUTION AND OURS SOLUTION
To circumvent this a priori selection of the factors, we introduce a regularized FAVAR model where we impose sparsity on the factor loadings matrices. This leads to a factor loadings matrix which is economically interpretable and enables us to attribute economic meaning to the estimated factors. Moreover, by introducing sparsity, we can incorporate weak factors which only load on a subset of the time series. As shown by \cite{Onatski2012}, it is crucial to account for weakly influential factors in the estimation procedure. Otherwise, conventional methods (e.g. principal component analysis (PCA)) estimate those factors with large errors. 
%We differ from the approaches in the literature by introducing sparsity in the FAVAR model which leads to interpretable factors.

% HOW ARE THESE MODELS CONVENTIONALLY ESTIMATED IN THE LITERATURE

%MAYBE HAVE THE PART OF THE INTRODUCTION HERE WHERE WE INTRODUCE THE DIFFERENT ESTIMATION APPROACHES OR JUST THE BAI LI LU APPROACH

A second issue that we target is the estimation of the regularized FAVAR model. The complication in this model arises because it consists of latent and observable factors. This implies that the conventionally used methodologies to extract the latent factors do not work in presence of the observed factors. The effect of the observed factors in explaining the covariance structure of the data has to be accounted for. The approaches introduced in the literature have tackled this problem in various ways. 
\cite{bernanke2005measuring} propose a two-step principal components approach to estimate the FAVAR model which does not take into account that the observed factors also contribute to the factor space used to estimate the latent factors. %\footnote{\cite{bernanke2005measuring} %propose two different estimation strategies for the FAVAR model. They propose a two-step principal components approach to estimate the FAVAR model. This entails estimating the unobserved factors by the first $r_1$ principal components of the covariance matrix of $x_t$ in a first step. In a second step, the estimated factors, $\hat{f}_t$ and the observed factors, $g_t$, are used to estimate the dynamics of the factors by a standard VAR regression. There are two major drawbacks to the proposed estimation method: Principal components estimation does not take into account that the observed factors also contribute to the space of the factors used to estimate $\hat{f}$. Moreover, there is a generated regressors problem in the second step induced by the estimated factors which is not reflected in the estimation procedure.} \footnote{Alternatively, \citeauthor{bernanke2005measuring} suggest a Bayesian likelihood approach by assuming independently normal innovations $\eta_t$ and estimating the state space system jointly by a maximum likelihood based Gibbs sampler. This approach is however very computationally burdensome and is not taken up by the literature that followed.}. 
%\cite{boivin2007global} and \cite{boivin2009sticky} propose an iterative approach to address the problem that the some of the factors are observed by subsequently removing the impact of the observed factors using regressions. The iterative principal components least squares approach, however, lacks theoretical justification and convergence is not guaranteed. 
\cite{bai2016estimation} address this issue by introducing a two-step quasi maximum likelihood (QML) approach to estimate the FAVAR model. In a first step the impact of the observed factors is linearly projected out and they extract the latent factors and their factor loadings using QML. In a second step, they estimate the factor loadings of the observed factors by linear projection.

We propose a procedure for the joint estimation of the factors loadings of both types of factors. 
Hence, the informational content explained by the observed factors is directly accounted when estimating the latent factors. 
This allows us to jointly impose shrinkage on the factor loadings estimates which leads to a similar degree of shrinkage for both types of factors.

\subsection{Estimation of the regularized FAVAR model}

The factors and factor loadings in equation \eqref{favar1} can be estimated by either PCA\footnote{See e.g.\ \cite{Bai2002} or \cite{Stock2002} for a detailed treatment of the PCA, in approximate factor models.} or quasi maximum likelihood (QML) estimation under normality.\footnote{\cite{BaiLi2012} deal with the consistent estimation of the strict factor model, whereas \cite{Bai2016} analyze the approximate factor model estimation by QML.} In the following we pursue estimating the factor model by QML, which allows us to introduce sparsity 
in the factor loadings by penalizing the likelihood function.

%We propose a penalized likelihood approach to estimate the regularized FAVAR model.  
Intuitively, we can estimate the factor loadings of the observed and latent factors, $\Lambda^f$ and $\Lambda^g$, jointly using the covariance matrix of the observed time series, $x_t$.
In matrix notation, the observation equation of the model is defined as
\begin{align}
    X = \Lambda^f F' + \Lambda^g G' + e = \left[ \begin{matrix} \Lambda^f & \Lambda^g \end{matrix} \right] \left[ \begin{matrix} F' \\ G' \\ \end{matrix} \right] + e = \Lambda H + e, \label{favar_m}
\end{align}
where $X = \left[ x_1, \hdots, x_T \right]$ and $e = \left[ e_1, \hdots, e_T \right]$ are $(N \times T)$ matrices, $F = \left[ f_{1t}, \hdots, f_{r_1t} \right]$ and $G = \left[ g_{1t}, \hdots, g_{r_2t} \right]$ are $(T \times r_1)$ and $(T \times r_2)$ matrices, respectively. %Moreover, $H = [F,G]$ is a $(T \times (r_1+r_2))$ matrix and $\Lambda = \left[ \Lambda^f, \Lambda^g \right]$ is a $(N \times (r_1 + r_2))$ matrix.
The covariance matrix of the observed time series $X$ based on the factor model in \eqref{favar_m} can be written as $ \Sigma = \Lambda \Sigma_H \Lambda' + \Sigma_e$,  where $\Sigma_H$ is the composite covariance matrix of the observed and latent factors and $\Sigma_e$ denotes the covariance matrix of the idiosyncratic component. 

%The first part of the covariance matrix corresponding to the common component $\Lambda \Sigma_H \Lambda'$ can be further decomposed in the following terms
%\begin{align*}
%\Lambda \Sigma_H \Lambda' &= T^{-1} \left[ \begin{matrix} \Lambda^f & \Lambda^g \end{matrix} \right] \left[ \begin{matrix} F' \\ G' \\ \end{matrix} \right] \left[ \begin{matrix} F & G \end{matrix} \right] \left[ \begin{matrix} {\Lambda^f}' \\ {\Lambda^g}' \\ \end{matrix} \right] 
%%&= \Lambda^f \frac{F' F}{T} {\Lambda^f}' + \Lambda^f \frac{F'G}{T} {\Lambda^g}' + \Lambda^g \frac{G'F}{T} {\Lambda^f}' + \Lambda^g \frac{G'G}{T} {\Lambda^g}' \\
%%&= \Lambda^f \Sigma_F {\Lambda^f}' + \Lambda^f \frac{F'G}{T} {\Lambda^g}' + \Lambda^g \frac{G'F}{T} {\Lambda^f}' + \Lambda^g \Sigma_G {\Lambda^g}' \\
%= \left[ \begin{matrix} \Lambda^f & \Lambda^g \end{matrix} \right] \underbrace{\left[\begin{matrix} \Sigma_F & 0 \\ 0 & \Sigma_G \\ \end{matrix} \right]}_{ \equiv \Sigma_{FG}} \left[ \begin{matrix} {\Lambda^f}' \\ {\Lambda^g}' \\ \end{matrix} \right] +  \left[ \begin{matrix} \Lambda^f & \Lambda^g \end{matrix} \right] \underbrace{\left[ \begin{matrix} \frac{F'G}{T} {\Lambda^g}' \\ \frac{G'F}{T} {\Lambda^f}' \\ \end{matrix} \right]}_{\equiv B}
%\end{align*} 
%where $\Sigma_F = T^{-1} F'F$ and $\Sigma_G = T^{-1} G'G$.
%Thus, the covariance matrix based on the FAVAR model in \eqref{favar_m} can be expressed as $\Sigma = \Lambda \Sigma_{FG} \Lambda' + \Lambda B +  \Sigma_e$. 

Using the previous result, we obtain the following expression for the negative quasi log-likelihood function for the covariance matrix of the data in the FAVAR model
\begin{align}
    \mathcal{L}(\Lambda, \Sigma_{H}, \Sigma_{e}) = \log \left| \Lambda \Sigma_{H} \Lambda' +  \Sigma_e \right| +  \traces{S_x \left(\Lambda \Sigma_{H} \Lambda' + \Sigma_e\right)^{-1}}, \label{neg_log_lik}
\end{align}
where $S_{x} = \frac{1}{T} \sum_{t = 1}^{T} x_tx_t'$ is the sample covariance matrix of the observed data.
In the first step of our model estimation, we treat $\Sigma_{e}$ as a diagonal matrix and define $\Phi_e = \text{diag}\left(\Sigma_{e}\right)$ denoting a diagonal matrix that contains only the main diagonal elements of $\Sigma_{e}$ to reduce the number of parameters in the estimation. %As the number of parameters in $\Sigma_e$ is potentially of order $\mathcal{O}(N^2)$, the joint estimation with the factor loadings and factors becomes intractable.
%Along the lines of \cite{Lawley1971} we impose the following identification restrictions:
%$ \Sigma_{F} = I_r $ and $\Lambda'\Phi_{e}^{-1}\Lambda$ is diagonal. 
%% 
%%\begin{align}
%%\Sigma_{F} = I_r \quad \text{and} \quad \Lambda'\Sigma_{u}^{-1}\Lambda \;\text{ is diagonal}.  \label{ident_rest}
%%\end{align}
%%
%Moreover, the diagonal entries of $\Lambda'\Phi_{e}^{-1}\Lambda$ are assumed to be distinct and arranged in a decreasing order.
Thus, our unpenalized objective function reduces to
\begin{align}\label{quasi_lik}
    \mathcal{L}(\Lambda, \Sigma_{H}, \Phi_{e}) = \log \left| \Lambda \Sigma_{H} \Lambda' + \Phi_e \right| +  \traces{S_x \left(\Lambda \Sigma_{H} \Lambda' +  \Phi_e\right)^{-1}}.  
\end{align}
As $\Sigma_{e}$ incorporates correlations of general form, equation \eqref{quasi_lik} may be seen as a quasi log-likelihood function, since it imposes the innovation term structure of a strict factor model. However, \cite{Bai2016} show in the approximate factor model framework that imposing this restrictions on $\Sigma_{e}$ does not affect the consistency of the QML estimator. The diagonality assumption on $\Sigma_{e}$ is relaxed in a second step based on the soft-thresholding estimator introduced in Section \ref{sec:poet}.

In order to introduce sparsity in the factor loadings matrix $\Lambda$, we shrink each element of $\Lambda$ towards zero. This is incorporated based on a penalized maximum likelihood estimation of the objective function in \eqref{quasi_lik} by separate $L_1$-norm penalties on the factor loadings associated with the observed and latent factors, respectively. More specifically, we focus on the following penalized optimization problem
\begin{align}
    \begin{split}\label{pen_favar}
        \underset{\{\Lambda,\Phi_{e} \} }{\text{min}} &\log \left| \Lambda \Sigma_{H} \Lambda' + \Phi_e \right| +  \traces{S_x \left(\Lambda \Sigma_{H} \Lambda' + \Phi_e\right)^{-1}}  \\
        &\quad + \mu_1 \sum_{i=1}^N \sum_{k=1}^{r_1} \left| \lambda^f_{ik} \right| + \mu_2 \sum_{i=1}^N \sum_{l=r_1+1}^{r} \left| \lambda^g_{il} \right|,
    \end{split}
\end{align}
where $\mu_1$ and $\mu_2$ determine the degree of penalization of the factor loadings corresponding to the latent and observed factors, respectively. A clear separation of both sets of factor loadings has the advantage of offering a more flexible treatment of both components.

The latent factors $f_t$ can be estimated by generalized least squares (GLS)
\begin{align}\label{gls_factors}
    \tilde{f}_t = \left(\tilde{\Lambda}^{f'} \tilde{\Phi}_{e}^{-1} \tilde{\Lambda}^f \right)^{-1}\tilde{\Lambda}^{f'} \tilde{\Phi}_{e}^{-1} x_t \, ,
\end{align}
where the estimates $\tilde{\Lambda}^f$ and $\tilde{\Phi}_{e}$ are the ones obtained from the optimization of the objective function in \eqref{pen_favar}.

%%%%%%%%%%%%%%%%%%%%%%%%%%%%%%%%%%%%%%%%%%%%%%%%%%%%%%%%

%\subsubsection{Regularized estimation of observed factor loadings}
%
%As we introduce sparsity in the factor loadings of the latent factors, we also impose regularization on those of the observed factors. We consider three different alternatives: 
%
%
%\begin{enumerate}
%
%\item linear projection $\tilde{\Lambda}^{g} = \left(X - \tilde{\Lambda}^{f} \tilde{F}'\right)G\left(G'G\right)^{-1}$
%
%\item Ridge estimator $\tilde{\Lambda}^{g} = \left(X - \tilde{\Lambda}^{f} \tilde{F}'\right)G\left(G'G - \mu_g I_{r_2}\right)^{-1}$
%
%\item Least Absolute Shrinkage and Selection Operator (LASSO) estimator where the following objective function is minimized $ \underset{\Lambda^g}{\text{min}}  \left(X - \tilde{\Lambda}^f \tilde{F}' - \Lambda^g G' \right)' \left(X - \tilde{\Lambda}^f \tilde{F}' - \Lambda^g G' \right) - \mu_g || \Lambda^g ||_1$.
%	\end{enumerate}
%	
%	The first estimator does not use regularization, the second imposes an $L_2$ norm, whereas the third also introduces a $L_1$ penalty. The last one is closest to the penalization used in the estimation of the factor model.	
%	We try to choose the penalty term $\mu_g$ such that the degree of shrinkage on the observed factor loadings is comparable to that of the latent factor loadings.

%%%%%%%%%%%%%%%%%%%%%%%%%%%%%%%%%%%%%%%%%%%%%%%%%%%%%%%

%%%%%%%%%%%%%%%%%%%%%%%%%%%%%%%%%%%%%%%

\subsection{Estimation of the idiosyncratic component covariance matrix} \label{sec:poet}
As the diagonality assumption on $\Sigma_{e}$ that we impose in the first step of our estimation is restrictive, we introduce a procedure that relaxes the assumption and allows for the estimation of a possibly sparse idiosyncratic innovation covariance matrix. 
More specifically, we re-estimate $\Sigma_{e}$ by means of the principal orthogonal complement thresholding (POET) estimator introduced by \cite{FanLiaoMincheva2013}. 
The POET estimator allows for sparsity in the idiosyncratic error covariance matrix by shrinking the off-diagonal elements of the sample covariance matrix of the residuals obtained from the estimation of our regularized FAVAR model towards zero using the soft-thresholding method. The estimated idiosyncratic error covariance matrix $\tilde{\Sigma}_{e}^{\tau}$ based on the POET method is defined as
\begin{align*}
    \tilde{\Sigma}_{e}^{\tau} = s_{ij}^{\tau}, \quad s_{ij}^{\tau} = \left\{\begin{array}{ll}
        s_{e,ii}, & i = j\\
        \mathcal{S}(s_{e,ij}, \tau), & i \neq j 
    \end{array}\right. ,
\end{align*}
where $s_{e,ij}$ is the $ij$-th element of the sample covariance matrix $S_e = \frac{1}{T} \sum_{t = 1}^{T} (x_{t} - \tilde{\Lambda}\tilde{h}_t)(x_{t} - \tilde{\Lambda}\tilde{h}_t)'$ of the estimated factor model residuals, $\tilde{h}_t = \left( \tilde{f}_t', g_t' \right)'$ are the estimated factors, $\tau = \frac{1}{\sqrt{N}}+\sqrt{\frac{\log N}{T}}$ is a threshold and $\mathcal{S}(\cdot)$ denotes the soft-thresholding operator defined as
\begin{align}\label{soft_t}
    \mathcal{S}(\sigma_{e,ij}, \tau) = \text{sign}(\sigma_{e,ij})(|\sigma_{e,ij}| - \tau)_+  \, .
\end{align}

%%%%%%%%%%%%%%%%%%%%%%%%%%%%%%%%%%%%%%%%

\subsection{Identification restrictions}
\label{sec:ident}

The FAVAR model in \eqref{favar1} and \eqref{favar2} is not identified, as it can be expressed as
%\begin{align*}
%	x_t &= \underbrace{\left[\Lambda^{f} \quad \Lambda^{g}\right] A^{-1}}_{\left[\Lambda^{f*} \quad \Lambda^{g*}\right]} \cdot \underbrace{A \, h_t}_{h^*_t} + e_t \\
%	\underbrace{A h_t}_{h^*_t} &= \underbrace{A\Phi_1 A^{-1}}_{\Phi_1^*}\underbrace{ A h_{t-1}}_{h^*_{t-1}} + \cdots + \underbrace{A\Phi_p A^{-1}}_{\Phi_p^*}\underbrace{ A h_{t-p}}_{h^*_{t-p}} + \underbrace{A \eta_t}_{\eta_t^*},
%\end{align*}
\begin{align*}
    x_t &= \Lambda A^{-1} \cdot A \, h_t + e_t \\ %= \Lambda^* h_t^* + e_t, \\
    A h_t &= A\Phi_1 A^{-1} \cdot A h_{t-1} + \cdots + A\Phi_p A^{-1} \cdot A h_{t-p} + A \eta_t, % = \Phi_1^* h_{t-1}^* + \cdots + \Phi^*_p h_{t-p}^* + \eta^*_t,
\end{align*}
where $h_t = \left(f_t', g_t'\right)'$ and $A$ denotes an invertible $\left[(r_1+ r_2) \times (r_1+ r_2)\right]$ matrix. Moreover, $\Lambda^* = \Lambda A^{-1}$, $h_{t-j}^* = A h_{t-j}$ for all $j=0,\hdots,p$, $\Phi^*_i = A \Phi_i A^{-1}$ for all $i =1, \hdots,p$ and $\eta^*_t = A \eta_t$ are the uniquely identified quantities given a specific choice of $A$.
\cite{bai2016estimation} show that $r_1^2 + r_1 r_2$ restrictions are necessary to uniquely identify the FAVAR model. Hence, we impose restrictions on the factor model and the dynamics of the factors.

The covariance matrix of the VAR innovations $\eta_t^* = \left({\eta_t^{*f}}', {\eta_t^{*g}}'\right)'$ is given by
%\begin{equation*}
%\Omega = \left[\begin{array}{cc}
%\mathbb{E}[\eta^f_t {\eta_t^{f}}'] & \mathbb{E}[\eta^f_t {\eta_t^{g}}'] \\
%\mathbb{E}[\eta^g_t {\eta_t^{f}}'] & \mathbb{E}[\eta^g_t {\eta_t^{g}}']
%\end{array}\right] = \left[\begin{array}{cc}
%\Omega_{ff} & \Omega_{fg} \\
%\Omega_{gf} & \Omega_{gg} \end{array}\right].
%\end{equation*} 
\begin{equation}
    \Omega^* = \left[\begin{array}{cc}
        \E{\eta^{*f}_t {\eta_t^{*f}}'} & \E{\eta^{*f}_t {\eta_t^{*g}}'} \\
        \E{\eta^{*g}_t {\eta_t^{*f}}'} & \E{\eta^{*g}_t {\eta_t^{*g}}'}
    \end{array}\right] = \left[\begin{array}{cc}
        \Omega^*_{ff} & \Omega^*_{fg} \\
        \Omega^*_{gf} & \Omega^*_{gg} \end{array}\right].\label{error_cov}
\end{equation} 
We consider the following sets of identification restrictions on the covariance matrix of the factor innovations and the factor model.\footnote{Similar identification restrictions are common in the factor model literature, see e.g. \cite{bai2013principal} or \cite{bai2014identification}.}
\begin{itemize}
    \item[\textit{IRa}:] $\Omega^*_{fg} = 0$ and $\Sigma_F = I_{r_1}$.
    %	\item[\textit{IRb}:] $\Omega^*_{ff} = I_{r_1}, \Omega^*_{fg} = 0$ and $\frac{1}{N}\Lambda'\Sigma^{-1}_{ee}\Lambda = Q$, where $Q$ is a diagonal matrix with its diagonal elements being distinct and arranged in an descending order. 
    \item[\textit{IRb}:] $\Omega^*_{fg} = 0$ and $\Lambda_1 = I_{r_1}$, where $\Lambda_1$ is the upper $r_1 \times r_1$ submatrix of $\Lambda$.
\end{itemize}

Both identification schemes share the restriction that $\Omega^*_{fg} = 0$. This restriction assures that the observed factors $g_t$ are not rotated. Moreover, as this assumption is imposed on the covariance matrix of the structural factor innovations this implies that the structural shocks associated with the observed factors are contemporaneously uncorrelated with those of the latent ones. 

%Both sets of identifying restrictions provide $r_1^2 + r_1r_2$ restrictions that are necessary to fully identify the FAVAR model. 

The first set of identifying restrictions \textit{IRa} is used for our regularized FAVAR model. As long as a sufficient amount of sparsity is imposed on the factor loadings, the regularized FAVAR model is identified up to a unitary generalized permutation matrix, by the combination of \textit{IRa} and the $L_1$-norm penalties on $\Lambda$. Hence, the regularized FAVAR model is fully identified, by ordering the columns of the estimated factor loadings matrix $\tilde{\Lambda}$ according to their sparsity and by fixing the sign of each column. We choose the sign of the estimated factors such that they align to the corresponding observable time series. Numerically we can verify, whether enough sparsity is imposed on the factor loadings matrix to achieve identification, by evaluating the rank condition for local identification proposed by \cite{Bekker1986}.

However, if there is no sparsity imposed on the factor loadings matrix (i.e.\ $\mu_1 = \mu_2 = 0$) or the local identification condition is not satisfied, we need restrictions in addition to \textit{IRa} to identify the FAVAR model. For this case, we further use the normalization $\frac{1}{N}\Lambda'\Sigma^{-1}_{e}\Lambda = Q$, where the diagonal entries of $Q$ are assumed to be distinct and arranged in a decreasing order. These restrictions are usually imposed in the maximum likelihood framework for the approximate factor model (see e.g.\ \cite{Lawley1971}).\footnote{Alternatively, we could also impose the restriction $\Omega^*_{ff} = I_{r_1}, \Omega^*_{fg} = 0$ and $\frac{1}{N}\Lambda'\Sigma^{-1}_{e}\Lambda = Q$ %, where $Q$ is a diagonal matrix with its diagonal elements being distinct and arranged in an descending order 
    (see e.g. identification restriction \textit{IRa} in \cite{bai2016estimation}). This imposes an orthogonality restriction on the covariance matrix of the dynamic factor innovations. However, this version implies that the factor loadings of the latent factors are rotated.} The second restriction \textit{IRb} is conventionally\footnote{Restriction \textit{IRb} is also employed by \cite{bai2016estimation}.} referred to as named factor identification. The first $r_1$ time series offer a direct identification for the latent factors and are assumed to only load on the specific factors, respectively. However, as a consequence the ordering of the observed time series matters crucially for this set of restrictions. 

Based on the different sets of identification schemes the resulting estimates change.
In scheme \textit{IRa}, the identification restrictions are chosen to have no impact on the rotation of the factor loadings corresponding to the latent factors $\tilde{\Lambda}^f$. Hence, the interpretation of $\hat{\Lambda}^{f}$ is not distorted by any rotation and the factor loadings estimates are unchanged $\hat{\Lambda}^{f} = \tilde{\Lambda}^f$. The factor loadings associated with the observed factors are transformed according to $\hat{\Lambda}^{g} = \tilde{\Lambda}^{g} + \tilde{\Lambda}^{f} \tilde{\Omega}_{fg}\tilde{\Omega}^{-1}_{gg}$. Moreover, we rotate the estimated factors by $\hat{F} = \tilde{F} - \tilde{\Omega}_{fg}\tilde{\Omega}_{gg}^{-1} G$ and the autoregressive parameters by $\hat{\Phi}_i = \tilde{A} \tilde{\Phi}_i \tilde{A}^{-1}$, for $i = 1,\dots,p$, with the rotation matrix
$\tilde{A} = \left[\begin{array}{cc}
I_{r_1} &  -\tilde{\Omega}_{fg}\tilde{\Omega}_{gg}^{-1}	\\
0 & I_{r_2} 
\end{array} \right]$. 
In case of \textit{IRb}, let $\tilde{\Lambda}_1$ be the first $r_1 \times r_1$ block of $\tilde{\Lambda}$. The resulting estimated factor loadings are $\hat{\Lambda}^{f} = \tilde{\Lambda}\tilde{\Lambda}_1^{-1}$ and $\hat{\Lambda}^{g} = \tilde{\Lambda}^{g} + \tilde{\Lambda}^{f} \tilde{\Omega}_{fg}\tilde{\Omega}^{-1}_{gg}$. The estimated factors are given by $\hat{F} = \left(\tilde{F} - \tilde{\Omega}_{fg}\tilde{\Omega}^{-1}_{gg}G\right)\tilde{\Lambda}_1'$ and the autoregressive parameters are obtained as $\hat{\Phi}_i = \tilde{A} \tilde{\Phi}_i \tilde{A}^{-1}$, for $i = 1,\dots,p$, where we use the rotation matrix
$\tilde A = \left[\begin{array}{cc}
\tilde{\Lambda}_1 &  -\tilde{\Lambda}_1\tilde{\Omega}_{fg}\tilde{\Omega}_{gg}^{-1}	\\
0 & I_{r_2} 
\end{array} \right]$.
Note that quantities with a ' $\hat{ }$ ' are estimates of the identified quantities denoted by a '$ ^\ast$'.

%%%%%%%%%%%%%%%%%%%%%%%%%%%%%%%%%%%%%%%%%%%%%

\subsection{Impulse responses}\label{sec:impl}

Starting from the VAR representation of the factors in equation \eqref{favar2}, we obtain the vector moving average representation of the model and the impulse response functions. By rewriting the dynamic equation of the factors as $\left( I_r - \Phi_1 L - \cdots - \Phi_p L^p \right) h_t = \eta_t$ and using the inverted lag polynomial to write the factors $h_t$ as a function of their innovations $\eta_t$, we obtain
\begin{equation*}
    \label{vma}
    h_t = \left( I_r - \Phi_1 L - \cdots - \Phi_p L^p \right)^{-1} \eta_t =\Psi(L) \eta_t,
\end{equation*}
%= \Psi_0 \eta_t + \Psi_1 \eta_{t-1} + \Psi_2 \eta_{t-2} + \cdots
where $\Psi(L) = I_r + \Psi_1 L + \Psi_2 L^2 + \cdots$ and the $\Psi_i$ are moving average coefficients. Hence, $\Psi(L)$ is the conventional inverted lag polynomial and the $\Psi_i$ can be interpreted as the matrices of responses to the innovations $\eta_t$. 
%Hence, $\Psi(L)$ is the conventional inverted lag polynomial and the $\Psi_i$ can be interpreted as the impulse responses of a shock in the innovations $\eta_t$ of the factors on the observed and latent factors $h_t$. 
Furthermore, we accumulate the moving average parameters to analyze the effect of a shock in the factor innovations on the level of the factors, when the underlying observed time series $x_t$ are given in first-differences.

In the factor-augmented VAR model, we can also calculate the responses of the observed time series by employing
\begin{equation*}
    \label{vma_x}
    x_t = \Lambda \Psi(L) \eta_t.
\end{equation*}
In addition, we can investigate the impact of a shock in the factor innovations on all time series included in the factor model. Impulse responses on the individual time series in levels are obtained by pre-multiplying the moving average coefficients $\Psi_i$ by the factor loadings matrix $\Lambda$ and accumulating them, when necessary.

Shocks to the factor innovations $\eta_t$ cannot be interpreted as structural shocks as they are contemporaneously correlated. However, as we identify the factor model and the structural innovations jointly by the schemes described in Section \ref{sec:ident}, we obtain innovations with a block diagonal structure. The structural innovations are given by pre-multiplying the factors and factor innovations by the rotation matrix. The contemporaneous impact matrix is obtained by inverting the rotation matrix and is given by 
\begin{equation}
    \label{A_mat}
    A^{-1} = \left[\begin{array}{cc}
        I_{r_1} &  \Omega_{fg}\Omega_{gg}^{-1}	\\
        0 & I_{r_2} 
    \end{array} \right],
\end{equation}
where $\Omega_{fg}$ and $\Omega_{gg}$ are the corresponding elements of the covariance matrix of the factor innovations $\eta_t$, which is denoted as
$\Omega  
= \left[\begin{array}{cc}
\Omega_{ff} & \Omega_{fg}	\\
\Omega_{gf} & \Omega_{gg} 
\end{array} \right]$. The covariance matrix of the structural innovations $\eta^*_t$ is given by the block diagonal matrix
$
\Omega^*  
= \left[\begin{array}{cc}
\Omega_{f \cdot g} & 0	\\
0 & \Omega_{gg} 
\end{array} \right]
$, where $\Omega_{f \cdot g} = \Omega_{ff} -\Omega_{fg}\Omega_{gg}^{-1}\Omega_{gf}$.
This implies that a structural shock in the innovation of the observed factors are contemporaneously uncorrelated with those of the latent factors. Shocks to the structural innovation of $g_t$ can impact all latent factors contemporaneously. We do not impose zero restrictions on the last $r_2$ columns of the impact matrix $A^{-1}$ in \eqref{A_mat}.
For the named factor identification scheme (\textit{IRb}), we obtain a block diagonal covariance matrix of the following form $\Omega^* = \left[\begin{array}{cc}
\Lambda_1 \Omega_{f \cdot g} \Lambda_1' & 0	\\
0 & \Omega_{gg} 
\end{array} \right]$. 

%Hence, through our identification scheme we identify the innovation associated with the observed factor $g_t$ when there is only one observable variable.  
In case of only one observable factor, i.e.\ $r_2 = 1$, the covariance of the structural innovations $\Omega_{gg}$ is a scalar. Hence, in the FAVAR context the structural innovations of the observed factors are identified by the factor model identification irrespective of the identification scheme used. 

For more than one observable factor, i.e.\ $r_2 > 1$, we can generalize our model. The inverse of the contemporaneous impact matrix can be written as $A = \left[\begin{array}{cc}
A_{11} &  A_{12}	\\
0 & A_{22}
\end{array} \right]$ and the structural innovations of the observed factors are given by $\eta^*_{g,t} = A_{22} \eta_{g,t}$. Hence, their covariance matrix is expressed as
\begin{equation}
    \Omega^*_{gg} = \mathbb{E} \left[ \eta^*_{g,t} {\eta^*_{g,t}}' \right] = A_{22} \Omega_{gg} A_{22}'.
\end{equation}
In this setting, we can impose restrictions on $A_{22}$ to achieve identification of the structural innovations with respect to the observable factors $g_t$. For example, a recursive structure within the observable factors could be used.

Moreover, we derive the impact matrix for the observed data $x_t$. It describes the contemporaneous impacts of an exogenous structural shock on the observed time series. Starting from the implied structural moving average representation for the observed data in lag operator notation, $x_t = \Lambda \Psi(L) A^{-1}  \eta_t^*$, the model in period $t$ reads
$x_t = \Lambda A^{-1} \eta_t^* + \Lambda \Psi_1 A^{-1} \eta^*_{t-1} + \Lambda \Psi_2 A^{-1} \eta^*_{t-2} + \cdots$.
The impact matrix is denoted as $B_0 = \Lambda^* = \Lambda A^{-1}$. Using the estimated quantities, the impact matrix is given by
\begin{equation}
    \hat{B}_0 = \left[ \tilde{\Lambda}^f   \quad \tilde{\Lambda}^f \tilde{\Omega}_{fg} \tilde{\Omega}_{gg}^{-1} + \tilde{\Lambda}^g \right] = \hat{\Lambda}.
\end{equation}
Hence, the sparsity in $\hat{\Lambda}$ provides structure to the contemporaneous impact matrix. The zeros in the factor loadings yield timing restrictions which are data-driven but can be explained economically. Moreover, the strength and direction of the effect is guided by the regularized factor loadings.

Inference on the impulse response functions of the regularized FAVAR model is conducted using a residual-bootstrap based on the dynamic factor equation \eqref{favar2}. As shown in the following section, all estimators based on the regularized FAVAR are consistent. Hence, the estimated factors are assumed to be known in the VAR bootstrap. This approach is in the spirit of the factor augmented regression literature as in \cite{bai2006confidence} and \cite{gonccalves2014bootstrapping}.

\section{Asymptotic properties}\label{sec:theo}
%\section{Large Sample Properties and Risk Bounds}}
%\subsection{Consistency of the first step estimators of the regularized FAVAR Model}
In this section we establish the consistency of the regularized FAVAR model estimator introduced in equation \eqref{pen_favar}. The proofs of the stated theorems are extensions of those in \cite{Daniele2018} and \cite{bai2016estimation}. Subsequently, we state the necessary assumptions.

\begin{assumption}[Data generating process]\label{data_assum}
\leavevmode
\begin{enumerate}[label=(\roman*)]
\item $\left\{e_t, h_t\right\}_{t\geq 1}$ are strictly stationary, for $h_t = (f_t',g_t')'$. In addition, $\frac{1}{T} \sum_{t = 1}^T f_tg_t' = 0$ and $\E{e_{it}} = \E{e_{it}h_{kt}} = 0$, for all $i \leq N$, $k \leq r$ and $t \leq T$.\label{as11}
\item \label{lam_ass} There exists a constant $c > 0$ such that, for all $N$,
\begin{align*}
c^{-1} < \pi_{\min}\left(\frac{\Lambda'\Lambda}{N^{\beta}}\right) \leq \pi_{\max}\left(\frac{\Lambda'\Lambda}{N^{\beta}}\right) < c, \text{ where } 1/2\leq \beta \leq 1.\footnotemark
\end{align*}
\item There exists $r_1, r_2 > 0$ and $b_1, b_2 > 0$, such that for any $s > 0$, $i \leq N$ and $k \leq r$,\label{ass_exp}
\footnotetext{The lower limit 1/2 for $\beta$ is necessary for a consistent estimation of the factors. See \lemref{lem_est_factor} in Section \ref{subsubsec:consistency_lam} in the Appendix.}
\begin{align*}
\Prob{|e_{it}| > s} \leq \exp(-(s/b_1)^{r_1}), \quad \Prob{|h_{kt}| > s} \leq \exp(-(s/b_2)^{r_2})
\end{align*}
\item Define the mixing coefficient:
\begin{align*}
\alpha(T) = \sup_{A\in \mathcal{F}_{-\infty}^0, B\in \mathcal{F}_{T}^{\infty}} \left|\Prob{A}\Prob{B} - \Prob{AB}\right|,
\end{align*}
where $\mathcal{F}_{-\infty}^0$ and $\mathcal{F}_{T}^{\infty}$ denote the $\sigma$-algebras generated by $\{(h_t, e_t): -\infty \leq t \leq 0\}$ and $\{(h_t, e_t): T \leq t \leq \infty\}$\\
Strong mixing: There exist $r_3 > 0$ and $C > 0$ satisfying: for all $T \in \mathcal{Z}^+$,
% $T \in \mathbb{Z}^+$, % % 
\label{ass_sm}
\begin{align*}
\alpha(T) \leq \exp(-CT^{r_3})
\end{align*}
\item There exist constants $c_1, c_2, c_3 > 0$ such that $c_2 \leq \pi_{\min}\left(\Sigma_{e0}\right) \leq \pi_{\max}\left(\Sigma_{e0}\right) \leq c_1$ and $\max_{i \leq N} \max_{k \leq r} |\lambda_{ik0}| < c_3$.\label{assum_sig}

\item The factors $h_t = (f_t', g_t')'$ follow the VAR representation in \eqref{favar2}, where $\eta_t$ is an iid process with $\E{\eta_{t}} = 0$ and $\E{\eta_{t}\eta_t'} = \Omega$ and all the roots of the polynomial $\Phi(L) = (I_r - \Phi_1L - \cdots - \Phi_p L^p) = 0$ are outside the unit circle. Moreover, $\eta_s$ and $e_{it}$ are independent for all $i,t,s$. \label{var_assm}
\end{enumerate}
\end{assumption}
The assumptions in \ref{data_assum} impose regularity conditions on the data generating process and are similar to those imposed by \cite{Bai2016b}. Condition \textit{\ref{as11}} induces strict stationarity for $e_t$ and $h_t$ and requires that both terms are uncorrelated. Furthermore, it implies that the observed and latent factors are orthogonal, which simplifies our technicalities. Condition \textit{\ref{lam_ass}} relaxes the pervasiveness assumption commonly imposed in the approximate factor model literature and allows for weak factors. Condition \textit{\ref{ass_exp}} requires exponential-type tails, which allows to use large deviation theory for $\frac{1}{T} \sum_{t = 1}^{T} e_{it} e_{jt} - \sigma_{e, ij}$ and $\frac{1}{T} \sum_{t = 1}^{T} h_{jt} e_{it}$.
Condition \textit{\ref{ass_sm}} imposes a strong mixing condition to allow for weak serial dependence. Further, Condition \textit{\ref{assum_sig}} implies bounded eigenvalues of the idiosyncratic error covariance matrix, which is a common identifying assumption in the factor model framework. Condition \textit{\ref{var_assm}} imposes common VAR assumptions on the factor processes.

To control the sparsity in both $\Lambda$ and $\Sigma_{e}$, we impose the following sparsity assumptions
\begin{assumption}[Sparsity]\label{sparsity_assum}
\leavevmode
\begin{enumerate}[label=(\roman*)]
	\item $L_N = \sum_{k=1}^r \sum_{i = 1}^{N}\1\left\{\lambda_{ik} \neq 0 \right\} = \mathcal{O}\left(N\right),$ \label{spars1}
	\item $S_N = \max_{i \leq N} \sum_{j = 1}^{N} \1\left\{\sigma_{e,ij} \neq 0 \right\}$\label{spars2}, $S_N^2 d_T = o_p(1)$ and $S_N \max(\mu_1, \mu_2) = o(1)$,
\end{enumerate}
where $\1\{\cdot\}$ defines an indicator function that is equal to one if the boolean argument in braces is true.
\end{assumption}
Condition \textit{\ref{spars1}} defines the quantity $L_N$ that represents the number of non-zero elements in the factor loadings matrix $\Lambda$. As the number of factors $r$ are assumed to be constant,\textit{ \ref{spars1}} upper bounds the number of non-zero elements in each column of $\Lambda$ by $N$. Condition \textit{\ref{spars2}} specifies $S_N$ that corresponds to the maximum number of non-zero elements in each row of $\Sigma_{e}$, following the definition in \cite{FanLiaoMincheva2013}.
\begin{theorem}[Consistency of the estimators of the regularized FAVAR model before rotation]\label{theo_consistency_lam}
\leavevmode\\
Under Assumptions \ref{data_assum} and \ref{sparsity_assum} the regularized FAVAR model in \eqref{pen_favar} satisfies for $T$ and $N \to \infty$, the following properties
\begin{align*}
\frac{1}{N} \frob{\tilde{\Lambda} - \Lambda}^2 &= \mathcal{O}_p\left(\max(\mu_1,\mu_2) + \sqrt{\frac{N (d_T + \max(\mu_1,\mu_2))}{L_N}}\right),\\
\frac{1}{N} \frob{\tilde{\Phi}_{e} -\Phi_{e}}^2 &= \mathcal{O}_p\left(\frac{\log N^{\beta}}{N} + \frac{\log N}{T}\right),
\end{align*}
where $d_T = \frac{\log N^{\beta}}{N} + \frac{1}{N^{\beta}} \frac{\log N}{T}$, for $1/2 \leq \beta \leq 1$.\\
Hence, for $\log(N) = o(T)$ and the regularization parameters $\mu_1 = o(1)$, $\mu_2 = o(1)$, we obtain
\begin{align*}
\frac{1}{N} \frob{\tilde{\Lambda} - \Lambda}^2 &= o_p(1), \qquad \frac{1}{N} \frob{\tilde{\Phi}_{e} -\Phi_{e}}^2 = o_p(1).
\end{align*}
Furthermore, for all $t \leq T$:
\begin{align*}
\spec{\tilde{f}_t - f_t} = o_p(1)
\end{align*}
For the second step estimator of the idiosyncratic error covariance matrix, specified in Section \ref{sec:poet}, we get
\begin{align*}
\spec{\tilde{\Sigma}_e^{\tau} - \Sigma_e} = \mathcal{O}_p\left(S_N\sqrt{\max(\mu_1,\mu_2)^2 + \frac{N (d_T + \max(\mu_1,\mu_2))}{L_N}}\right).
\end{align*}
Moreover, for the autoregressive matrices in the dynamic equation \eqref{favar2} we have
\begin{align*}
	\frob{\tilde{\Phi}_i - \Phi_i} = \left(\sum_{t = \bar{p}}^T \eta_t \psi_t'\right) \left(\frac{1}{\bar{T}}\sum_{t = \bar{p}}^T \psi_t \psi_t'\right)^{-1}\left(\iota_i \otimes I_r\right) + o_p(1),
\end{align*}
where $\iota_i$ is the $i$-th column of the $r \times r$ identity matrix.
\end{theorem}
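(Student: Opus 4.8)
The plan is to treat the loadings bound as the technical core from which every other claim of Theorem~\ref{theo_consistency_lam} follows, mirroring the strategy of \cite{bai2016estimation} and \cite{Daniele2018} but adapted to the jointly $L_1$-penalized objective \eqref{pen_favar} and to the weak-factor regime $1/2 \le \beta \le 1$ of Assumption~\ref{data_assum}\ref{lam_ass}. First I would assemble the elementwise concentration tools implied by Assumptions~\ref{data_assum}\ref{ass_exp} and \ref{data_assum}\ref{ass_sm}: exponential tails together with strong mixing yield, via Bernstein-type large-deviation bounds, rates of order $\sqrt{\log N / T}$ for $\frac{1}{T}\sum_t e_{it}e_{jt} - \sigma_{e,ij}$ and for $\frac{1}{T}\sum_t h_{kt}e_{it}$. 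These control the relevant deviation of $S_x$ from $\Sigma = \Lambda\Sigma_H\Lambda' + \Sigma_e$. Then I would run the first-order/basic-inequality argument for \eqref{pen_favar}: since $\tilde\Lambda$ minimizes the penalized objective, expanding $\mathcal{L}$ to second order around $(\Lambda,\Phi_e)$ produces a quadratic part that bounds $\frac{1}{N}\frob{\tilde\Lambda - \Lambda}^2$ from below (using the eigenvalue bounds in Assumptions~\ref{data_assum}\ref{lam_ass} and \ref{data_assum}\ref{assum_sig}), while the linear/cross part and the two $L_1$ penalties are bounded above through the concentration rates and the loadings sparsity $L_N = \mathcal{O}(N)$ from Assumption~\ref{sparsity_assum}\ref{spars1}. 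Solving the inequality gives the stated rate in $d_T$, $\max(\mu_1,\mu_2)$ and $L_N$; specializing $\log N = o(T)$ and $\mu_1,\mu_2 = o(1)$ collapses it to $o_p(1)$, and plugging the loadings rate into the diagonal-variance estimates yields the $\frac{1}{N}\frob{\tilde\Phi_e - \Phi_e}^2$ bound at rate $\frac{\log N^\beta}{N} + \frac{\log N}{T}$.

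The factor and covariance claims are consequences. For $\spec{\tilde f_t - f_t} = o_p(1)$ I would substitute the consistent $\tilde\Lambda^f,\tilde\Phi_e$ into the GLS formula \eqref{gls_factors}, expand $x_t = \Lambda^f f_t + \Lambda^g g_t + e_t$, and show that the loading and covariance errors together with the idiosyncratic term $\tilde\Lambda^{f\prime}\tilde\Phi_e^{-1}e_t$ are negligible after normalization by $(\tilde\Lambda^{f\prime}\tilde\Phi_e^{-1}\tilde\Lambda^f)^{-1}$; the inverse stays bounded precisely because $\beta \ge 1/2$, which is exactly where the stated lower limit on $\beta$ enters (the factor-estimation lemma referenced in the footnote). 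For $\tilde\Sigma_e^\tau$ I would invoke the soft-thresholding bound of \cite{FanLiaoMincheva2013}, whereby the spectral-norm error is of order $S_N$ times the elementwise error of the residual covariance; the latter inherits the loadings/factor rate, reproducing the displayed product of $S_N$ with $\sqrt{\max(\mu_1,\mu_2)^2 + N(d_T + \max(\mu_1,\mu_2))/L_N}$.

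For the autoregressive matrices I would write the estimated factors $\tilde h_t = (\tilde f_t', g_t')'$, stack the lagged regressors $\psi_t = (\tilde h_{t-1}',\dots,\tilde h_{t-p}')'$, and note that $\tilde\Phi = [\tilde\Phi_1,\dots,\tilde\Phi_p]$ is the least-squares coefficient. Subtracting the truth and using $h_t = \Phi\psi_t + \eta_t$ gives $\tilde\Phi - \Phi = \bigl(\sum_t \eta_t\psi_t'\bigr)\bigl(\tfrac{1}{\bar T}\sum_t\psi_t\psi_t'\bigr)^{-1}$ plus a generated-regressor remainder collecting every term in $\tilde f_t - f_t$; post-multiplying by $(\iota_i\otimes I_r)$ selects the $i$-th lag block and yields the displayed representation. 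Here the VAR-stability in Assumption~\ref{data_assum}\ref{var_assm} guarantees, via stationarity, that $\tfrac{1}{\bar T}\sum_t\psi_t\psi_t'$ converges to a positive-definite limit, so the inverse is well behaved.

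The hard part will be two-fold. The first genuine obstacle is the loadings bound itself: under weak factors the quadratic curvature degrades with $\beta$, so the lower eigenvalue bound and the interaction between the $L_1$ penalty subgradient and $L_N$ must be tracked with the correct $N^\beta$ scaling to recover $d_T$ rather than a coarser rate. The second, specific to the final display, is showing that the generated-regressor remainder in the autoregressive step is $o_p(1)$: because the factor consistency is only a pointwise $o_p(1)$ statement, I must control averages such as $\tfrac{1}{\bar T}\sum_t(\tilde\psi_t - \psi_t)\psi_t'$ and $\tfrac{1}{\bar T}\sum_t\eta_t(\tilde\psi_t - \psi_t)'$ uniformly, combining the rate on $\tfrac{1}{N}\frob{\tilde\Lambda-\Lambda}^2$ with mixing and stationarity to dominate the accumulated factor errors. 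Establishing this uniform control, rather than the OLS algebra, is the crux of the last claim.
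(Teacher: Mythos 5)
Your proposal is correct and follows essentially the same route as the paper's proof: concentration bounds of Fan--Liao--Mincheva type under exponential tails and strong mixing, a basic-inequality argument on the penalized likelihood whose quadratic part (scaled by $L_N/N$, reflecting the weak-factor regime) is solved against the penalty and stochastic terms to get the loadings rate, GLS substitution with $\beta \ge 1/2$ for factor consistency, the soft-thresholding theorem of \cite{FanLiaoMincheva2013} for $\tilde{\Sigma}_e^{\tau}$, and the OLS-with-generated-regressors decomposition (controlling averages like $\frac{1}{\bar{T}}\sum_t (\tilde{\psi}_t - \psi_t)\psi_t'$) for the autoregressive matrices. The two difficulties you single out — the $N^{\beta}$-scaled curvature of the quadratic form and the uniform control of the generated-regressor remainder — are exactly the ones the paper resolves in its Lemmas \ref{lem_est_load} and \ref{lem_phi}.
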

The proof of Theorem \ref{theo_consistency_lam} is given in Section \ref{subsubsec:consistency_lam} in the Appendix. 
Under the imposed regularity conditions Theorem \ref{theo_consistency_lam} establishes the average consistency in the Frobenius norm of the estimators of the factor loadings matrix and idiosyncratic error covariance matrix based on our regularized FAVAR model before rotation. More specifically, we can see that $\Lambda$ and $\Phi$ are estimated consistently, even though we impose the strict factor model structure on $\Sigma_{e}$ in the first step of our estimation procedure. Consequently, the latent factors $f_t$ estimated based on GLS are consistent as long as $1/2 \leq \beta \leq 1$. Intuitively, the lower limit on $\beta$ ensures that the factors are not too weak such that there is still a clear distinction between the common and idiosyncratic component. Furthermore, the second step estimator for the idiosyncratic error covariance matrix introduced in Section \ref{sec:poet} is consistent under the spectral norm as long as $S_N^2 d_T = o_p(1)$ and $S_N \max(\mu_1, \mu_2) = o(1)$. Finally, the autoregressive parameter matrices in the dynamic VAR equation in \eqref{favar2} are consistently estimated under the Frobenius norm as well.

The following theorem summarizes the consistency results for the complete set of estimators based on our regularized FAVAR model after rotating them with matrix $\tilde{A} = \left[\begin{array}{cc}
I_{r_1} &  -\tilde{\Omega}_{fg}\tilde{\Omega}_{gg}^{-1}	\\
0 & I_{r_2} 
\end{array} \right]$.
\begin{theorem}[Consistency of the rotated estimators based on the regularized FAVAR model]\label{theo_consistency_lam_rotated}
\leavevmode
Under Assumptions \ref{data_assum} and \ref{sparsity_assum} the regularized FAVAR model in \eqref{pen_favar} satisfies for $T$ and $N \to \infty$, the following 
\begin{align*}
\max_{i \leq N}\spec{\hat{\lambda}_i^g - \lambda_i^{*g}} &= \mathcal{O}_p\left(\sqrt{\mu_1} + \sqrt{\mu_2} + \sqrt{d_T}\right) \\
\spec{\hat{f}_t - f_t^*} &= \mathcal{O}_p\left(N^{-\beta/2}\right) + \mathcal{O}_p\left(\frac{\mu_1 \sqrt{L_N}}{N^{\beta}}+\sqrt{\frac{N(d_T + \mu_2)}{N^{2\beta}}}\right),  \\
\begin{split}
\frob{\hat{\Phi}_i - \Phi_i^*} &= \left(\sum_{t = \bar{p}}^T \eta_t \psi_t'\right) \left(\frac{1}{\bar{T}}\sum_{t = \bar{p}}^T \psi_t \psi_t'\right)^{-1} \left(\iota_i \otimes I_r\right) \\
&+ \mathcal{O}_p\left(\frac{\mu_1 \sqrt{L_N}}{N^{\beta}}+\sqrt{\frac{N(d_T + \mu_2)}{N^{2\beta}}}\right),
\end{split}
\end{align*}
where $\iota_i$ is the $i$-th column of the $r \times r$ identity matrix and $1/2\leq \beta \leq 1$.
\end{theorem}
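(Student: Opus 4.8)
The plan is to treat Theorem \ref{theo_consistency_lam} as established and to propagate the rates it provides through the explicit rotation maps of Section \ref{sec:ident}. Under scheme \textit{IRa} every rotated object is a closed-form function of the unrotated estimators and of the estimated factor-innovation covariance $\tilde{\Omega}$: the loadings obey $\hat{\Lambda}^g = \tilde{\Lambda}^g + \tilde{\Lambda}^f\tilde{\Omega}_{fg}\tilde{\Omega}_{gg}^{-1}$, the factors $\hat{F} = \tilde{F} - \tilde{\Omega}_{fg}\tilde{\Omega}_{gg}^{-1}G$, and the dynamics $\hat{\Phi}_i = \tilde{A}\tilde{\Phi}_i\tilde{A}^{-1}$, whereas the corresponding starred population targets use the same formulas with $\Omega_{fg}\Omega_{gg}^{-1}$ and $A$ in place of the tilded quantities. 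Consequently the single object that must be controlled before anything else is the rotation block $\tilde{\Omega}_{fg}\tilde{\Omega}_{gg}^{-1}$, and the entire proof reduces to (i) a preparatory lemma bounding $\spec{\tilde{\Omega}_{fg}\tilde{\Omega}_{gg}^{-1} - \Omega_{fg}\Omega_{gg}^{-1}}$, followed by (ii) three perturbation arguments, one per claim, that combine this bound with the rates of Theorem \ref{theo_consistency_lam} through the triangle inequality and submultiplicativity of the spectral and Frobenius norms.

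For step (i), I would start from the fitted VAR residuals $\tilde{\eta}_t$, formed from the estimated factors $\tilde{h}_t = (\tilde{f}_t',g_t')'$ and the coefficient estimates $\tilde{\Phi}_i$, and decompose $\tilde{\eta}_t - \eta_t$ into a factor-estimation piece, controlled by $\spec{\tilde{f}_t - f_t}$, and a coefficient-estimation piece, controlled by $\frob{\tilde{\Phi}_i - \Phi_i}$, both supplied by Theorem \ref{theo_consistency_lam}. Writing $\tilde{\Omega} = \frac{1}{T}\sum_t \tilde{\eta}_t\tilde{\eta}_t'$ and expanding the products, the cross term $\frac{1}{T}\sum_t (\tilde{\eta}_t - \eta_t)\eta_t'$ and the quadratic term $\frac{1}{T}\sum_t (\tilde{\eta}_t - \eta_t)(\tilde{\eta}_t-\eta_t)'$ are handled using the exponential-tail and strong-mixing conditions of Assumption \ref{data_assum}\ref{ass_exp}--\ref{ass_sm}, which deliver $\tilde{\Omega} - \Omega = \mathcal{O}_p(\sqrt{\mu_1}+\sqrt{\mu_2}+\sqrt{d_T})$ blockwise. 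Since Assumption \ref{data_assum}\ref{var_assm} guarantees that $\Omega_{gg}$ is nonsingular with bounded inverse, the resolvent expansion $\tilde{\Omega}_{gg}^{-1} - \Omega_{gg}^{-1} = -\Omega_{gg}^{-1}(\tilde{\Omega}_{gg}-\Omega_{gg})\tilde{\Omega}_{gg}^{-1}$ then transfers this rate to the product $\tilde{\Omega}_{fg}\tilde{\Omega}_{gg}^{-1}$.

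Given this lemma, the loading and factor claims follow by direct expansion. For the observed loadings I write, rowwise, $\hat{\lambda}_i^g - \lambda_i^{*g} = (\tilde{\lambda}_i^g - \lambda_i^g) + (\tilde{\lambda}_i^f - \lambda_i^f)\tilde{\Omega}_{fg}\tilde{\Omega}_{gg}^{-1} + \lambda_i^f(\tilde{\Omega}_{fg}\tilde{\Omega}_{gg}^{-1} - \Omega_{fg}\Omega_{gg}^{-1})$; the first two summands are bounded uniformly in $i$ by the per-row loading estimate that underlies Theorem \ref{theo_consistency_lam} together with $\max_{i,k}|\lambda_{ik}| < c_3$ from Assumption \ref{data_assum}\ref{assum_sig}, and the third by the lemma, which yields the stated $\mathcal{O}_p(\sqrt{\mu_1}+\sqrt{\mu_2}+\sqrt{d_T})$. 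For the latent factors I use $\hat{f}_t - f_t^* = (\tilde{f}_t - f_t) - (\tilde{\Omega}_{fg}\tilde{\Omega}_{gg}^{-1} - \Omega_{fg}\Omega_{gg}^{-1})g_t$, where the first term contributes the $\mathcal{O}_p(N^{-\beta/2})$ sharpening of the $o_p(1)$ factor rate and the second multiplies the lemma's bound by the bounded $g_t$, producing the remaining $\mathcal{O}_p$ term.

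The autoregressive claim is the most delicate and is where I expect the main obstacle. Writing $\hat{\Phi}_i - \Phi_i^* = \tilde{A}\tilde{\Phi}_i\tilde{A}^{-1} - A\Phi_i A^{-1}$ and telescoping through $\tilde{A} - A$ and $\tilde{\Phi}_i - \Phi_i$, the leading term reproduces verbatim the Gaussian-type expansion $\left(\sum_{t=\bar{p}}^T \eta_t\psi_t'\right)\left(\frac{1}{\bar{T}}\sum_{t=\bar{p}}^T\psi_t\psi_t'\right)^{-1}(\iota_i\otimes I_r)$ already isolated in Theorem \ref{theo_consistency_lam}, while every remainder is a product of $\tilde{A}-A = \mathcal{O}_p(\sqrt{\mu_1}+\sqrt{\mu_2}+\sqrt{d_T})$ with bounded factors, collapsing to the quoted $\mathcal{O}_p$ correction. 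The genuine difficulty throughout is step (i): because $\tilde{\eta}_t$ inherits simultaneously the factor-estimation error (which scales with $N^{-\beta/2}$ and the loading penalties) and the VAR-estimation error, obtaining the clean additive rate $\sqrt{\mu_1}+\sqrt{\mu_2}+\sqrt{d_T}$ — rather than a looser bound carrying cross-products or the average-Frobenius term $\sqrt{N(d_T+\max(\mu_1,\mu_2))/L_N}$ — requires uniform-in-$t$ control of $\tilde{f}_t - f_t$ and cancellation of the leading score term, a part of the argument that must be carried out with the mixing and tail assumptions rather than by norm bookkeeping alone.
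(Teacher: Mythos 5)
Your overall architecture---first control the rotation block $\tilde{\Omega}_{fg}\tilde{\Omega}_{gg}^{-1}$, then propagate it through the three exact rotation identities for $\hat{\lambda}_i^g$, $\hat{f}_t$ and $\hat{\Phi}_i$---is exactly the paper's, and your three expansion formulas coincide with the ones used there. The gap is quantitative and sits in your step (i). You target $\tilde{\Omega}-\Omega = \mathcal{O}_p\left(\sqrt{\mu_1}+\sqrt{\mu_2}+\sqrt{d_T}\right)$, but that rate is only strong enough for the first (loadings) claim, where the rotation error is dominated by the unrotated loading errors anyway. The second and third claims require the rotation correction to be $\mathcal{O}_p\left(\frac{\mu_1\sqrt{L_N}}{N^{\beta}}+\sqrt{\frac{N(d_T+\mu_2)}{N^{2\beta}}}\right)$, and this is of strictly smaller order than your target: with $L_N = \mathcal{O}(N)$ and $\beta \geq 1/2$ the first piece is at most of order $\mu_1 = o\left(\sqrt{\mu_1}\right)$, and the second is $N^{1/2-\beta}\sqrt{d_T+\mu_2}$. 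Propagating your bound therefore yields only $\spec{\hat{f}_t - f_t^*} = \mathcal{O}_p\left(N^{-\beta/2}\right) + \mathcal{O}_p\left(\sqrt{\mu_1}+\sqrt{\mu_2}+\sqrt{d_T}\right)$ and the analogous weakened statement for $\hat{\Phi}_i$, which is not the theorem. Your write-up is in fact internally inconsistent on this point: you assert that remainders which are products of $\tilde{A}-A = \mathcal{O}_p\left(\sqrt{\mu_1}+\sqrt{\mu_2}+\sqrt{d_T}\right)$ with bounded terms ``collapse to the quoted $\mathcal{O}_p$ correction,'' but they do not, since the quoted correction is smaller.

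The paper obtains the sharp rate without any new probabilistic work, which is also where your closing assessment inverts the actual difficulty. Because $\tilde{\Phi}$ is the OLS estimator, $\tilde{\Omega}$ has the exact projection form $\frac{1}{\bar{T}}\tilde{H}'M_{\tilde{\Phi}}\tilde{H}$ in equation \eqref{omeg_est}; Lemma \ref{lem_hmh} compares this with $\frac{1}{\bar{T}}H'M_{\Phi}H$ by combining the sample-moment bounds of Lemma \ref{lem_phi} (already established in the proof of Theorem \ref{theo_consistency_lam}, under the side condition $\frac{N\log N}{T}=o(1)$) with the matrix perturbation identity of Lemma \ref{mat_al}; Lemmas \ref{lem_omega} and \ref{lem_a21} then deliver $\frob{\tilde{A}_{21}-A_{21}} = \mathcal{O}_p\left(\frac{\sqrt{L_N}}{N^{\beta}}\left(\mu_1+\sqrt{\frac{N(d_T+\mu_2)}{L_N}}\right)\right)$, which is precisely the correction appearing in the theorem. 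So no decomposition of $\tilde{\eta}_t-\eta_t$, no uniform-in-$t$ control of $\tilde{f}_t-f_t$ beyond Lemma \ref{lem_est_factor}, and no score cancellation via mixing and tail bounds is needed: the ``norm bookkeeping'' you dismiss is exactly what produces the sharper rate (the mixing and exponential-tail assumptions enter only indirectly, through the ingredients of Lemma \ref{lem_phi}). To repair your proposal it suffices to replace your step (i) by this projection-form argument; your step (ii) expansions then go through essentially verbatim.
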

The proof of Theorem \ref{theo_consistency_lam_rotated} is given in Section \ref{subsubsec:consistency_lam} in the Appendix. Theorem \ref{theo_consistency_lam_rotated} shows that the estimators of the latent factors, the factor loadings of the observed factors and the coefficients of the dynamic equation in \eqref{favar2} are consistently estimated under the Euclidean and Frobenius norm after rotating them based on matrix $\tilde{A}$. Note that an analysis of the estimated factor loadings of the latent factors $\tilde{\lambda}^f_i$ is redundant, as they are unaffected by the rotation.
 
%\newpage%
%
\section{Implementation of the regularized FAVAR estimator}\label{sec:implem}
\subsection{Majorize-minimize expectation maximization algorithm }
For the implementation of the regularized FAVAR model, we employ the majorize-minimize expectation maximization (EM) algorithm by \cite{BienTibshiraniothers2011}. The idea of this algorithm is to replace the optimization of the nonconvex objective function in equation \eqref{neg_log_lik} by a sequence of convex problems that are numerically easy to solve by algorithms for convex optimization. In that respect, we majorize the log-likelihood function in \eqref{neg_log_lik} by the tangent plane of the concave part $\log \left| \tilde{\Sigma} \right|$, which leads to the following expression
\begin{align}
\mathcal{L}^*_m =& \; \log\, \left|  \tilde{\Sigma}_m \right| + \traces{ \left( 2 \tilde{\Sigma}_{H,m} \tilde{\Lambda}_m' \right)\tilde{\Sigma}_m^{-1} \left( \Lambda - \tilde{\Lambda}_m \right)} 
+ \traces{ S_x \Sigma^{-1} }, \label{major_ll}
\end{align}
where $\tilde{\Sigma}_m = \tilde{\Lambda}_m \tilde{\Sigma}_{H,m} \tilde{\Lambda}_m' + \hat{\Phi}_{e,m}$ and $\Sigma = \Lambda \tilde{\Sigma}_{H,m} \Lambda' + \hat{\Phi}_{e,m}$ with $\tilde{\Sigma}_{H,m} = \left[\begin{matrix} I_{r_1} & 0 \\ 0 & \tilde{\Sigma}_{G,m} \\ \end{matrix} \right]$. %$\hat{B}_m = \left[\begin{matrix} 0 & \frac{\hat{F}'G}{T} \\ \frac{G'\hat{F}}{T} & 0 \\ \end{matrix} \right]$, 
$\tilde{F}$ denotes a initial estimate for the latent factors that can be obtained by unpenalized maximum likelihood and the subscript $m$ is the $m$-th step in the iterative procedure. As we use standardized time series, the covariance matrix of the observed factors reduces to an identity matrix and $\tilde{\Sigma}_{H,m} = I_{r_1+r_2}$. In the following, we augment the majorized log-likelihood function in \eqref{major_ll} by $L_1$-penalty terms for the factor loadings $\Lambda^f$ and $\Lambda^g$, which leads to the following optimization problem for our regularized FAVAR model
\begin{align}
	\begin{split}\label{opt_favar}
	\underset{\{\Lambda \} }{\text{min}} \, &\log\, \left| \tilde{\Sigma}_m \right| + \traces{ \left( 2 \tilde{\Lambda}_m'\right)\tilde{\Sigma}_m^{-1} \left( \Lambda - \tilde{\Lambda}_m \right)} + \traces{ S_x \Sigma^{-1} } \\
	&\quad  + \mu_1 \sum_{i=1}^N \sum_{k=1}^{r_1} \left| \lambda^f_{ik} \right| + \mu_2 \sum_{i=1}^N \sum_{l=1}^{r_2} \left| \lambda^g_{il} \right|.
	\end{split}
\end{align}
As the optimization problem in equation \eqref{opt_favar} has the appealing property of being entirely convex it can be easily solved by a convex optimizer. In this respect, we rely on an efficient projected gradient descent algorithm and optimize the following minimization problem
\begin{align}
\underset{\{\Lambda \} }{\text{min}} \, \frac{1}{2c} \sum_{i=1}^N \sum_{j=1}^r \left( \lambda_{ij} - \tilde{\lambda}_{ij,m} + c \cdot \tilde{D}_{ij,m} \right)^2 + \mu_1 \sum_{i=1}^N \sum_{k=1}^{r_1} \left| \lambda^f_{ik} \right| + \mu_2 \sum_{i=1}^N \sum_{l=1}^{r_2} \left| \lambda^g_{il} \right|, \label{grad_dec}
\end{align}
where $c$ determines the depth of the projection in the gradient decent algorithm\footnote{In all our applications we set $c = 0.01$.} and
\begin{align*}
	\tilde{D}_m = \left[ \tilde{\Sigma}_m^{-1} - \tilde{\Sigma}_m ^{-1} S_x \tilde{\Sigma}_m ^{-1}  \right] \left( 2 \tilde{\Lambda}_m'  \right),
\end{align*}
which corresponds to the first-derivative of $\mathcal{L}^*_m$ with respect to each element of $\Lambda$. The minimization of the objective function \eqref{grad_dec} leads to the following first order conditions with respect to each element of the quantities $\Lambda^f$ and $\Lambda^g$
\begin{align*}
\frac{\partial}{\partial \lambda_{ik}^f} &= \sum_{i=1}^N  \sum_{k=1}^{r_1} \left( \lambda_{ik} - \tilde{\lambda}_{ik} + c \cdot \tilde{D}_{ik,m} \right) + c \cdot \mu_1 \sum_{k=1}^{r_1} \sum_{i=1}^N \nu_{ik}^f \overset{!}{=} 0, \\
\frac{\partial}{\partial \lambda_{il}^g} &= \sum_{i=1}^N \sum_{l=r_1+1}^{r} \left( \lambda_{il} - \tilde{\lambda}_{il} + c \cdot \tilde{D}_{il,m} \right) + c \cdot \mu_2 \sum_{l=r_1+1}^{r} \sum_{i=1}^N \nu_{il}^g \overset{!}{=} 0,
\end{align*}
where $\nu_{ik}^f$ and $\nu_{il}^g$ are the subgradients of $\left| \lambda^f_{ik} \right|$ and $\left| \lambda^g_{il} \right|$, respectively. Hence, solving for a specific $\tilde{\lambda}^f_{ik}$ or $\tilde{\lambda}^g_{ik}$ leads to the following updating formulas for the factor loadings estimates
\begin{align}
\tilde{\lambda}^f_{ik,m+1} &= \mathcal{S} \left( \tilde{\lambda}_{ik,m}^f - c \cdot \tilde{D}_{ik,m} , c \cdot \mu_1 \right), \quad \text{ for } i=1,\dots,N; \; k = 1, \dots, r_1 \\
\tilde{\lambda}^g_{il,m+1} &= \mathcal{S} \left( \tilde{\lambda}_{il,m}^g - c \cdot \tilde{D}_{il,m} , c \cdot \mu_2 \right), \quad\;\, \text{ for } i=1,\dots,N; \; l = 1, \dots, r_2,
\end{align}
where $\mathcal{S}(\, \cdot \,)$ is the soft-thresholding function defined in \eqref{soft_t}.

To obtain an update for the estimate of the covariance matrix of the idiosyncratic error $\Phi_{e}$, we use the formula in the EM algorithm 
suggested by \cite{BaiLi2012}
\begin{align*}
\tilde{\Phi}_{e, m+1} = \text{diag}\left[S_{x} - \tilde{\Lambda}_{m+1} \tilde{\Lambda}_{m}' \left(\tilde{\Lambda}_{m}\tilde{\Lambda}_{m}' + \tilde{\Phi}_{e, m}\right)^{-1} S_{x}\right].
\end{align*}

%Our iterative estimation procedure for the regularized FAVAR model can be found in Appendix \ref{sec:Algo}.

Our iterative estimation procedure for the regularized FAVAR model is therefore described by the following steps:

Starting from the FAVAR model $X = \Lambda^f F' + \Lambda^g G' + e$

\begin{itemize}
    \item[\textit{Step 1:}] Transform the data by linearly projecting the observed factors $G$ from the observed data. This is achieved by post-multiplying the FAVAR model by $M = I_T - G(G'G)^{-1} G'$. We obtain the transformed data $XM = \Lambda^f F'M + eM$ or $\dot{X} = \Lambda^f \dot{F}' + \dot{e}$.

	\item[\textit{Step 2:}] Set $m = 1$ and obtain an initial estimate for the factor loading matrix $\tilde{\Lambda}_m^f$, factors $\tilde{F}_m$ and the diagonal idiosyncratic error covariance matrix $\tilde{\Phi}_{e,m}$, e.g.\ by using unpenalized MLE on $\dot{X}$. Get an initial estimate of the factor loadings of the observed factor by $\tilde{\Lambda}^g_m = \left(X - \tilde{\Lambda}^f_m \tilde{F}'_m\right) G\left(G'G\right)^{-1}$.
	\item[\textit{Step 3:}] Update $\tilde{\lambda}^f_{ik,m}$ by $ \tilde{\lambda}^f_{ik,m+1} = \mathcal{S}\left(\tilde{\lambda}^f_{ik, m} - c \cdot \tilde{D}_{ik, m}, \, c \cdot \mu_1 \right)$, for $i = 1, \dots, N; \; k = 1,\dots, r_1$ and $\tilde{\lambda}^g_{il,m}$ by $\tilde{\lambda}^g_{il,m+1} = 
	\mathcal{S} \left( \tilde{\lambda}_{il,m} - c \cdot \tilde{D}_{il,m} , c \cdot \mu_2 \right)$, for $i = 1, \dots, N; \; l = 1,\dots, r_2$.
	\item[\textit{Step 4:}] Update $\tilde{\Phi}_{e}$ using the EM algorithm in \cite{BaiLi2012}, according to \\
	$ \tilde{\Phi}_{e, m+1} = \text{diag}\left[S_{x} - \tilde{\Lambda}_{m+1} \tilde{\Lambda}_{m}' \left(\tilde{\Lambda}_{m}\tilde{\Lambda}_{m}' + \tilde{\Phi}_{e, m}\right)^{-1} S_{x}\right] $
%	\begin{align*}
%	\tilde{\Phi}_{u, m} = \text{diag}\left[S_{x} - \tilde{\Lambda}_m \tilde{\Lambda}_{m-1}' \left(\tilde{\Lambda}_{m-1}\tilde{\Lambda}_{m-1}' + \tilde{\Phi}_{u, m-1}\right)^{-1} S_{x}\right]
%	\end{align*}
	\item[\textit{Step 5:}] If $\spec{\tilde{\Lambda}^f_{m+1} - \tilde{\Lambda}^f_{m}}$ and $\spec{\tilde{\Phi}_{e,m+1} - \tilde{\Phi}_{e,m}}$ are sufficiently small, stop the procedure,  
	otherwise set $m = m + 1$ and return to \textit{Step 3}.
	\item[\textit{Step 6:}] Estimate the latent factors by $\tilde{f}_t = \left(\tilde{\Lambda}^{f'} \tilde{\Phi}_{e}^{-1} \tilde{\Lambda}^{f}\right)^{-1}\tilde{\Lambda}^{f'}\tilde{\Phi}_{e}^{-1}x_t$ ,
	where $\tilde{\Lambda}^{f}$ and $\tilde{\Phi}_{e}$ are the parameter estimates after convergence.
	\item[\textit{Step 7:}] Re-estimate the covariance matrix of the idiosyncratic errors based on the procedure introduced in Section \ref{sec:poet}.
	
%	\item[\textit{Step 7:}] Estimate the factor loadings of the observed factors by linear projection $\tilde{\Lambda}^{g} = \left(X - \tilde{\Lambda}^{f} \tilde{F}'\right)G\left(G'G\right)^{-1}$, a Ridge estimator $\tilde{\Lambda}^{g} = \left(X - \tilde{\Lambda}^{f} \tilde{F}'\right)G\left(G'G - \mu_g I_{r_2}\right)^{-1}$, or a Least Absolute Shrinkage and Selection Operator (LASSO) estimator where the following objective function is minimized $ \underset{\Lambda^g}{\text{min}}  \left(X - \tilde{\Lambda}^f \tilde{F}' - \Lambda^g G' \right)' \left(X - \tilde{\Lambda}^f \tilde{F}' - \Lambda^g G' \right) - \mu_g || \Lambda^g ||_1$.
	
	\item[\textit{Step 8:}] Based on the estimated and observed factors, $\tilde{h}_t = \left(\tilde{f}_t', g_t'\right)'$, we estimate the following VAR regression:
$\tilde{h}_t = \Phi_1 \tilde{h}_{t-1} + \Phi_2 \tilde{h}_{t-2} + \cdots + \Phi_p \tilde{h}_{t-p} + \eta_t$, we obtain the residuals $\tilde{\eta}_t$ from the previous regression and calculate $\tilde{\Omega} = \frac{1}{T-p} \sum_{t = p+1}^{T} \tilde{\eta}_t\tilde{\eta}_t'$.

	\item[\textit{Step 9:}] For the identification of our regularized FAVAR model, we use the identification restriction \textit{IRa}.  Thus, the factor loadings estimates for the unobserved factors are unchanged $\hat{\Lambda}^{f} = \tilde{\Lambda}^f$. Further, we estimate $\hat{\Lambda}^{g} = \tilde{\Lambda}^{g} + \tilde{\Lambda}^{f} \tilde{\Omega}_{fg}\tilde{\Omega}^{-1}_{gg}$, $\hat{F} = \tilde{F} - G\tilde{\Omega}_{gg}^{-1}\tilde{\Omega}_{gf}$ and the autoregressive parameters by $\hat{\Phi}_i = \tilde{A} \tilde{\Phi}_i \tilde{A}^{-1}$, for $i = 1,\dots,p$, with the rotation matrix
	$\tilde{A} = \left[\begin{array}{cc}
	I_{r_1} & -\tilde{\Omega}_{fg}\tilde{\Omega}_{gg}^{-1}	\\
	0 & I_{r_2} 
	\end{array} \right]$.	
%	is chosen to have no impact on the rotation of $\tilde{\Lambda}^f$. Hence, the interpretation of $\hat{\Lambda}^{f}$ is not distorted by the transformation of the data.  Thus, the factor loadings estimates are unchanged $\hat{\Lambda}^{f} = \tilde{\Lambda}^f$, $\hat{\Lambda}^{g} = \tilde{\Lambda}^{g} + \tilde{\Lambda}^{f} \tilde{\Omega}_{fg}\tilde{\Omega}^{-1}_{gg}$, however, we rotate the estimated factors by $\hat{F} = \tilde{F} - G\tilde{\Omega}_{gg}^{-1}\tilde{\Omega}_{gf}$ and the autoregressive parameters by $\hat{\Phi}_i = A \tilde{\Phi}_i A^{-1}$, for $i = 1,\dots,K$, with rotation matrix
%	$A = \left[\begin{array}{cc}
%	I_{r_1} & -\tilde{\Omega}_{fg}\tilde{\Omega}_{gg}^{-1}	\\
%	0 & I_{r_2} 
%	\end{array} \right]$.

\end{itemize}

\subsection{Selection of the regularization parameters} \label{sec:pen}

We provide a selection criterion to estimate the regularization parameters $\mu_1$ and $\mu_2$. The information criterion is based on an adaptation of the Bayesian information criterion, comparable to the criteria in \cite{Bai2002}, and takes the following form
\begin{align}
IC(\mu_1, \mu_2) = \mathcal{L}\left(\tilde{\Lambda}, S_{\tilde{H}}, \tilde{\Sigma}_e^{\tau}\right) + \kappa_{\mu_1,\mu_2} \sqrt{\frac{\log(2N)}{N} + \frac{\log N}{NT}}, \label{ic}
\end{align}
where $\kappa_{\mu_1,\mu_2}$ denotes the number of non-zero elements in the factor loadings matrix $\tilde{\Lambda}$ for specific $\mu_1$ and $\mu_2$. $\mathcal{L}\left(\tilde{\Lambda}, S_{\tilde{H}}, \tilde{\Sigma}_e^{\tau}\right)$ is the value of the log-likelihood function in equation \eqref{neg_log_lik} evaluated at the estimates of the factor loadings $\tilde{\Lambda}$, the sample covariance matrix of the factors $S_{\tilde{H}}$ and the covariance matrix of the idiosyncratic component $\tilde{\Sigma}_e^{\tau}$. The penalty function in \eqref{ic} relates to the convergence rate of the estimator $\tilde{\Lambda}$ and vanishes for $N,T \to \infty$. In order to select the optimal penalty parameters, we calculate the information criterion in \eqref{ic} for a grid of different values for $\mu_1$ and $\mu_2$ and select the ones that minimize the criterion. The grids for the regularization parameters are set to $\mu_1 = \left[0,0.05,\mu_{1,\max}\right]$ and $\mu_2 = \left[0,0.1,\mu_{2,\max}\right]$, where $\mu_{1,\max}$ and $\mu_{2,\max}$ denote the highest values of the penalty parameters such that we do not select an empty loadings matrix and the Assumptions \ref{sparsity_assum} in Section \ref{sec:theo} are still fulfilled.

%\newpage
%

\section{Empirical illustration: The effects of monetary policy shocks}\label{sec:appl}
In this section we use our regularized FAVAR model to investigate the effects of a monetary policy shock on economic variables. We use the Federal Funds rate (FFR) as the monetary policy instrument. The monetary policy shock is defined as a shock to the innovation of the FFR. In our FAVAR setting, we treat the FFR as an observable factor.

\subsection{Data description}
The data set $x_t$ consists of 126 macroeconomic and financial time series on monthly frequency. It spans the period from January 1985 until December 2016 which results in 384 monthly observations. %\footnote{Prior to 1985, the availability of monthly time series for some of the included variables is limited.} 
The data set is mainly based on the FRED-MD database by \cite{McCrackenNg2016}. To represent a broad range of economic sectors we augment the data set by manufacturing data, stock market data and various short-term interest rate spreads. The resulting data set is comparable to the one used by \cite{ForniGambetti2010}, \cite{stock2016factor} and \cite{kerssenfischerpuzzling}. A detailed overview can be found in Table \ref{tab:data} in the Appendix. We transform the data such that $x_t$ contains only stationary time series.\footnote{The specific transformations can be found in Table \ref{tab:data} in the Appendix .} Furthermore, without loss of generality we standardize $x_t$.

\subsection{Overview of models}
This section provides an overview of the models that are compared in the empirical analysis.
The regularized FAVAR (\textbf{RFAVAR}) leads to a data driven identification of the latent factors. We obtain an economic structure by shrinking single elements in the factor loadings matrix to zero. Hence, the estimated factors load only on a subset of the observed time series that correspond to different sectors of the economy. For the identification of our model we use scheme \textit{IRa} in Section \ref{sec:ident} that relies on a data-driven identification of the FAVAR model that is invariant to the ordering of the data. Hence, we circumvent imposing a restrictive a priori assumption on the structure of the model. We initialize our model with eight latent factors according to the $IC_1$ selection criterion by \cite{Bai2002} and are left with five latent factors after the regularization.

%In order to select the lag order for our regularized FAVAR model, we employ the Akaike information criterion (AIC) and find a lag order of $p=3$, where the maximum lag order is set to 12. 
We set the lag order for our regularized FAVAR model to $p = 12$. In a dynamic factor model setting, this lag order captures the dynamics in the observed and latent factors sufficiently. Concerning the choice of the regularization parameters we use the procedure described in Section \ref{sec:pen}.\footnote{Different robustness checks in our empirical application reveal that the structure of the regularized FAVAR model does not change much for different values of $\mu_2$. Hence, to reduce the computational time, we can fix $\mu_2$ to a value in the interval [0.7, 1.3] and optimize only for values of $\mu_1$.}

The commonly used scheme in the literature to obtain economically identified factors is based on the named factor identification. This corresponds to identification scheme \textit{IRb} in Section \ref{sec:ident}, which implies that the first $r_1$ time series drive the dynamics of the latent factors. In our analysis we denote this model as \textbf{FAVAR-NF} and the following time series are used as naming variables: civilian unemployment rate, consumer price index, industrial production index, S\&P 500 composite index and Baa corporate bond yield (see Table \ref{tab:named1}). This specific choice of time series is guided by the selection of our regularized FAVAR model. More specifically, these are the time series with the highest loadings in absolute value on the estimated factors. Moreover, the selection is also economically sensible as relevant sectors of the economy are represented by those time series. We use $p=12$ lags in the named factor FAVAR model.

In the following, we introduce two additional models that are restricted versions of the general dynamic factor model proposed by \cite{forni2000generalized}. Closely related models are studied by \citeauthor{Stock2002} (\citeyear{Stock2002a}, \citeyear{Stock2002}, \citeyear{stock2005implications}). More specifically, the DFM is given by
\begin{align}
    x_t &= \Lambda^f f_t + e_t,	\label{mes_dfm}\\
    f_t^\ddagger &= \Theta(L) f_{t-1}^\ddagger + u_t,\label{t_dfm}
\end{align}
where $f_t$ in equation \eqref{mes_dfm} is a $(q \times 1)$-vector of latent static factors and $\Theta(L)$ is a $p$-th order lag polynomial. The first model we consider is the pure dynamic factor model (\textbf{DFM}) by \cite{ForniGambetti2010}. In this model $f_t^\ddagger = f_t$, $u_t = R \varepsilon_t$, $R$ is a $(q \times q_1)$-dimensional matrix and $\varepsilon_t$ is $(q_1 \times 1)$-vector of primitive shocks. 

In the second model, we augment the factors in equation \eqref{t_dfm} by observable variables $y_t$. Hence, in this setting $f_t^\ddagger = [f_t', y_t']'$. We denote this model as \textbf{VAR-F}. In comparison to the FAVAR model introduced in equations \eqref{favar1} and \eqref{favar2}, the model specification in \eqref{mes_dfm} only includes latent variables on the right hand side.

For the model specification of the DFM, we follow \cite{ForniGambetti2010} by setting $q = 16$, $q_1 = 4$ and $p = 12$.\footnote{The selection criterion by \cite{Bai2007} also suggests 4 dynamic factors for our dataset.} The identification of the structural shocks relies on a recursive Cholesky scheme, which includes industrial production (IP), consumer prices (CPI), the FFR and the excess bond premium (EBP). A tight monetary policy shock increases the FFR and has no contemporaneous effect on IP and CPI. The FFR can only be affected by industrial production, consumer prices and itself contemporaneously, whereas EBP reacts to shocks to the three others. The model settings for the VAR-F are comparable to \cite{kerssenfischerpuzzling} and are set to $q_1 = 9$  and $p = 12$. As observable variables, we include IP, CPI and the FFR. For the identification of the structural shocks corresponding to the observable variables we use a recursive Cholesky scheme. As above, IP and CPI are not affected by a contractionary monetary policy shock on impact, whereas the FFR increases.

\subsection{Results for the factor model analysis}

To get some insights on the data, we start with an unregularized factor model estimated by PCA. The number of included factors is determined by the \cite{Bai2002} $IC_1$ criterion, where we set the maximum number of allowed factors $(r_{\max})$ to ten.  

Figure \ref{fig:hm_r2_pca} shows $R^2$ of univariate regressions of the latent factors on the observed time series. Of the eight factors that are selected by the criterion, only five factors have blocks with high explanatory power. It is well documented in the literature that the information criteria of \cite{Bai2002} tend to overestimate the true number of factors when there is remaining correlations in the idiosyncratic component (see e.g.\ \cite{Ahn2013} and \cite{Caner2014}). 
Furthermore, the standard factor model is only identified statistically and the estimated factors may not be economically meaningful. For the structural analysis we are interested in the dynamics of a model that is economically interpretable. Hence, it is crucial to economically identify the latent factors in the FAVAR model. 

In our regularized FAVAR model we enhance the economic interpretability by shrinking elements of the factor loadings matrix to zero.
We illustrate the $R^2$ results of univariate regressions associated with a sparse factor structure in Figure \ref{fig:hm_r2_4}. The separation of the latent factors is notably more distinct compared to the factors estimated by PCA. More specifically, we obtain a block structure in the factor loadings which leads to factors that are linked to different sectors in the economy. Furthermore, the RFAVAR model estimates five latent factors. Hence, it shrinks the factors that do not add additional explanatory power to zero. The economic groups associated to the latent factors are: the labor market, prices, industrial production, the stock market and credit spreads. In our context, the labor market factor is mostly linked to employment time series. The corresponding time series plots are shown in Figure \ref{fig:factor_rfavar}. The obtained factors are closely aligned with the underlying economic time series which is due to the fact that the informational content of various time series is used to construct the latent factor estimates. Table \ref{tab:data} provides an overview of which factor drives each variable. A comparable number of factors has been found by \cite{stock2016factor}.

The same analysis is repeated for the named factor FAVAR (FAVAR-NF) model. The resulting $R^2$ are plotted in Figures \ref{fig:hm_r2_s2}. The choice of the naming variables in the FAVAR-NF model leads to factors that have a comparable allocation to economic sectors as the RFAVAR model. This result is anticipated as the selection of the naming variables is guided by the RFAVAR. However, the obtained factor time series in Figures \ref{fig:factor_nfs2} are less aligned to the observed time series in comparison to the RFAVAR. This effect is more pronounced for naming variables that do not represent the entire sector. In these cases the resulting factors are distorted by sector unrelated variables, as in the labor market and credit spread factors.

It is important to note that the model identification crucially depends on the ordering of the observed time series. For example, \cite{stock2016factor} use the following naming scheme in an oil price application: Real personal consumption expenditures, industrial production index, civilian employment rate, S\&P 500 composite index, trade weighted US dollar index major currencies and producer price index.\footnote{As \cite{stock2016factor} work on quarterly frequency they can use data that is not available on monthly frequency. We deviate by using a different proxy for employment and by omitting government spending.} The resulting $R^2$ are plotted in Figure \ref{fig:hm_r2_s1}. Even though the naming variables are chosen based on economic reasoning the selection is rather arbitrary and leads to very different results. Hence, the selection of the naming variables constitutes a restrictive assumption on the structural model. 

We omit this analysis for the DFM and VAR-F models because these frameworks are not concerned with the interpretability of the factors.

\subsection{Results from the impulse response analysis} \label{sec:res_impl}
In the following, we analyze the effects of a monetary policy shock on the observed time series based on our model specification (RFAVAR). Additionally, we compare the results to the ones obtained for the FAVAR-NF, DFM and VAR-F. Hence, our focus lies on the dynamic responses to a shock in the monetary policy instrument (FFR). This corresponds to a partial identification of the structural model, where we identify the monetary policy shock. More specifically, the innovations corresponding to the FFR are contemporaneously uncorrelated with the latent factor innovations, whereas the latter can be contemporaneously correlated. Our RFAVAR framework allows for the structural analysis of the dynamics for both the factors and the underlying observed time series as pointed out in Section \ref{sec:impl}.

In a first step, we elaborate on the effect on the estimated factors for the RFAVAR and the FAVAR-NF model. In both settings, the obtained factors are economically identified and serve as proxies of economic aggregates. The contemporaneous structural effect on the factors are depicted in Table \ref{tab:imp}, where the last column in both panels is associated with the monetary policy shock. 

For our RFAVAR model, the strength of the impact is given in Panel A of Table \ref{tab:imp}. The sign of the contemporaneous impact is in line with economic reasoning. More specifically, the price, industrial production and the labor market factors react negatively on impact in response to a tight monetary policy shock. %Further, the remaining factors do not react heavily contemporaneously. 

The impulse responses of the factors to a 100bp shock in the innovation of the Federal Funds rate are illustrated in Figure \ref{fig:irf_per_4_1}. The responses of all factors are transitory and stabilize to a new level after one or two years. More precisely, the labor market reacts significantly negative on impact and no longer reacts after 12 months. 
Moreover, as expected by economic rationale, the price level, the level of industrial production and the credit spread factor are impacted negatively by an exogenous increase in the monetary policy rate. The stock market factor does not react significantly to a monetary policy shock for most of the periods. It slightly increases after five months for about one month.

%%%%%%%%%%%%%%%%%%%%%%%%%%%%%%%%%%%%%%%%%%%%%%%%%%%%%%%%%%%%%%%%%%%%%%%%%%%%%%%%%%%%%%%%%%%%%%%%%%%%%%%%%%%%%%%

%\begin{figure}[H]
%	\centering
%	\includegraphics[width=1\linewidth]{}
%	\caption{{\textbf{Impulse responses for the regularized FAVAR}} \label{fig:irf_per_4_0}\\
%		\footnotesize The graph shows impulse responses to an unit shock in the innovation of the FFR. The dashed lines correspond to 68\% bootstrap confidence intervals.}
%\end{figure}

%%%%%%%%%%%%%%%%%%%%%%%%%%%%%%%%%%%%%%%%%%%%%%%%%%%%%%%%%%%%%%%%%%%%%%%%%%%%%%%%%%%%%%%%%%%%%%%%%%%%%%%%%%%%%%%

\begin{figure}[!t]
    \centering
    \includegraphics[width=1\linewidth]{irf_per_4_1}
    \caption{{Accumulated impulse responses to a monetary policy shock on the factors for the regularized FAVAR} \label{fig:irf_per_4_1}\\
        \footnotesize The graph shows accumulated impulse responses to a 100bp shock in the innovation of the FFR. The dashed lines correspond to 68\% bootstrap confidence intervals.}
\end{figure}

%%%%%%%%%%%%%%%%%%%%%%%%%%%%%%%%%%%%%%%%%%%%%%%%%%%%%%%%%%%%%%%%%%%%%%%%%%%%%%%%%%%%%%%%%%%%%%%%%%%%%%%%%%%%%%%

For the FAVAR-NF model, the impact matrix is reported in Panel B in Table \ref{tab:imp}. Note that we only provide standard errors for the estimates in the last column of Panel B. This follows from the structure of the rotation matrix $\tilde{A}$ given at the end of Section \ref{sec:ident}. As the first $r_1 \times r_1$ block of $\tilde{A}$ depends on $\tilde{\Lambda}_1$, which is kept fixed in the residual-bootstrap based on the dynamic factor equation \eqref{favar2}, we cannot compute bootstrap-based standard errors for the estimated quantities in the first $r_1 \times r_1$ block of the impact matrix. However, this is not harmful for the upcoming analysis, as we are interested in the dynamic effects of a monetary policy shock, whose contemporaneous impacts are given by the last column of the impact matrix.

Furthermore, in Figure \ref{fig:irf_per_3s1_1}, we plot the impulse response functions associated with the named factor scheme. Both, the contemporaneous impacts and the resulting impulse responses on the estimated factors are not always in line with the economic theory. For example, unemployment reacts negatively in response to a tight monetary policy shock, whereas industrial production increases for about ten months. Moreover, we obtain relatively large confidence intervals for the remaining factors leading to statistically insignificant impulse responses. 
Even though the choice of the naming variables is economically motivated, the implied structural dynamics fail to span the monetary policy shock. 

In a second step, we investigate the impulse responses on the observable variables $x_t$ for all models. For the RFAVAR and FAVAR-NF models the contemporaneous impact matrix on $x_t$ is given by $\hat{\Lambda}$ corresponding to the rotated factor loadings estimate, as outlined in Section \ref{sec:impl}. The impact matrix for our regularized FAVAR model is given in Figure \ref{fig:hm_lam_e_4}. The contemporaneous impact of a tightening monetary policy shock is in accordance with economic reasoning. Following an exogenous increase in the FFR, the short term interest rates go up on impact. Further, price and IP as well as employment time series react negatively, whereas unemployment series rise. The stock market plummets and credit conditions deteriorate. 

%%%%%%%%%%%%%%%%%%%%%%%%%%%%%%%%%%%%%%%%%%%%%%%%%%%%%%%%%%%%%%%%%%%%%%%%%%%%%%%%%%%%%%%%%%%%%%%%%%%%%%%%%%%%%%%
\begin{figure}[!t]
    \vspace*{-0.8cm}
    \centering
    \includegraphics[width=0.75\linewidth]{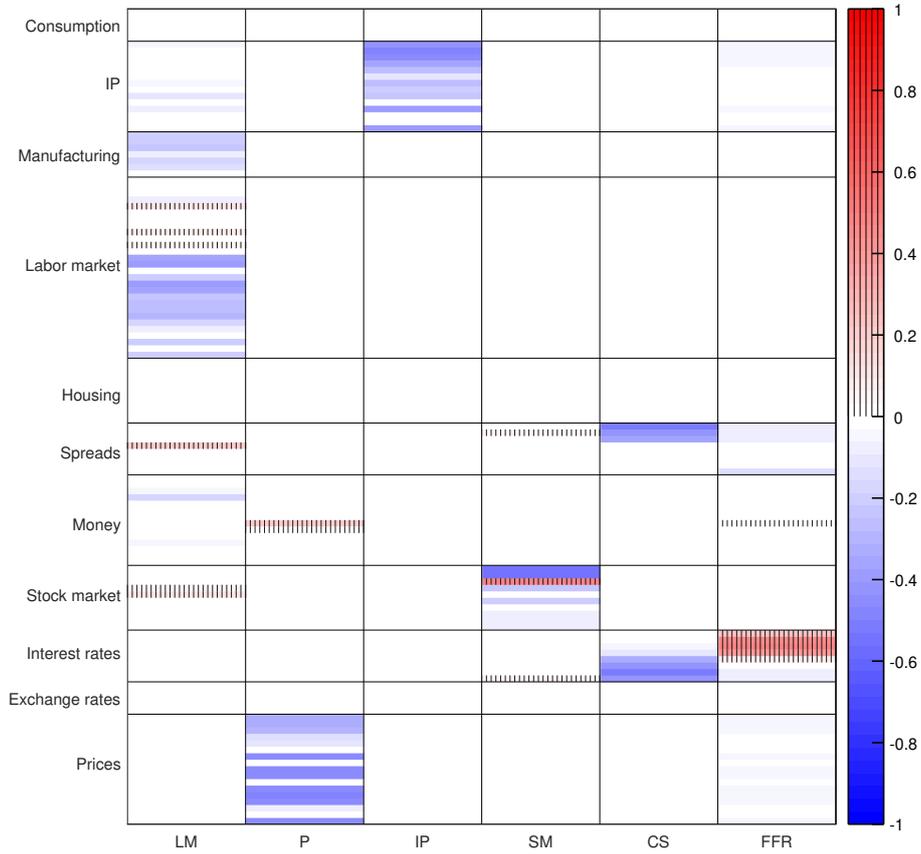}
    \caption{{Impact matrix on the observed variables $x_t$ for the regularized FAVAR} \label{fig:hm_lam_e_4} \\
        \footnotesize This graph shows the contemporaneous impact matrix of a 100bp shock in the factor innovations to the observed time series for the regularized FAVAR model. The factors are abbreviated as follows: 'LM' labor market, 'P' price, 'IP' industrial production, 'SM' stock market, 'CS' credit spread and 'FFR' Federal Funds rate.}
\end{figure}

%%%%%%%%%%%%%%%%%%%%%%%%%%%%%%%%%%%%%%%%%%%%%%%%%%%%%%%%%%%%%%%%%%%%%%%%%%%%%%%%%%%%%%%%%%%%%%%%%%%%%%%%%%%%%%%

The contemporaneous impact matrices for the FAVAR-NF model is depicted in Figure \ref{fig:hm_lam_e_3s1}. The economic implications are implausible, as the sign of the effects is often incorrect. For example, the labor market, industrial production and prices react positively on impact for a contractionary monetary policy shock. In contrast to our RFAVAR model, the factor loadings matrix of the FAVAR-NF model is more dense and an economic sector association is not possible.  
Hence, even though the $R^2$ plot of univariate regressions of the named factors on the observed time series in Figure \ref{fig:hm_r2_s1} shows a high explanatory power of the latent factors, the implied rotation scheme yields a factor loadings matrix which we can no longer explain economically.

In the following, we analyze the impulse responses for selected time series, i.e. we concentrate on the effects of a tight monetary policy shock on the consumer price index (CPI), the civilian unemployment rate, the IP index and the three months Treasury bill. Figure \ref{fig:irf_x_per_4_1} shows the level effects for our RFAVAR model on the specific variables. CPI and the IP index series react negatively for around two years, respectively and stabilize to a new level for the remaining periods. The unemployment rate increases for approximately 24 months, whereas the short-term interest rate reacts positively on impact for 12 months. 

%%%%%%%%%%%%%%%%%%%%%%%%%%%%%%%%%%%%%%%%%%%%%%%%%%%%%%%%%%%%%%%%%%%%%%%%%%%%%%%%%%%%%%%%%%%%%%%%%%%%%%%%%%%%%%%

%\begin{figure}[H]
%	\centering
%	\includegraphics[width=1\linewidth]{}
%	\caption{{\textbf{Impulse responses on the observed variables $x_t$ for the regularized FAVAR}} \label{fig:irf_x_per_4_0}\\
%		\footnotesize The graph shows impulse responses to an unit shock in the innovation of the FFR. The dashed lines correspond to 68\% bootstrap confidence intervals.}
%\end{figure}

%%%%%%%%%%%%%%%%%%%%%%%%%%%%%%%%%%%%%%%%%%%%%%%%%%%%%%%%%%%%%%%%%%%%%%%%%%%%%%%%%%%%%%%%%%%%%%%%%%%%%%%%%%%%%%%

%%%%%%%%%%%%%%%%%%%%%%%%%%%%%%%%%%%%%%%%%%%%%%%%%%%%%%%%%%%%%%%%%%%%%%%%%%%%%%%%%%%%%%%%%%%%%%%%%%%%%%%%%%%%%%%
\begin{figure}[!t]
    
    \centering
    \includegraphics[width=1\linewidth]{irf_x_per_4_1}
    \caption{{Accumulated impulse responses to a monetary policy shock on the observed variables $x_t$ for the regularized FAVAR} \label{fig:irf_x_per_4_1} \\
        \footnotesize The graph shows accumulated impulse responses to a 100bp shock in the innovation of the FFR. The dashed lines correspond to 68\% bootstrap confidence intervals.}
\end{figure}

%%%%%%%%%%%%%%%%%%%%%%%%%%%%%%%%%%%%%%%%%%%%%%%%%%%%%%%%%%%%%%%%%%%%%%%%%%%%%%%%%%%%%%%%%%%%%%%%%%%%%%%%%%%%%%%

The impulse responses for the FAVAR-NF are provided in Figure \ref{fig:irf_x_per_3s1_1}. Overall, the results show contradicting effects contemporaneously as well as over time for the unemployment rate and the IP index. %Moreover, the three month Treasury bill rate is impacted negatively for ten months. 
%The results for the FAVAR-NF crucially depend on the number of included lags in the VAR equation. In fact, if we determine the number of lags based on the AIC, which estimates $p = 3$, we obtain the impulse response patterns as illustrated in Figure \ref{fig:irf_x_per_3s1_1_p3}. For this case, the dynamic responses of CPI and the IP index are economically reasonable, whereas the level of unemployment decreases in response to a contractionary monetary policy shock. 
%The dynamics of the impulse responses are in line with economic rationale but most of them have an incorrect sign. These economically implausible directions stem from the impact matrix $\tilde{A}^{-1}$ and the factor loadings matrix $\hat{\Lambda} = \tilde{\Lambda} \tilde{A}^{-1}$ and translate into the impulse response functions. Hence, the implausibility of the impulse responses is a consequence of the rotation matrix, $\tilde{A}$, of the FAVAR model implied by the named factor identification scheme.  

For the DFM, the impulse responses for CPI, unemployment, industrial production and the three month Treasury bill rate are illustrated in Figure \ref{fig:irf_x_per_6}. The point estimates for CPI, the IP index and unemployment are economically not reasonable. Moreover, the DFM is sensitive to the number of included latent factors and lags, which leads to a high degree of estimation noise. %As a consequence, the confidence intervals are wide and the impulse responses are often not statistically significant. 

The accumulated impulse responses for the VAR-F model are given in Figure \ref{fig:irf_x_per_5}. The responses of unemployment and IP are in line with economic rationale. Moreover, the impulse response of CPI is very volatile and statistically insignificant. The three month Treasury bill rate reacts negatively over all horizons. The results are very sensitive to the number of included factors and lags. If we do not include enough factors or lags, the impulse response patterns are hardly interpretable (e.g.\ a price puzzle is obtained). 
%Furthermore, the high number of lags (i.e.\ $p = 12$) leads to jagged impulse responses which oscillate around zero. 

\subsection{Robustness checks}

We evaluate the robustness of our findings in various ways: First, we use the shadow rate provided by \cite{wu2016measuring} instead of the Federal Funds rate when the policy rate is at the zero lower bound between December 16, 2008 and December 15, 2015.\footnote{The shadow rate is retrieved from Jing Cynthia Wu's website \url{https://sites.google.com/view/jingcynthiawu/shadow-rates}.} Figure \ref{fig:SR} shows both, the shadow rate and the Federal Funds rate, in one graph. The replacement of the policy rate does not affect the results qualitatively for our regularized FAVAR model. In Figure \ref{fig:irf_x_per_4_1_p1_sr} we plot the impulse responses of the observed variables to a shock in the innovation of the shadow rate. The impulse response patterns are very similar to those in Figure \ref{fig:irf_x_per_4_1}. 
For the named factor scheme, the positive impact on the IP index is not significant in Figure \ref{fig:irf_x_per_3s1_1_sr} and the other three impulse responses are not substantially different.
The impulse response patterns of the DFM model in Figure \ref{fig:irf_x_per_6_sr} are similar to those obtained in the previous section. 
Finally, for the VAR-F model in Figure \ref{fig:irf_x_per_5_sr}, the impulse responses of CPI, the unemployment rate and the IP index remain economically plausible when we use the shadow rate instead of the FFR. 
More precisely, CPI reacts significantly negative for the first 6 months, whereas the IP index decreases significantly for approximately three years.  
Overall, using the shadow rate in periods where the FFR is at the zero lower bound improves the results of our competing models slightly but does not affect our RFAVAR model qualitatively.

Second, we augment the maximum number of latent factors used in the initial steps to $r_{\max} = 20$. Figure \ref{fig:hm_r2_pca_r} contains the $R^2$ of univariate regressions of the principal components factors on the observed time series. Qualitatively, it is similar to the one obtained for $r^*=10$ factors depicted in Figure \ref{fig:hm_r2_pca}. In both settings, there are five factors which have separate block-wise explanatory power. Once we employ shrinkage onto the factor loadings, we obtain five latent factors for both initial number of factors. Our method is robust against alternative initial number of factors. This can be seen in Figures \ref{fig:hm_r2_4} and \ref{fig:hm_r2_4_r} which are based on different starting points for the shrinkage but yield similar latent factors and factor loadings estimates. 

Third, we alter the lag order of the regularized FAVAR model to allow for different dynamics in the impulse responses. The resulting response functions of the factors and observed time series of a RFAVAR(2) model can be found in Figures \ref{fig:irf_per_4_1_p2} and \ref{fig:irf_x_per_4_1_p2}. Moreover, the results for a RFAVAR(3) and RFAVAR(6) model are depicted in Figures \ref{fig:irf_per_4_1_p3},  \ref{fig:irf_x_per_4_1_p3} and \ref{fig:irf_per_4_1_p6}, \ref{fig:irf_x_per_4_1_p6}, respectively. The shape of the impulse responses and the evolution over the horizons does not change qualitatively.

Forth, we shorten the time span to the period prior to the global financial crisis. Hence, the data set spans a period for January 1985 to December 2006. We set the initial number of factors to $r_{\max} = 10$. Based on the \cite{Bai2002} $IC_1$ criterion, we initialize the factor model with $r^* = 9$ factors. The $R^2$ of univariate regressions of the PCA based factors onto the observed time series can be found in Figure \ref{fig:hm_r2_pca_2006}. For the pre-crisis period, the block-wise dependence is pronounced for four or five factors. When we impose sparsity onto the factor loadings, we retrieve four latent factors: prices, credit spreads, real activity, and the stock market. Figure \ref{fig:hm_r2_4_2006} shows the $R^2$ of univariate regressions of the regularized factors on the observed time series and Figure \ref{fig:factor_rfavar_2006} shows the time series of the latent factors. A block-structure is still present, however, the real activity sector is represented by the industrial production sector. The labor market sector is no longer a separate latent factor in the pre-crisis period because the two real activity proxies covary heavily. For the entire sample, the labor market factor reacts more sluggish in comparison to industrial production which leads to two separate real activity factors (i.e.\ Figure \ref{fig:factor_rfavar}). 
%Moreover, in addition to the Federal Funds rate which we use as the observed factor, we find a latent long-term interest rate factor. In the pre-crisis period, the long and short-term interest rates do not covary as much as in the post-crisis period in which the monetary policy rate is effectively at the lower zero bound. 
Our regularized factor model yields economically sensible latent factors and factor loadings estimates in the pre-crisis period. 
%Moreover, the impulse responses associated with this structural shock are depicted in Figure \ref{fig:irf_per_4_1_2006}. In response to a tight monetary policy shock, the real activity and price factor react negatively, the stock market is positively impacted and the long-term interest rate reacts negatively on impact. % as predicted by the yield curve. 
Figure \ref{fig:hm_lam_e_4_2006} contains the contemporaneous effects of a 100bp structural shock to the FFR onto the observed time series. The direction of the impact is economically plausible for all cases. Lastly, Figure \ref{fig:irf_x_per_4_1_2006} shows impulse responses of four observed time series: the consumer price index and the industrial production index react negatively in response to a tight monetary policy shock, the level of employment in the manufacturing sector plummets, whereas the three months Treasury bill increases.

%\newpage
%
\section{Conclusions}
\label{sec:con}
In this paper, we propose a regularized factor-augmented vector autoregressive model that enables the factor identification and their economic interpretation. Our estimation procedure relies on a penalized quasi-maximum likelihood approach and is based on a $L_1$-norm regularization of the factor loadings matrix.
The named factor identification scheme conventionally used in the FAVAR literature to identify the factors, imposes specific relations between the factors and observed time series. More specifically, the ordering of the variables determines the structure of the model. The sparse factor loadings structure in our regularized FAVAR model allows for a direct factor identification. Hence, we are able to identify the latent factors in a data-driven manner without implicitly assuming their form a priori in our identification scheme. In this framework, the effects of structural shocks can be investigated on economically meaningful estimated factors and on all observed time series included in the model.

We prove consistency under the Frobenius norm for the estimators of the factor loadings, the latent factors and the covariance matrix of the idiosyncratic component based on the regularized FAVAR model. The factors estimated based on GLS are shown to be consistent. Moreover, the autoregressive parameters in the dynamic equation are consistently estimated as well.

In an empirical application, we investigate the effects of structural monetary policy shocks on a broad range of economically relevant variables. We choose to identify this shock using a joint identification of the factor model and the structural innovations in the vector autoregressive model. We extract five latent factors that relate to the labor market, prices, industrial production, the stock market and credit spreads. Furthermore, the Federal Funds rate is used as an observed factor.
We find impulse response functions which are in line with economic rationale both on the factor aggregates and the observed time series level. More specifically, we do not observe a price puzzle and the obtained impulse response patterns are economically plausible. In particular, following a tight monetary policy, industrial production falls, credit conditions deteriorate and the level of employment decreases.

\setlength{\bibsep}{2pt plus 0.3ex}
\bibliographystyle{ecta}
%\bibliographystyle{elsarticle-harv}
%\journal{Journal of Econometrics}
\bibliography{references_SVAR}

\newpage
%============= APPENDIX =======================
\renewcommand\thefigure{\thesection.\arabic{figure}}
\renewcommand{\thetable}{\thesection.\arabic{table}}
\appendix
\begin{appendix}
\section*{Appendix}
\label{sec:appendix}
\section{Proofs}
\label{sec:A_proofs}
%\section{Proofs}
%%%%%%%%%%%%%%%%%%%%%
\subsection{Consistency of the regularized FAVAR Model Estimator} \label{subsubsec:consistency_lam}
\textbf{Proof.} Theorem \ref{theo_consistency_lam} (Consistency of the estimators of the regularized FAVAR model before rotation)
To establish the consistency of the regularized FAVAR model, we proceed in a similar fashion as in \cite{Daniele2018}. Initially, we define the following penalized log-likelihood
\begin{align}
    \mathcal{L}_p(\bar{\Lambda}, \bar{\Sigma}_{e}) = Q_1(\bar{\Lambda}, \bar{\Sigma}_{e}) + Q_2(\bar{\Lambda}, \bar{\Sigma}_{e}) + Q_3(\bar{\Lambda}, \bar{\Sigma}_{e}), \label{const_lik}
\end{align}
where
\begin{align*}
    Q_1(\bar{\Lambda}, \bar{\Sigma}_{e}) &= \frac{1}{N} \log\left|\bar{\Sigma}_{e}\right| + \frac{1}{N} \trace{S_{e} \bar{\Sigma}_{e}^{-1}} - \frac{1}{N} \log\left|\Sigma_{e}\right| - \frac{1}{N} \trace{S_e \Sigma_{e}^{-1}}\\
    &\quad + \frac{1}{N} \mu_1 \sum_{i = 1}^{N} \sum_{k = 1}^{r_1} \left(\left|\bar{\lambda}_{ik}^f\right| - \left|\lambda_{ik}^f\right| \right) + \frac{1}{N} \mu_2 \sum_{i = 1}^{N} \sum_{l = 1}^{r_2} \left(\left|\bar{\lambda}_{il}^g\right| - \left|\lambda_{il}^g\right| \right),	\\
    Q_2(\bar{\Lambda}, \bar{\Sigma}_{e}) &= \frac{1}{N} \traces{\left(\bar{\Lambda} - \Lambda\right)' \bar{\Sigma}_{e}^{-1} \left(\bar{\Lambda} - \Lambda\right) - \left(\bar{\Lambda} - \Lambda\right)' \bar{\Sigma}_{e}^{-1} \bar{\Lambda} \left(\bar{\Lambda}'\bar{\Sigma}_{e}^{-1} \bar{\Lambda}\right)^{-1} \bar{\Lambda}'\bar{\Sigma}_{e}^{-1}\left(\bar{\Lambda} - \Lambda\right)},	\\
    Q_3(\bar{\Lambda}, \bar{\Sigma}_{e}) &= \frac{1}{N} \log\left|\bar{\Lambda}\bar{\Lambda}' + \bar{\Sigma}_{e}\right| + \frac{1}{N} \trace{S_x\left(\bar{\Lambda}\bar{\Lambda}' + \bar{\Sigma}_{e}\right)^{-1}} - Q_2(\bar{\Lambda}, \bar{\Sigma}_{e}) \\
    &\quad - \frac{1}{N} \log\left|\bar{\Sigma}_{e}\right| - \frac{1}{N} \trace{S_e \bar{\Sigma}_{e}^{-1}},
\end{align*}
and $\Lambda = \left[\Lambda^f \; \Lambda^g\right]$. Hence, the penalized log-likelihood in \eqref{const_lik} can be written as
\begin{align}
    \begin{split}
        \mathcal{L}_p(\bar{\Lambda}, \bar{\Sigma}_{e}) &= \frac{1}{N} \log\left|\bar{\Lambda}\bar{\Lambda}' + \bar{\Sigma}_{e}\right| + \frac{1}{N} \trace{S_x\left(\bar{\Lambda}\bar{\Lambda}' + \bar{\Sigma}_{e}\right)^{-1}} \\
        &\quad - \frac{1}{N} \log\left|\Sigma_{e}\right| - \frac{1}{N} \trace{S_{e} \Sigma_{e}^{-1}}\\
        &\quad + \frac{1}{N} \mu_1 \sum_{i = 1}^{N} \sum_{k = 1}^{r_1} \left(\left|\bar{\lambda}_{ik}^f\right| - \left|\lambda_{ik}^f\right| \right) + \frac{1}{N} \mu_2 \sum_{i = 1}^{N} \sum_{l = 1}^{r_2} \left(\left|\bar{\lambda}_{il}^g\right| - \left|\lambda_{il}^g\right|\right). \label{const_lik2}
    \end{split}
\end{align}
Consider the following set,
\begin{align*}
    \Psi_\delta = \left\{ \left(\Lambda, \Sigma_{e}\right):\;\right. &\delta^{-1} < \pi_{\min}\left(\frac{\Lambda'\Lambda}{N^{\beta}}\right) \leq \pi_{\max}\left(\frac{\Lambda'\Lambda}{N^{\beta}}\right) < \delta, \\
    &\left. \delta^{-1} < \pi_{\min}\left(\Sigma_{e}\right) \leq \pi_{\max}\left(\Sigma_{e}\right) < \delta \right\}, \quad \text{for } 1/2 \leq \beta \leq 1.
\end{align*}
Further, we define $\Phi_e = \text{diag}\left(\Sigma_{e}\right)$, which corresponds to a covariance matrix that contains only the diagonal elements of $\Sigma_{e}$ on its main diagonal.
To control the sparsity in both $\Lambda$ and $\Sigma_{e}$, we impose the following sparsity assumptions as in Assumption \ref{sparsity_assum}:
\begin{align*}
    L_N &= \sum_{k = 1}^{r} \sum_{i = 1}^{N}\1\left\{\lambda_{ik} \neq 0 \right\} = \mathcal{O}\left(N\right),	\\
    S_N &= \max_{i \leq N} \sum_{j = 1}^{N} \1\left\{\sigma_{e,ij} \neq 0 \right\},
\end{align*} 
where $\1\left\{\cdot\right\}$ denotes the indicator function that is equal to one if the boolean argument in braces is true. Hence, $L_N$ is the number of non-zero elements in the factor loadings matrix $\Lambda$ and $S_N$ denotes the maximum number of non-zero elements in each row of $\Sigma_{e}$.

The following lemma will useful for the forthcoming derivations.
\begin{lemma}\label{lemma_fan}
    \leavevmode
    \begin{itemize}
        \item [(i)] $\underset{i,j\leq N}{\max}\left|\frac{1}{T} \sum_{t = 1}^{T} u_{it}u_{jt} - \E{u_{it}u_{jt}}\right| = \mathcal{O}_p\left(\sqrt{(\log N) / T}\right)$,
        \item[(ii)] $\underset{i \leq r, j\leq N}{\max} \left|\frac{1}{T} \sum_{t = 1}^{T} f_{it}u_{jt}\right| = \mathcal{O}_p\left(\sqrt{(\log N) / T}\right)$.
    \end{itemize}
\end{lemma}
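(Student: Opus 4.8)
Both statements are maximal deviation bounds for sample averages of weakly dependent, heavy-tailed sequences, and I would prove them by the same two-step recipe: a per-coordinate exponential (Bernstein-type) inequality valid under strong mixing, followed by a union bound that pays the $\log N$ price for passing to the maximum. The argument is the mixing-sequence analogue of the concentration lemmas in \cite{FanLiaoMincheva2013} and uses only the exponential tails of Assumption~\ref{data_assum}\ref{ass_exp}, the strong mixing of Assumption~\ref{data_assum}\ref{ass_sm}, and the stationarity and orthogonality of Assumption~\ref{data_assum}\ref{as11}.

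For part (i), fix $i,j\leq N$ and set $\zeta_{ij,t} = u_{it}u_{jt} - \E{u_{it}u_{jt}}$. By Assumption~\ref{data_assum}\ref{as11} the sequence $\{\zeta_{ij,t}\}_{t}$ is strictly stationary with mean zero, and it inherits the mixing coefficients of the underlying process $\{(h_t,e_t)\}$, so by Assumption~\ref{data_assum}\ref{ass_sm} it is $\alpha$-mixing with $\alpha(T)\leq \exp(-CT^{r_3})$. The first step is a tail computation: since $u_{it}$ satisfies $\Prob{|u_{it}|>s}\leq \exp(-(s/b_1)^{r_1})$, the bound $|u_{it}u_{jt}|\leq \tfrac12(u_{it}^2+u_{jt}^2)$ shows the product, and hence the centered variable $\zeta_{ij,t}$, is sub-Weibull with exponent $r_1/2$. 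I would then invoke a Bernstein-type inequality for $\alpha$-mixing sub-Weibull sequences of the Merlev\`ede--Peligrad--Rio type (the dependent analogue of the bounds used in \cite{FanLiaoMincheva2013}): for every $s>0$,
\begin{align*}
\Prob{\Big|\tfrac{1}{T}\sum_{t=1}^{T}\zeta_{ij,t}\Big|>s} \leq T\exp\!\big(-(Ts)^{\gamma_1}/C_1\big) + \exp\!\big(-Ts^2/C_2\big),
\end{align*}
where $\gamma_1 = (2/r_1 + 1/r_3)^{-1}$ combines the product-tail and mixing exponents and $C_1,C_2$ are constants independent of $(i,j)$.

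The final step is the union bound. Taking $s = M\sqrt{(\log N)/T}$, the second term becomes $N^{-M^2/C_2}$, while under the rate condition $\log N = o(T)$ of Theorem~\ref{theo_consistency_lam} the first term is negligible; summing over the $N^2$ pairs gives $\Prob{\max_{i,j}|\tfrac{1}{T}\sum_t\zeta_{ij,t}|>s}\leq N^{2-M^2/C_2}+o(1)$, which vanishes once $M$ is large, establishing (i). Part (ii) is verbatim the same with $\xi_{j,t}=f_{it}u_{jt}$: its mean is zero by the orthogonality $\E{e_{it}h_{kt}}=0$ in Assumption~\ref{data_assum}\ref{as11}, the product of a sub-Weibull($r_2$) factor and a sub-Weibull($r_1$) error is sub-Weibull with exponent $(1/r_1+1/r_2)^{-1}$, and the maximum now ranges over only $rN = \mathcal{O}(N)$ pairs, so the identical threshold and union bound apply. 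The main obstacle is confined to the per-coordinate inequality: one must check that the combined tail-and-mixing exponent $\gamma_1$ is strictly positive and that the two-term exponential bound, after multiplication by $N^2$, still decays — that is, matching the polynomial-in-$T$ prefactor of the dependent Bernstein inequality against the union-bound cardinality under $\log N = o(T)$. Once the correct mixing concentration inequality is imported, the remaining manipulations are routine.
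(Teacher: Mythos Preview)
Your proposal is correct and in fact goes further than the paper: the paper's own proof consists solely of the citation ``See \textit{Lemmas A.3 and B.1} in \cite{Fan2011a}'', whereas you have sketched the substance of those lemmas---a per-coordinate Bernstein inequality for $\alpha$-mixing sub-Weibull sequences (of Merlev\`ede--Peligrad--Rio type) followed by a union bound over the $\mathcal{O}(N^2)$ index pairs. This is exactly the machinery underlying the cited results, so your approach and the paper's deferred proof coincide.
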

\begin{proof}
    See \textit{Lemmas A.3 and B.1} in \cite{Fan2011a}.\\
\end{proof}

Under Assumptions \ref{data_assum} and \ref{sparsity_assum} and \textit{Lemma S.1.2.} in \cite{Daniele2018}, we have that
\begin{align*}
    \sup_{\left(\Lambda, \Sigma_{e}\right) \in \Psi_\delta} \left|Q_3(\Lambda, \Sigma_{e})\right| = \mathcal{O}_p\left(\frac{\log N^{\beta}}{N} + \frac{1}{N^{\beta}}\frac{\log N}{T}\right).
\end{align*}
Furthermore, by \textit{Lemma S.1.3.} in \cite{Daniele2018}, we obtain
\begin{align}
    Q_1\left(\tilde{\Lambda},\tilde{\Sigma}_e\right) + Q_2\left(\tilde{\Lambda},\tilde{\Sigma}_e\right) \leq d_T, \label{q1_q2}
\end{align}
where $d_T = \frac{\log N^{\beta}}{N} + \frac{1}{N^{\beta}} \frac{\log N}{T}$.

In the following, we establish the consistency results for the diagonal idiosyncratic error covariance matrix estimator $\tilde{\Phi}_e$ and the factor loadings estimator $\tilde{\Lambda}$.

\begin{lemma}\label{phi_hat}
    \begin{align*}
        \frac{1}{N} \frob{\tilde{\Phi}_e -\Phi_{e}}^2 = \mathcal{O}_p\left(\frac{\log N}{T} + d_T\right) = o_p(1).
    \end{align*}
\end{lemma}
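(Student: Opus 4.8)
The plan is to convert the objective-function bound \eqref{q1_q2} into a quadratic control on the diagonal covariance error. First I would record that $Q_2\big(\tilde{\Lambda},\tilde{\Sigma}_{e}\big)\geq 0$: setting $A=\tilde{\Sigma}_{e}^{-1/2}\big(\tilde{\Lambda}-\Lambda\big)$ and $B=\tilde{\Sigma}_{e}^{-1/2}\tilde{\Lambda}$, the term rewrites as $Q_2=\frac{1}{N}\trace{A'\big(I-B(B'B)^{-1}B'\big)A}$, and since $I-B(B'B)^{-1}B'$ is an orthogonal projection this trace is non-negative. Combined with \eqref{q1_q2} this yields $Q_1\big(\tilde{\Lambda},\tilde{\Sigma}_{e}\big)\leq d_T$, which is the only consequence of the $\Lambda$-part of the problem I will use here.

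Next I would exploit the fact that the first-step estimator treats $\Sigma_{e}$ as diagonal, so $\tilde{\Sigma}_{e}=\tilde{\Phi}_{e}$ and the non-penalty part of $Q_1$ separates across coordinates into $\frac{1}{N}\sum_{i=1}^{N}\big[g_i(\tilde{\phi}_{e,ii})-g_i(\phi_{e,ii})\big]$ with $g_i(\phi)=\log\phi+s_{e,ii}/\phi$. On the restricted set $\Psi_\delta$ the entries $\tilde{\phi}_{e,ii}$ and $\phi_{e,ii}$ stay in a fixed interval bounded away from $0$ and $\infty$ (Condition \ref{assum_sig}), so each $g_i$ is strongly convex there with a common constant $c>0$; a second-order expansion around the truth gives $g_i(\tilde{\phi}_{e,ii})-g_i(\phi_{e,ii})\geq g_i'(\phi_{e,ii})\big(\tilde{\phi}_{e,ii}-\phi_{e,ii}\big)+\tfrac{c}{2}\big(\tilde{\phi}_{e,ii}-\phi_{e,ii}\big)^2$. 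The linear term carries the sampling error through $g_i'(\phi_{e,ii})=(\phi_{e,ii}-s_{e,ii})/\phi_{e,ii}^2$, and Lemma \ref{lemma_fan}(i) applied with $u_t=e_t$ gives $\max_{i\leq N}|s_{e,ii}-\phi_{e,ii}|=\mathcal{O}_p\big(\sqrt{\log N/T}\big)$, whence $\sum_i g_i'(\phi_{e,ii})^2=\mathcal{O}_p(N\log N/T)$. Writing $R=\frac{1}{N}\frob{\tilde{\Phi}_{e}-\Phi_{e}}^2$ and applying Cauchy--Schwarz to the linear term, the bound $Q_1\leq d_T$ together with the lower bound on the penalty difference $\text{pen}\geq-\frac{1}{N}\max(\mu_1,\mu_2)\,\lone{\tilde{\Lambda}-\Lambda}$ (reverse triangle inequality, shown to be of order $d_T$ under Assumption \ref{sparsity_assum}) delivers an inequality of the shape $\tfrac{c}{2}R-\mathcal{O}_p\big(\sqrt{\log N/T}\big)\sqrt{R}\leq \mathcal{O}_p(d_T)$.

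Finally I would solve this scalar quadratic inequality in $\sqrt{R}$, which returns $R=\mathcal{O}_p\big(\log N/T+d_T\big)$; if one prefers to compare against $\text{diag}(S_e)$ instead, the triangle-inequality step $\frob{\tilde{\Phi}_{e}-\Phi_{e}}\leq\frob{\tilde{\Phi}_{e}-\text{diag}(S_e)}+\frob{\text{diag}(S_e)-\Phi_{e}}$ adds nothing beyond the same $\log N/T$ term. Under $\log N=o(T)$ both summands vanish, giving the claimed $o_p(1)$. The hard part will not be the algebra but the two uniformity issues hidden in the second paragraph: guaranteeing that $\tilde{\Phi}_{e}$ actually lies in the bounded-eigenvalue region $\Psi_\delta$ so that the strong-convexity constant $c$ is uniform in $i$ and $N$, and showing that the penalty contribution --- which couples the $\Phi_e$ error to the simultaneously estimated, a priori unknown, $\Lambda$ error --- is genuinely of order $d_T$ rather than circular. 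Both are resolved by the restriction to $\Psi_\delta$ and by the sparsity bounds $L_N=\mathcal{O}(N)$ and $S_N\max(\mu_1,\mu_2)=o(1)$ in Assumption \ref{sparsity_assum}.
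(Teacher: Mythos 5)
Your overall architecture is the one the paper itself relies on (its proof of Lemma~\ref{phi_hat} simply sets up $B_1+B_2\leq d_T$ and defers to Lemma~S.1.4 of Daniele, 2018): start from \eqref{q1_q2}, kill the loadings block via non-negativity of the projection term $Q_2$, reduce the remaining likelihood difference to a convex function of the diagonal entries, feed in $\max_{i\leq N}|s_{e,ii}-\sigma_{e,ii}|=\mathcal{O}_p(\sqrt{\log N/T})$ from Lemma~\ref{lemma_fan}(i), and solve a quadratic inequality in $\sqrt{R}$. However, there is a genuine gap in your coordinate-separation step. The non-penalty part of $Q_1$ is centered at the \emph{true} idiosyncratic covariance $\Sigma_{e}$, which in this approximate factor model is \emph{not} diagonal, so
\begin{align*}
\frac{1}{N}\left[\log\left|\Sigma_{e}\right| + \trace{S_e\Sigma_{e}^{-1}}\right] \neq \frac{1}{N}\sum_{i=1}^N g_i(\phi_{e,ii}), \qquad g_i(\phi)=\log\phi + s_{e,ii}/\phi ,
\end{align*}
and your claimed decomposition $\frac{1}{N}\sum_i\left[g_i(\tilde{\phi}_{e,ii})-g_i(\phi_{e,ii})\right]$ is off by the bridge term
\begin{align*}
D = \frac{1}{N}\left[\log\left|\Phi_{e}\right| - \log\left|\Sigma_{e}\right|\right] + \frac{1}{N}\trace{\left(S_e-\Sigma_{e}\right)\left(\Phi_{e}^{-1}-\Sigma_{e}^{-1}\right)},
\end{align*}
where the second piece uses $\trace{\Sigma_{e}\left(\Phi_{e}^{-1}-\Sigma_{e}^{-1}\right)}=0$ (valid because $\Phi_e=\text{diag}(\Sigma_e)$). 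This is precisely where the non-diagonality of $\Sigma_e$ — the whole point of the approximate factor structure, and the reason the lemma compares $\tilde{\Phi}_e$ to $\Phi_e$ rather than to $\Sigma_e$ — must be confronted. The log-determinant piece has the right sign ($\geq 0$ by Hadamard's inequality), but the stochastic piece needs its own bound: the crude entrywise estimate gives only $\max_{ij}|s_{e,ij}-\sigma_{e,ij}|\cdot\frac{1}{N}\sum_{ij}\left|\left(\Phi_e^{-1}-\Sigma_e^{-1}\right)_{ij}\right| = \mathcal{O}_p\left(\sqrt{N\log N/T}\right)$, which does \emph{not} vanish under $\log N=o(T)$ alone; one needs either a variance/mixing argument for $\frac{1}{T}\sum_t\left[e_t'Me_t-\E{e_t'Me_t}\right]$ with the fixed matrix $M=\Phi_e^{-1}-\Sigma_e^{-1}$, $\frob{M}=\mathcal{O}(\sqrt{N})$, or the sparsity of $\Sigma_e$. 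Your write-up never sees this term.

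A secondary problem is the penalty step. The bound $\text{pen}\geq -\frac{1}{N}\max(\mu_1,\mu_2)\lone{\tilde{\Lambda}-\Lambda}$ "of order $d_T$" is circular as stated: controlling $\lone{\tilde{\Lambda}-\Lambda}$ is exactly the content of Lemma~\ref{lem_est_load}, which is proved \emph{after} this lemma from the same master inequality. The non-circular route drops the estimated loadings entirely, $|\tilde{\lambda}_{ik}|-|\lambda_{ik}|\geq -|\lambda_{ik}|$, so that $\text{pen}\geq -\max(\mu_1,\mu_2)\,c_3 L_N/N = -\mathcal{O}\left(\max(\mu_1,\mu_2)\right)$ by Condition~\textit{(v)} of Assumption~\ref{data_assum} and Condition~\textit{(i)} of Assumption~\ref{sparsity_assum}. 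This costs an extra $\max(\mu_1,\mu_2)$ term in the rate for $\frac{1}{N}\frob{\tilde{\Phi}_e-\Phi_e}^2$ — still $o_p(1)$ under the maintained assumptions, so consistency survives, but it does not literally reproduce the displayed rate $\mathcal{O}_p\left(\log N/T + d_T\right)$.
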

\begin{proof}
    By the definition of $Q_1\left(\tilde{\Lambda},\tilde{\Sigma}_e\right)$ and $Q_2\left(\tilde{\Lambda},\tilde{\Sigma}_e\right)$ and equation \eqref{q1_q2} we define
    \begin{align}
        B_1 + B_2 \leq d_T, \label{b1_b2}
    \end{align} 
    where
    \begin{align*}
        B_1 &= \frac{1}{N} \log\left|\tilde{\Sigma}_{e}\right| + \frac{1}{N} \trace{S_e \tilde{\Sigma}_e^{-1}} - \frac{1}{N} \log\left|\Sigma_{e}\right| - \frac{1}{N} \trace{S_e \Sigma_{e}^{-1}},\\
        B_2 &= \frac{1}{N} \traces{\left(\tilde{\Lambda} - \Lambda\right)' \tilde{\Sigma}_e^{-1} \left(\tilde{\Lambda} - \Lambda\right) - \left(\tilde{\Lambda} - \Lambda\right)' \tilde{\Sigma}_e^{-1} \tilde{\Lambda} \left(\tilde{\Lambda}'\tilde{\Sigma}_e^{-1} \tilde{\Lambda}\right)^{-1} \tilde{\Lambda}'\tilde{\Sigma}_e^{-1}\left(\tilde{\Lambda} - \Lambda\right)}	\\
        &\quad + \frac{1}{N} \mu_1 \sum_{i = 1}^{N} \sum_{k = 1}^{r_1} \left(\left|\tilde{\lambda}_{ik}^f\right| - \left|\lambda_{ik}^f\right| \right)+ \frac{1}{N} \mu_2 \sum_{i = 1}^{N} \sum_{l = 1}^{r_2} \left(\left|\tilde{\lambda}_{il}^g\right| - \left|\lambda_{il}^g\right|\right).
    \end{align*}
    The result follows by the same argument as in \textit{Lemma S.1.4.} in \cite{Daniele2018}.\\
\end{proof}

To establish the consistency of the factor loadings matrix $\tilde{\Lambda}$, we analyze both sets of factor loadings corresponding to the latent and observed factors separately. Initially, we lower bound the first term in $B_2$ as in \textit{Lemma S.1.5.} in \cite{Daniele2018} and obtain
\begin{align}
    \begin{split}
        &\frac{1}{N} \traces{\left(\tilde{\Lambda} - \Lambda\right)' \tilde{\Sigma}_e^{-1} \left(\tilde{\Lambda} - \Lambda\right) - \left(\tilde{\Lambda} - \Lambda\right)' \tilde{\Sigma}_e^{-1} \tilde{\Lambda} \left(\tilde{\Lambda}'\tilde{\Sigma}_e^{-1} \tilde{\Lambda}\right)^{-1} \tilde{\Lambda}'\tilde{\Sigma}_e^{-1}\left(\tilde{\Lambda} - \Lambda\right)}\\
        &\geq \mathcal{O}_p\left(\frac{L_N}{N}\right) \max_{i \leq N} \spec{\tilde{\lambda}_i - \lambda_{i}}^2. \label{b_2_1}
    \end{split}
\end{align}
Furthermore, the consistency results for $\tilde{\Lambda}^f$ and $\tilde{\Lambda}^g$ are summarized in the following lemma.

\begin{lemma}\label{lem_est_load}
    \begin{align*}
        \max_{i \leq N} \spec{\tilde{\lambda}_i - \lambda_{i}} &= \mathcal{O}_p\left(\max(\mu_1,\mu_2) + \sqrt{\frac{N (d_T + \max(\mu_1,\mu_2))}{L_N}}\right),\\
        %\max_{i \leq N} \spec{\tilde{\lambda}_i^g - \lambda_{i}^g} &= \mathcal{O}_p\left(\mu_2 + \sqrt{\frac{N (d_T + \mu_1)}{L_N}}\right),
    \end{align*}
    where $d_T = \frac{\log N^{\beta}}{N} + \frac{1}{N^{\beta}} \frac{\log N}{T}$.
\end{lemma}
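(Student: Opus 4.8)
The plan is to squeeze a quadratic inequality in the quantity $M_N := \max_{i\le N}\spec{\tilde\lambda_i - \lambda_i}$ out of the basic inequality \eqref{b1_b2}, namely $B_1 + B_2 \le d_T$. I write $B_2 = B_{2,1} + P$, where $B_{2,1}$ is the trace (quadratic-form) term and
\[
P = \frac{1}{N}\mu_1\sum_{i,k}\left(\left|\tilde\lambda^f_{ik}\right| - \left|\lambda^f_{ik}\right|\right) + \frac{1}{N}\mu_2\sum_{i,l}\left(\left|\tilde\lambda^g_{il}\right| - \left|\lambda^g_{il}\right|\right)
\]
is the penalty difference. First I would invoke the lower bound \eqref{b_2_1}, which supplies $B_{2,1} \ge c\,(L_N/N)\,M_N^2$ for some $c>0$ with probability approaching one. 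The likelihood-difference term $B_1$ satisfies $B_1 \ge -\mathcal{O}_p(d_T)$: this follows from the Bregman-divergence structure of the Gaussian objective $\log\left|\cdot\right| + \trace{S_e(\cdot)^{-1}}$ together with the consistency rate for $\tilde\Phi_e$ established in Lemma \ref{phi_hat} (the argument being the one imported from Lemma S.1.4 of \cite{Daniele2018}). Moving $B_1$ to the right-hand side reduces \eqref{b1_b2} to $c\,(L_N/N)\,M_N^2 + P \le \mathcal{O}_p(d_T)$.

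The second step is to bound $P$ from below. Since $\left|\tilde\lambda_{ik}\right| \ge 0$, each summand obeys $\left|\tilde\lambda_{ik}\right| - \left|\lambda_{ik}\right| \ge -\left|\lambda_{ik}\right|$, so discarding the nonnegative contributions gives $P \ge -\frac{1}{N}\max(\mu_1,\mu_2)\sum_{i,k}\left|\lambda_{ik}\right|$. The true loadings matrix has only $L_N$ nonzero entries by the sparsity condition \textit{\ref{spars1}}, each bounded by $c_3$ in absolute value by Condition \textit{\ref{assum_sig}}, so $\sum_{i,k}\left|\lambda_{ik}\right| \le c_3 L_N$ and hence
\[
P \ge -c_3\,\max(\mu_1,\mu_2)\,\frac{L_N}{N}.
\]

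Combining the two steps yields $c\,(L_N/N)\,M_N^2 \le \mathcal{O}_p(d_T) + c_3\max(\mu_1,\mu_2)(L_N/N)$, and dividing by $c\,L_N/N$ gives
\[
M_N^2 \le \mathcal{O}_p\!\left(\frac{N d_T}{L_N}\right) + \mathcal{O}\!\left(\max(\mu_1,\mu_2)\right) = \mathcal{O}_p\!\left(\frac{N\big(d_T + \max(\mu_1,\mu_2)\big)}{L_N}\right),
\]
where the last equality uses $L_N = \mathcal{O}(N)$, so that $N/L_N$ is bounded away from $0$ and $\infty$ and the additive $\max(\mu_1,\mu_2)$ can be rewritten as $\asymp N\max(\mu_1,\mu_2)/L_N$. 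Taking square roots delivers the claimed rate; the extra additive $\max(\mu_1,\mu_2)$ in the statement is then harmlessly subsumed, since for $\max(\mu_1,\mu_2)=o(1)$ it is dominated by $\sqrt{N\max(\mu_1,\mu_2)/L_N}$.

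I expect the genuine analytic work to lie in the lower bound \eqref{b_2_1} rather than in the bookkeeping above: showing that the projected quadratic form controls the max-row-norm $M_N$, and not merely the component of $\tilde\Lambda - \Lambda$ orthogonal to the fitted factor space, requires exploiting the sparsity count $L_N$ and the rotational normalization, and is the step taken from Lemma S.1.5 of \cite{Daniele2018}. The only delicate point internal to this proof is ensuring that the downward pull of the $L_1$ penalty inflates the bound by no more than $\mathcal{O}(\max(\mu_1,\mu_2))$, which is precisely what the boundedness of the true loadings and the linear sparsity $L_N \asymp N$ guarantee.
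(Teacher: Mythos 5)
Your proposal is correct, and its skeleton coincides with the paper's: both start from the basic inequality \eqref{b1_b2}, both invoke the lower bound \eqref{b_2_1} on the projected quadratic form (imported from \textit{Lemma S.1.5} of \cite{Daniele2018}, which is indeed where the real analytic work sits, as you anticipate), and both need $B_1$ to be bounded below by $-\mathcal{O}_p(d_T)$ — a step the paper performs silently in passing from \eqref{b1_b2} to \eqref{lam_est_l}, and which you make explicit via Lemma \ref{phi_hat}. Where you genuinely diverge is the treatment of the penalty difference. The paper works block by block: for the $f$-block it applies the reverse triangle inequality $|\tilde{\lambda}_{ik}^f| - |\lambda_{ik}^f| \geq -|\tilde{\lambda}_{ik}^f - \lambda_{ik}^f|$, which produces a linear term of order $\mu_1 \max_{i \leq N}\spec{\tilde{\lambda}_i^f - \lambda_i^f}$ and hence a quadratic inequality of the form $aM^2 - bM \leq c$ that is solved for $M$; this is what generates the additive $\mu_1$ (resp.\ $\mu_2$) in the block rates $\mu_1 + \sqrt{N(d_T+\mu_2)/L_N}$ and $\mu_2 + \sqrt{N(d_T+\mu_1)/L_N}$, which are then merged into the symmetric statement. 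You instead bound the entire penalty difference at once by the constant $-c_3\max(\mu_1,\mu_2)L_N/N$, using $|\tilde{\lambda}_{ik}| - |\lambda_{ik}| \geq -|\lambda_{ik}|$ together with the sparsity count of Assumption \ref{sparsity_assum}\textit{\ref{spars1}} and the entrywise bound $c_3$ of Assumption \ref{data_assum}\textit{\ref{assum_sig}}. This leaves a linear inequality in $M_N^2$, avoids the quadratic-formula step entirely, and in fact yields the marginally sharper rate $\sqrt{N(d_T+\max(\mu_1,\mu_2))/L_N}$, relative to which the paper's additive $\max(\mu_1,\mu_2)$ is redundant (dominated) once $\max(\mu_1,\mu_2) = o(1)$, as Assumption \ref{sparsity_assum}\textit{\ref{spars2}} guarantees. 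The trade-off is that your penalty bound leans on boundedness and sparsity of the true loadings, which the paper's reverse-triangle step does not require for the active block; both routes use $L_N = \mathcal{O}(N)$ in the same way to put the final bound in the stated form.
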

\begin{proof}
    By the definition of $B_2$, and equations \eqref{b1_b2} and \eqref{b_2_1} we have
    \begin{align}
        \mathcal{O}_p\left(\frac{L_N}{N}\right)\max_{i \leq N} \spec{\tilde{\lambda}_i - \lambda_{i}}^2 + \frac{1}{N} \mu_1 \sum_{i = 1}^{N} \sum_{k = 1}^{r_1} \left(\left|\tilde{\lambda}_{ik}^f\right| - \left|\lambda_{ik}^f\right| \right)+ \frac{1}{N} \mu_2 \sum_{i = 1}^{N} \sum_{l = 1}^{r_2} \left(\left|\tilde{\lambda}_{il}^g\right| - \left|\lambda_{il}^g\right|\right) \leq d_T. \label{lam_est_l}
    \end{align}
    We start by analyzing $\tilde{\Lambda}^f$. The left hand side of \eqref{lam_est_l} can be further lower bounded by
    \begin{align*}
        \mathcal{O}_p\left(\frac{L_N}{N}\right)\max_{i \leq N} \spec{\tilde{\lambda}_i^f - \lambda_{i}^f}^2 - \frac{1}{N} \mu_1 \sum_{i = 1}^{N} \sum_{k = 1}^{r_1}\left(\left|\lambda_{ik}^f\right| -  \left|\tilde{\lambda}_{ik}^f\right|\right) &\leq d_T + \mu_2 r_2	\\
        \mathcal{O}_p\left(\frac{L_N}{N}\right)\max_{i \leq N} \spec{\tilde{\lambda}_i^f - \lambda_{i}^f}^2 - \frac{1}{N} \mu_1 \sum_{i = 1}^{N} \sum_{k = 1}^{r_1} \left|\tilde{\lambda}_{ik}^f - \lambda_{ik}^f\right| &\leq d_T + \mu_2 r_2	\\
        \mathcal{O}_p\left(\frac{L_N}{N}\right)\max_{i \leq N} \spec{\tilde{\lambda}_i^f - \lambda_{i}^f}^2 - \mathcal{O}\left(\frac{L_N}{N}\right)\mu_1 \sqrt{r_1} \sqrt{\max_{i \leq N} \spec{\tilde{\lambda}_i^f - \lambda_{i}^f}^2} & \leq d_T + \mu_2 r_2
    \end{align*} 
    Solving for $\underset{i \leq N}{\max} \spec{\tilde{\lambda}_i^f - \lambda_{i}^f}$ yields
    \begin{align*}
        \max_{i \leq N} \spec{\tilde{\lambda}_i^f - \lambda_{i}^f} &\leq \mu_1 + \sqrt{\mu_1^2 + \mathcal{O}_p\left(\frac{N (d_T + \mu_2)}{L_N}\right)}	\\
        &\leq \mathcal{O}_p\left(\mu_1 + \sqrt{\frac{N (d_T + \mu_2)}{L_N}}\right).
    \end{align*}
    Equivalently, by the same argument as above for $\tilde{\Lambda}^g$, we obtain
    \begin{align*}
        \max_{i \leq N} \spec{\tilde{\lambda}_i^g - \lambda_{i}^g} = \mathcal{O}_p\left(\mu_2 + \sqrt{\frac{N (d_T + \mu_1)}{L_N}}\right),
    \end{align*}
    which completes the proof.\\	
\end{proof}

The consistency of the latent factor estimator is established in the following lemma.
\begin{lemma}\label{lem_est_factor}
    \begin{align*}
        \frac{1}{T} \sum_{t = 1}^{T} \spec{\tilde{f}_t - f_t}^2 = o_p(1).
    \end{align*}
\end{lemma}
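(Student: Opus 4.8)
The plan is to insert the observation equation \eqref{favar1} into the GLS expression \eqref{gls_factors} and exploit the defining identity $A\tilde{\Lambda}^f = I_{r_1}$ for the ``hat matrix'' $A := \left(\tilde{\Lambda}^{f'}\tilde{\Phi}_e^{-1}\tilde{\Lambda}^f\right)^{-1}\tilde{\Lambda}^{f'}\tilde{\Phi}_e^{-1}$. Writing $A\Lambda^f = I_{r_1} + A(\Lambda^f - \tilde{\Lambda}^f)$ then produces the decomposition
\begin{align*}
\tilde{f}_t - f_t = A\left(\Lambda^f - \tilde{\Lambda}^f\right) f_t + A\Lambda^g g_t + A e_t ,
\end{align*}
so that $\frac{1}{T}\sum_{t=1}^T \spec{\tilde{f}_t - f_t}^2$ is governed by three averages, one for each summand. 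The strategy is to bound each average separately and show it is $o_p(1)$.

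First I would record the normalization rates on which everything hinges. Assumption \ref{data_assum}\,\ref{lam_ass} forces the eigenvalues of $\Lambda^{f'}\Lambda^f/N^{\beta}$ to lie between $c^{-1}$ and $c$, and Assumption \ref{data_assum}\,\ref{assum_sig} bounds the eigenvalues of $\Sigma_e$, hence of $\Phi_e$. Combining these with the loading consistency of Lemma \ref{lem_est_load} and the covariance consistency of Lemma \ref{phi_hat}, I would show that $\pi_{\min}\left(\tilde{\Lambda}^{f'}\tilde{\Phi}_e^{-1}\tilde{\Lambda}^f\right)$ and $\pi_{\max}\left(\tilde{\Lambda}^{f'}\tilde{\Phi}_e^{-1}\tilde{\Lambda}^f\right)$ are both of exact order $N^{\beta}$. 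This delivers $\spec{\left(\tilde{\Lambda}^{f'}\tilde{\Phi}_e^{-1}\tilde{\Lambda}^f\right)^{-1}} = \mathcal{O}_p\left(N^{-\beta}\right)$ and $\spec{\tilde{\Lambda}^{f'}\tilde{\Phi}_e^{-1}} = \mathcal{O}_p\left(N^{\beta/2}\right)$, and hence $\spec{A} = \mathcal{O}_p\left(N^{-\beta/2}\right)$.

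With these in hand, two of the three terms are routine. For the idiosyncratic term, $\frac{1}{T}\sum_t \spec{A e_t}^2 = \traces{A\left(\frac{1}{T}\sum_t e_t e_t'\right)A'}$; replacing the sample second moment by $\Sigma_e$ (the error being controlled by Lemma \ref{lemma_fan}(i)) and exploiting the sandwich structure together with the bounds above gives $\traces{A\Sigma_e A'} = \mathcal{O}_p\left(N^{-\beta}\right) = o_p(1)$. For the loading term I would use $\frac{1}{T}\sum_t \spec{A(\Lambda^f - \tilde{\Lambda}^f) f_t}^2 \leq \spec{A(\Lambda^f - \tilde{\Lambda}^f)}^2\,\frac{1}{T}\sum_t \spec{f_t}^2$, note $\frac{1}{T}\sum_t \spec{f_t}^2 = \mathcal{O}_p(1)$ by stationarity (Assumption \ref{data_assum}\,\ref{as11}), and feed in the rate of Lemma \ref{lem_est_load} together with the sparsity bound $L_N = \mathcal{O}(N)$ of Assumption \ref{sparsity_assum}\,\ref{spars1}. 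It is here that $\beta \geq 1/2$ becomes binding: the $N^{\beta}$-normalization must not outgrow the loading error, which is guaranteed exactly when $\beta \geq 1/2$ and $\mu_1,\mu_2,d_T = o(1)$ (the latter following from $\log N = o(T)$), so a careful accounting of $\sqrt{L_N}$ and the individual-loading bound $\max_{ik}|\lambda_{ik}| < c_3$ is needed to keep this term $o_p(1)$.

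The hard part will be the observed-factor cross term $A\Lambda^g g_t$. A crude magnitude bound does not suffice, since $\tilde{\Lambda}^{f'}\tilde{\Phi}_e^{-1}\Lambda^g$ is itself of order $N^{\beta}$, which would leave $A\Lambda^g = \mathcal{O}_p(1)$ and a non-vanishing bias $A\Lambda^g g_t$. To eliminate it I would invoke the exact orthogonality $\frac{1}{T}\sum_t f_t g_t' = 0$ of Assumption \ref{data_assum}\,\ref{as11}: since $G'F = 0$ exactly, the time-domain projection $M = I_T - G(G'G)^{-1}G'$ removes the observed factors from the data while leaving $F$ untouched ($MF = F$), so that the latent loadings are identified from the $g$-free panel $\dot{X} = \Lambda^f F' + \dot{e}$ and the associated projection carries no contemporaneous observed-factor contamination; this is precisely the contribution that the rotation introduced in Section \ref{sec:ident} formalizes. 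Establishing the claim for the estimator applied to the $g$-orthogonalized data then collapses the cross term, and combining the three $o_p(1)$ bounds yields $\frac{1}{T}\sum_t \spec{\tilde{f}_t - f_t}^2 = o_p(1)$; the pointwise conclusion $\spec{\tilde{f}_t - f_t} = o_p(1)$ follows from the same $\mathcal{O}_p\left(N^{-\beta/2}\right)$ control applied summand-by-summand.
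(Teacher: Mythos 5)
Your proof is correct and follows essentially the same route as the paper's: the same GLS decomposition built on the identity $A\tilde{\Lambda}^f = I_{r_1}$ for $A = \left(\tilde{\Lambda}^{f'}\tilde{\Phi}_e^{-1}\tilde{\Lambda}^f\right)^{-1}\tilde{\Lambda}^{f'}\tilde{\Phi}_e^{-1}$, the same appeal to Lemma \ref{lem_est_load} for the loading term and to Lemma \ref{phi_hat} for the covariance replacement, and the same $\mathcal{O}_p\left(N^{-\beta/2}\right)$ control of $A$, with $\beta \geq 1/2$ and $L_N = \mathcal{O}(N)$ entering exactly where you say they do. The one substantive difference is your treatment of the observed-factor term: the paper's decomposition \eqref{diff_f} contains only the loading and idiosyncratic terms, i.e.\ it silently drops $A\Lambda^g g_t$, which is legitimate only if the GLS estimator is understood as applied to data from which $G$ has been projected out (Step 1 of the implementation in Section \ref{sec:implem}), using the exact orthogonality $\frac{1}{T}\sum_t f_t g_t' = 0$ of Assumption \ref{data_assum}\,\textit{\ref{as11}} so that $MF = F$ and the projected panel is $\dot{X} = \Lambda^f F' + eM$. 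You make this step explicit, and you correctly observe that without it the term is fatal, since $\spec{A\Lambda^g} = \mathcal{O}_p(1)$ rather than $o_p(1)$, so no crude norm bound can remove it. In that respect your write-up is more complete than the paper's own proof. The substance of the bounds on the two surviving terms is identical: your trace computation for $\frac{1}{T}\sum_{t}\spec{Ae_t}^2$ is a time-averaged version of the paper's bound $\spec{\Lambda^{f'}\Phi_e^{-1}e_t} = \mathcal{O}_p\left(N^{\beta/2}\right)$, and it survives the projection because $M$ is an orthogonal projection, so $\frac{1}{T}\, eMe' \preceq \frac{1}{T}\, ee'$ in the positive semidefinite order.
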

\begin{proof}
    The latent factor estimator in equation \eqref{gls_factors} yields
    \begin{align}\label{diff_f}
        \tilde{f}_t - f_t = -\left(\tilde{\Lambda}^{f'}\tilde{\Phi}_e^{-1}\tilde{\Lambda}^f\right)^{-1}\tilde{\Lambda}^{f'}\tilde{\Phi}_e^{-1}\left(\tilde{\Lambda}^f - \Lambda^f\right)f_t + \left(\tilde{\Lambda}^{f'}\tilde{\Phi}_e^{-1}\tilde{\Lambda}^f\right)^{-1}\tilde{\Lambda}^{f'}\tilde{\Phi}_e^{-1}e_t.
    \end{align}
    As $L_N = \mathcal{O}\left(N\right)$, the first term on the right-hand side is upper bounded by
    \begin{align}
        &\mathcal{O}_p\left(N^{-\beta}\right) \sqrt{\sum_{i = 1}^{N} \spec{\left(\tilde{\Lambda}^{f'}\tilde{\Phi}_e^{-1}\right)_i \left(\tilde{\lambda}^f_i - \lambda_{i}^f\right)}^2}\spec{f_t} \notag\\
        \quad & \leq \mathcal{O}_p\left(N^{-\beta}\right) \sqrt{\mathcal{O}_p\left(\sum_{i = 1}^{N}\spec{\tilde{\lambda}_i^f - \lambda_{i}^f}^2\right)} \notag\\
        & \leq \mathcal{O}_p\left(N^{-\beta}\right) \sqrt{\mathcal{O}_p\left(L_N\max_{i \leq N}\spec{\tilde{\lambda}_i^f - \lambda_{i}^f}^2\right)}\notag\\
        &= \mathcal{O}_p\left(\frac{\sqrt{L_N}}{N^{\beta}}\right) \mathcal{O}_p\left(\mu_1+\sqrt{\frac{N(d_T + \mu_2)}{L_N}}\right) = o_p(1).\label{f1_f_hat}
    \end{align}
    In the following, we bound the second term on the right-hand side of \eqref{diff_f}. For this we analyze the term $\tilde{\Lambda}^{f'}\tilde{\Phi}_e^{-1}e_t$.
    \begin{align*}
        &\mathcal{O}_p\left(N^{-\beta}\right) \frob{\left(\tilde{\Lambda}^{f'}\tilde{\Phi}_e^{-1} - \Lambda^{f'}\Phi_{e}^{-1}\right)e_t}  \\
        & \quad \leq \mathcal{O}_p\left(N^{-\beta}\right) \frob{\left(\tilde{\Lambda}^f - \Lambda^f\right)'\tilde{\Phi}_e^{-1}e_t} +\mathcal{O}_p\left(N^{-\beta}\right) \frob{\Lambda^{f'}\left(\tilde{\Phi}_e^{-1} - \Phi_{e}^{-1}\right)e_t}.
    \end{align*}
    By \lemref{lem_est_load}., the first term is bounded by
    \begin{align}\label{f1}
        &\mathcal{O}_p\left(N^{-\beta}\right) \sqrt{\sum_{i = 1}^{N} \spec{\left(\tilde{\lambda}_i^f - \lambda_{i}^f\right)\left(\tilde{\Phi}_e^{-1}e_t\right)_i}^2} \notag	\\
        &\quad\leq \mathcal{O}_p\left(N^{-\beta}\right) \sqrt{L_N \max_{i \leq N} \spec{\tilde{\lambda}_i^f - \lambda_{i}^f}^2\mathcal{O}_p(1)} \notag\\
        &\quad = \mathcal{O}_p\left(\frac{\sqrt{L_N}}{N^{\beta}}\right)o_p(1) = o_p(1).
    \end{align}
    The second term can be bounded using \lemref{phi_hat}. according to
    \begin{align}\label{f2}
        \mathcal{O}_p\left(N^{-\beta}\right) \frob{\Lambda^{f'}\left(\tilde{\Phi}_e^{-1} - \Phi_{e}^{-1}\right)e_t} &= \mathcal{O}_p\left(N^{-\beta}\right) \sqrt{\sum_{i = 1}^{N} \spec{\left(\Lambda^{f'}\Phi_e^{-1}\right)_i \left(\phi_{ie} - \tilde{\phi}_{ie}\right)\left(\tilde{\Phi}_e^{-1}e_t\right)_i}^2} \notag\\
        &= \mathcal{O}_p \left(\frac{\log N}{N^{\beta}} \frob{\tilde{\Phi}_e - \Phi_{e}}\right) = o_p(1).
    \end{align}
    Hence, equations \eqref{f1_f_hat}, \eqref{f1} and \eqref{f2} yield
    \begin{align*}
        \spec{\tilde{f}_t - f_t} &= \mathcal{O}_p\left(N^{-\beta}\right) \sum_{i = 1}^{N}\spec{\left(\Lambda^{f'}\Phi_{e}^{-1}\right)_i e_{it}} + \mathcal{O}_p\left(\frac{\mu_1 \sqrt{L_N}}{N^{\beta}}+\sqrt{\frac{N(d_T + \mu_2)}{N^{2\beta}}}\right) \\
        &= \mathcal{O}_p\left(N^{-\beta/2}\right) + o_p(1) = o_p(1).
    \end{align*}
\end{proof}

To establish the consistency of the second step estimator of the idiosyncratic error covariance matrix $\tilde{\Sigma}_e^{\tau}$, we first compute the convergence rate of idiosyncratic errors $e_{it}$.

\begin{lemma}\label{lem_uhat}
    \begin{align*}
        \max_{i \leq N} \frac{1}{T} \sum_{t = 1}^{T} \left|\tilde{e}_{it} - e_{it}\right|^2 = \mathcal{O}_p\left(\max(\mu_1,\mu_2)^2 + \frac{N (d_T + \max(\mu_1,\mu_2))}{L_N}\right).
    \end{align*}
\end{lemma}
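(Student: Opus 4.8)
The plan is to express the residual error $\tilde{e}_{it}-e_{it}$ directly through the three sources of estimation error and then bound the time-averaged square of each source uniformly in $i$. Writing $\tilde{e}_t = x_t - \tilde{\Lambda}\tilde{h}_t$ and $e_t = x_t - \Lambda h_t$ with $\tilde{h}_t = (\tilde{f}_t', g_t')'$, and using that $g_t$ is observed so that only the latent block of $h_t$ is estimated, the $i$-th coordinate decomposes as
\begin{align*}
\tilde{e}_{it} - e_{it} = -\left(\tilde{\lambda}_i^f - \lambda_i^f\right)'f_t - \tilde{\lambda}_i^{f'}\left(\tilde{f}_t - f_t\right) - \left(\tilde{\lambda}_i^g - \lambda_i^g\right)'g_t.
\end{align*}
Applying $(a+b+c)^2 \leq 3(a^2+b^2+c^2)$, averaging over $t$ and maximizing over $i$ reduces the claim to bounding three terms.

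For the first and third terms I would use the quadratic-form bound $\frac{1}{T}\sum_t\left(\left(\tilde{\lambda}_i^f-\lambda_i^f\right)'f_t\right)^2 \leq \pi_{\max}\!\left(\frac{1}{T}\sum_t f_t f_t'\right)\spec{\tilde{\lambda}_i^f - \lambda_i^f}^2$, and the analogue with $g_t$. Strict stationarity together with the exponential-tail condition in Assumption \ref{data_assum} guarantees that the sample second-moment matrices of $f_t$ and $g_t$ have maximal eigenvalue $\mathcal{O}_p(1)$, so after taking the maximum over $i$ these terms are governed by $\max_{i}\spec{\tilde{\lambda}_i^f - \lambda_i^f}^2$ and $\max_i \spec{\tilde{\lambda}_i^g-\lambda_i^g}^2$. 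Lemma \ref{lem_est_load} then delivers exactly $\mathcal{O}_p\!\left(\max(\mu_1,\mu_2)^2 + \frac{N(d_T+\max(\mu_1,\mu_2))}{L_N}\right)$ for both, which matches the stated rate.

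The crux is the middle term $\frac{1}{T}\sum_t\left(\tilde{\lambda}_i^{f'}(\tilde{f}_t - f_t)\right)^2$, which carries the factor-estimation error. I would bound it by $\spec{\tilde{\lambda}_i^f}^2\cdot\frac{1}{T}\sum_t\spec{\tilde{f}_t - f_t}^2$, noting that $\spec{\tilde{\lambda}_i^f} = \mathcal{O}_p(1)$ uniformly in $i$ (the true loadings are bounded by condition \ref{assum_sig} of Assumption \ref{data_assum} and the estimator is consistent), and then substitute the expansion \eqref{diff_f} of $\tilde{f}_t - f_t$ and treat its two summands separately. The loadings-driven piece is handled as in \eqref{f1_f_hat}, using $\spec{(\tilde{\Lambda}^{f'}\tilde{\Phi}_e^{-1}\tilde{\Lambda}^f)^{-1}} = \mathcal{O}_p(N^{-\beta})$, $\frob{\tilde{\Lambda}^f - \Lambda^f}\leq\sqrt{L_N}\max_i\spec{\tilde{\lambda}_i^f-\lambda_i^f}$ and $\frac{1}{T}\sum_t\spec{f_t}^2 = \mathcal{O}_p(1)$; squaring the pointwise rate and invoking $L_N=\mathcal{O}(N)$ with $\beta\geq 1/2$ shows this piece is already dominated by the claimed bound.

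The genuinely delicate step is the second, GLS-weighted idiosyncratic piece $w_i'e_t$, where $w_i'$ is the $i$-th row of $\tilde{\Lambda}^f(\tilde{\Lambda}^{f'}\tilde{\Phi}_e^{-1}\tilde{\Lambda}^f)^{-1}\tilde{\Lambda}^{f'}\tilde{\Phi}_e^{-1}$. Here I would write $\frac{1}{T}\sum_t(w_i'e_t)^2 \leq \pi_{\max}\!\left(\frac{1}{T}\sum_t e_t e_t'\right)\spec{w_i}^2$, exploit $\spec{w_i}=\mathcal{O}_p(N^{-\beta/2})$, and use the strong-mixing and bounded-eigenvalue conditions \ref{ass_sm} and \ref{assum_sig} of Assumption \ref{data_assum} to keep the maximal eigenvalue of the sample idiosyncratic covariance at $\mathcal{O}_p(1)$. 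The main obstacle is precisely this piece: it injects an $\mathcal{O}_p(N^{-\beta})$ factor-noise contribution (the weak-factor analogue of the $1/N$ term in the POET analysis of \cite{FanLiaoMincheva2013}), and the delicate part of the argument is to reconcile this contribution with the stated rate uniformly over $i$, where the interplay between the factor strength $N^{\beta}$, the loadings sparsity $L_N$, and $d_T$ must be tracked carefully. Collecting the three bounds and simplifying with $L_N=\mathcal{O}(N)$ and $\max(\mu_1,\mu_2)$ then yields the claim, the repeated passage of $\max_i$ through every step being the source of most of the bookkeeping.
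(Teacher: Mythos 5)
Your proposal is essentially the paper's own proof: the paper uses the same product\-/difference split of the residual error, $\tilde{e}_{it}-e_{it} = \left(\tilde{\lambda}_i-\lambda_{i}\right)\tilde{f}_t' + \lambda_{i}\left(\tilde{f}_t-f_t\right)'$ (tildes placed on the complementary pieces relative to your split, and without separating the $f$\-/ and $g$\-/loadings), the same $(a+b)^2\leq 2(a^2+b^2)$ and $\max_{i\leq N}$ bookkeeping with $\frac{1}{T}\sum_{t=1}^T\spec{\tilde{f}_t}^2 = \mathcal{O}_p(1)$ and $\max_{i\leq N}\spec{\lambda_i}^2 = \mathcal{O}(1)$, and the same two inputs, Lemma \ref{lem_est_load} for the loadings-error term and Lemma \ref{lem_est_factor} for the factor-error term. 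The only divergence is that you re-expand $\tilde{f}_t-f_t$ via \eqref{diff_f} and flag the GLS-weighted idiosyncratic piece of order $\mathcal{O}_p\left(N^{-\beta}\right)$ as the unresolved crux; the paper never confronts that piece at all---it simply writes $\mathcal{O}_p\bigl(\frac{1}{T}\sum_{t=1}^T\spec{\tilde{f}_t-f_t}^2\bigr)$ and equates it with the stated rate by citing Lemma \ref{lem_est_factor}, which by itself only asserts $o_p(1)$---so your account is, if anything, more explicit than the paper's about exactly where the claimed rate must absorb the factor-estimation noise.
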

\begin{proof}
    As $\tilde{e}_{it} - e_{it} = \left(\tilde{\lambda}_i - \lambda_{i}\right)\tilde{f}_t' + \lambda_{i}\left(\tilde{f}_t - f_t\right)'$, we obtain by \lemref{lem_est_load}. and \lemref{lem_est_factor}.
    \begin{align*}
        \max_{i \leq N} \frac{1}{T} \sum_{t = 1}^{T} \left|\tilde{e}_{it} - e_{it}\right|^2 &\leq 2 \max_{i \leq N} \spec{\tilde{\lambda}_i - \lambda_{i}}^2 \frac{1}{T} \sum_{t = 1}^{T} \spec{\tilde{f}_t}^2 + 2 \max_{i \leq N} \spec{\lambda_{i}}^2\frac{1}{T} \sum_{t = 1}^{T} \spec{\tilde{f}_t - f_t}^2	\\
        &\leq \mathcal{O}_p\left(\max_{i \leq N} \spec{\tilde{\lambda}_i - \lambda_{i}}^2\right) + \mathcal{O}_p\left(\frac{1}{T} \sum_{t = 1}^{T} \spec{\tilde{f}_t - f_t}^2\right)	\\
        &= \mathcal{O}_p\left(\max(\mu_1,\mu_2)^2 + \frac{N (d_T + \max(\mu_1,\mu_2))}{L_N}\right).
    \end{align*}
\end{proof}

By \lemref{lem_uhat} we have shown that $\underset{i \leq N}{\max} \,\frac{1}{T} \sum_{t = 1}^{T} \left|\tilde{e}_{it} - e_{it}\right|^2 = o_p(1)$. Hence, by a similar argument as in \textit{Lemma S.1.9.} in \cite{Daniele2018}, we have
\begin{align}
    \max_{i,j\leq N} \left|\tilde{\sigma}_{ij} - \sigma_{ij}\right| =  \mathcal{O}_p\left(\sqrt{\max(\mu_1,\mu_2)^2 + \frac{N (d_T + \max(\mu_1,\mu_2))}{L_N}}\right). \label{sig_e}
\end{align}

In what follows, we are going to determine the convergence rate of the idiosyncratic error covariance matrix estimator based on second step soft-thresholding estimator introduced in Section \ref{sec:poet}.
\begin{lemma}\label{lem_idio}
    \begin{align*}
        \spec{\tilde{\Sigma}_e^{\tau} - \Sigma_e} = \mathcal{O}_p\left(S_N\sqrt{\max(\mu_1,\mu_2)^2 + \frac{N (d_T + \max(\mu_1,\mu_2))}{L_N}}\right).
    \end{align*}
\end{lemma}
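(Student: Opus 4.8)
The plan is to follow the standard thresholding argument for the POET estimator and reduce everything to the entrywise bound already established in \eqref{sig_e}. Write, for brevity, $\omega_T = \sqrt{\max(\mu_1,\mu_2)^2 + \frac{N(d_T + \max(\mu_1,\mu_2))}{L_N}}$ for the uniform entrywise rate, so that \eqref{sig_e} reads $\max_{i,j\leq N}\left|\tilde{\sigma}_{ij} - \sigma_{ij}\right| = \mathcal{O}_p(\omega_T)$. Since $\tilde{\Sigma}_e^{\tau} - \Sigma_e$ is symmetric, I would first pass to the maximum absolute row sum,
\[
\spec{\tilde{\Sigma}_e^{\tau} - \Sigma_e} \leq \max_{i \leq N} \sum_{j=1}^N \left|\left(\tilde{\Sigma}_e^{\tau} - \Sigma_e\right)_{ij}\right|,
\]
which is valid because for a symmetric matrix the spectral norm is dominated by the induced $\ell_\infty$ (equivalently $\ell_1$) operator norm. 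The task then reduces to controlling a single row sum.

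Next I would split the inner sum over $j$ according to whether $\sigma_{e,ij} = 0$ or $\sigma_{e,ij} \neq 0$, in the spirit of Bickel--Levina and \cite{FanLiaoMincheva2013}. The diagonal is left untouched by soft-thresholding, so its contribution is $\left|\tilde{\sigma}_{ii} - \sigma_{ii}\right| = \mathcal{O}_p(\omega_T)$. For the off-diagonal entries with $\sigma_{e,ij} = 0$, the bound \eqref{sig_e} gives $\left|\tilde{\sigma}_{ij}\right| = \left|\tilde{\sigma}_{ij} - \sigma_{ij}\right| = \mathcal{O}_p(\omega_T)$, so on the high-probability event $\{\max_{i,j}\left|\tilde{\sigma}_{ij} - \sigma_{ij}\right| \leq \tau\}$ the soft-thresholding operator \eqref{soft_t} sends these entries exactly to zero, contributing nothing to the row sum. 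For entries with $\sigma_{e,ij} \neq 0$ I would invoke the contraction property $\left|\mathcal{S}(x,\tau) - x\right| \leq \tau$ of soft-thresholding together with the triangle inequality,
\[
\left|\mathcal{S}(\tilde{\sigma}_{ij},\tau) - \sigma_{ij}\right| \leq \left|\mathcal{S}(\tilde{\sigma}_{ij},\tau) - \tilde{\sigma}_{ij}\right| + \left|\tilde{\sigma}_{ij} - \sigma_{ij}\right| \leq \tau + \mathcal{O}_p(\omega_T).
\]

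The counting step then closes the argument: by the sparsity condition in Assumption \ref{sparsity_assum} there are at most $S_N$ indices $j$ per row with $\sigma_{e,ij} \neq 0$, so each row sum is bounded by $S_N\left(\tau + \mathcal{O}_p(\omega_T)\right)$, whence $\spec{\tilde{\Sigma}_e^{\tau} - \Sigma_e} = \mathcal{O}_p(S_N\,\omega_T)$, which is the stated rate. The main obstacle is the calibration of the threshold: the argument needs $\tau$ to be (with high probability) at least a constant multiple of $\omega_T$, so that the genuinely zero entries are eliminated, while simultaneously $\tau = \mathcal{O}(\omega_T)$, so that the surviving nonzero entries are not over-penalized; in short I must verify $\tau \asymp \omega_T$. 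Checking that the fixed choice $\tau = \frac{1}{\sqrt{N}} + \sqrt{\frac{\log N}{T}}$ indeed matches the order of $\omega_T$ implied by \eqref{sig_e} — recalling $L_N = \mathcal{O}(N)$ and the explicit form of $d_T$ — is the delicate point, after which the remaining steps are routine.
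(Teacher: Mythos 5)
Your route is, in substance, the same as the paper's. The paper's entire proof of this lemma is one line: combine the entrywise bound \eqref{sig_e} with Theorem A.1 of \cite{FanLiaoMincheva2013}; and what you have written out (dominating the spectral norm by the maximum row sum of a symmetric matrix, splitting each row into the at most $S_N$ entries with $\sigma_{e,ij}\neq 0$ and the rest, using the contraction property $|\mathcal{S}(x,\tau)-x|\le\tau$, and killing the true zeros on the event that the entrywise error is below the threshold) is precisely the proof of that cited theorem. So the decomposition, the counting with $S_N$, and the treatment of the diagonal all match the intended argument.

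The genuine problem is the step you defer to the very end, and it is not a routine verification --- it fails. Write $\omega_T = \sqrt{\max(\mu_1,\mu_2)^2 + N(d_T+\max(\mu_1,\mu_2))/L_N}$ for the rate in \eqref{sig_e}. Your argument needs two things simultaneously: $\omega_T = O(\tau)$, so that $\{\max_{i,j}|\tilde{\sigma}_{ij}-\sigma_{ij}|\le\tau\}$ can be shown to have probability tending to one and the zero entries are eliminated; and $\tau = O(\omega_T)$, so that the deterministic shrinkage bias of order $\tau$ on each of the $S_N$ nonzero entries does not exceed the claimed $S_N\,\omega_T$. Neither relation follows from the paper's assumptions, because $\omega_T$ contains the penalty levels $\mu_1,\mu_2$ and the term $d_T \asymp \log N/N + \log N/(N^{\beta}T)$, none of which is tied to the fixed threshold $\tau = N^{-1/2}+\sqrt{\log N/T}$. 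Worse, in the regime the paper itself needs elsewhere (Lemma \ref{lem_phi} assumes $N\log N/T = o(1)$) one has $\tau^2 \asymp 1/N$ while $\omega_T^2 \ge N d_T/L_N \gtrsim \log N/N$, so $\omega_T/\tau \gtrsim \sqrt{\log N}\to\infty$: the threshold is asymptotically smaller than the only available entrywise error bound, the event you call high-probability cannot be established from \eqref{sig_e}, and a single row may then retain up to $N-S_N$ surviving noise entries, destroying the row-sum bound. In the opposite regime ($N$ much larger than $T$, small $\mu$'s) one gets $\tau \gg \omega_T$, and the bias $\approx\tau$ that soft-thresholding imposes on any nonzero off-diagonal entry bounded away from zero already exceeds $S_N\,\omega_T$, so the stated rate itself is violated. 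To be fair, the paper's own one-line proof has exactly the same hole: the hypothesis of Theorem A.1 in \cite{FanLiaoMincheva2013} is that the threshold is of the order of the entrywise rate, and this is never checked for the fixed $\tau$ of Section \ref{sec:poet}. A complete argument would either recalibrate the threshold to $\tau = C\,\omega_T$ (after which your proof goes through verbatim and yields the stated rate), or establish an entrywise bound sharper than \eqref{sig_e} showing the error is below $N^{-1/2}+\sqrt{\log N/T}$ with probability tending to one.
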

\begin{proof}
    The result follows from equation \eqref{sig_e} and Theorem A.1. of \cite{FanLiaoMincheva2013}.\\
\end{proof}
%%%%%%%%%%%%%%%%%%%%%%%%%%%%%%%%%%%%%%%%%%%%%%%%%%%%%%%%%%%%%%%%%%%%%%%%%%%%%%%%%%%%%%%%%%%%%%%%%%%%

In the following, we derive the convergence rates for the autoregressive matrices $\Phi_i$, for $i = 1, \dots p$, in the dynamic equation \eqref{favar2}. We proceed in the same fashion as in \textit{Proposition A.2.} in \cite{bai2016estimation}.
In order to improve the readability of the upcoming technicalities, we define $\bar{p} = p + 1$ and $\bar{T} = T - p - 1$.

\begin{lemma}\label{lem_phi}
    \leavevmode
    For $\frac{N \log N}{T} = o(1)$, we have
    \begin{enumerate}[label=(\alph*)]
        \item $\frac{1}{\bar{T}}\sum_{t = \bar{p}}^T \tilde{h}_{t-q} \tilde{h}_{t-s}' - \frac{1}{\bar{T}}\sum_{t = \bar{p}}^T h_{t-q} h_{t-s}' = \mathcal{O}_p\left(\frac{\sqrt{L_N}}{N^{\beta}} \left(\mu_1 + \sqrt{\frac{N(d_T + \mu_2)}{L_N}}\right)\right), \; \text{ for } q,s = 0, \dots, p.$ \label{h_a}
        \item $\frac{1}{\bar{T}}\sum_{t = \bar{p}}^T  \left(\tilde{h}_{t-q} - h_{t-q}\right)\tilde{h}_{t-s}' = \mathcal{O}_p\left(\frac{\sqrt{L_N}}{N^{\beta}} \left(\mu_1 + \sqrt{\frac{N(d_T + \mu_2)}{L_N}}\right)\right), \; \text{ for } q,s = 0, \dots, p.$ \label{h_b}
        \item $\frac{1}{\bar{T}}\sum_{t = \bar{p}}^T \eta_t \tilde{h}_{t-q}' - \frac{1}{\bar{T}}\sum_{t = \bar{p}}^T \eta_t h_{t-q}' = \mathcal{O}_p\left(\frac{\sqrt{L_N}}{N^{\beta}} \left(\mu_1 + \sqrt{\frac{N(d_T + \mu_2)}{L_N}}\right)\right), \; \text{ for } q = 0, \dots, p.$ \label{h_c}
    \end{enumerate}
\end{lemma}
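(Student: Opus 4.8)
The plan is to exploit the fact that the observed factors enter $\tilde h_t$ without estimation error, so that
\[
\tilde h_t - h_t = \left[\, (\tilde f_t - f_t)',\; 0'\, \right]',
\]
and every difference in \ref{h_a}--\ref{h_c} reduces to one involving the latent-factor error $\tilde f_t - f_t$ alone. I would first establish \ref{h_b}, since \ref{h_a} follows from it through the identity $\tilde h_{t-q}\tilde h_{t-s}' - h_{t-q}h_{t-s}' = (\tilde h_{t-q}-h_{t-q})\tilde h_{t-s}' + h_{t-q}(\tilde h_{t-s}-h_{t-s})'$, and \ref{h_c} is the analogue of \ref{h_b} with the innovation $\eta_t$ in the role of the second factor. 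Throughout I would use the decomposition \eqref{diff_f}, writing $\tilde f_t - f_t = a_t + b_t$, where $a_t = -(\tilde\Lambda^{f'}\tilde\Phi_e^{-1}\tilde\Lambda^f)^{-1}\tilde\Lambda^{f'}\tilde\Phi_e^{-1}(\tilde\Lambda^f - \Lambda^f)f_t$ is the loading-error term and $b_t = (\tilde\Lambda^{f'}\tilde\Phi_e^{-1}\tilde\Lambda^f)^{-1}\tilde\Lambda^{f'}\tilde\Phi_e^{-1}e_t$ is the projected-noise term, and bound the contribution of each piece to the three sums separately.

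For the loading-error term I would write $a_t = -G f_t$ with the $t$-independent matrix $G = (\tilde\Lambda^{f'}\tilde\Phi_e^{-1}\tilde\Lambda^f)^{-1}\tilde\Lambda^{f'}\tilde\Phi_e^{-1}(\tilde\Lambda^f - \Lambda^f)$. Equation \eqref{f1_f_hat} already gives $\spec{G} = \mathcal{O}_p\left(\frac{\sqrt{L_N}}{N^{\beta}}(\mu_1 + \sqrt{N(d_T+\mu_2)/L_N})\right)$, which is precisely the target rate. Factoring $G$ out of the average and applying Cauchy--Schwarz with $\frac{1}{\bar T}\sum_{t=\bar p}^T\spec{f_{t-q}}^2 = \mathcal{O}_p(1)$ and $\frac{1}{\bar T}\sum_{t=\bar p}^T\spec{\tilde h_{t-s}}^2 = \mathcal{O}_p(1)$ (the latter a consequence of stationarity together with $\frac1T\sum_t\spec{\tilde f_t - f_t}^2 = o_p(1)$ from Lemma~\ref{lem_est_factor}) bounds the $a_t$-part of each sum at the stated rate; the $a_t$-contribution to \ref{h_c} is handled identically using $\frac{1}{\bar T}\sum_t\spec{\eta_t}^2 = \mathcal{O}_p(1)$.

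The delicate part is the projected-noise term $b_t$, which is only $\mathcal{O}_p(N^{-\beta/2})$ pointwise and hence too large to control termwise; here the averaging over $t$ must be used. After replacing the estimated weights $\tilde\Lambda^{f'}\tilde\Phi_e^{-1}$ by the population matrix $\Lambda^{f'}\Phi_e^{-1}$ — the remainder being absorbed into the loading-error rate exactly as in \eqref{f1}--\eqref{f2} via Lemmas~\ref{lem_est_load} and \ref{phi_hat} — the leading contribution to \ref{h_b} is $\frac{1}{N^{\beta}}\Lambda^{f'}\Phi_e^{-1}\left(\frac{1}{\bar T}\sum_{t=\bar p}^T e_{t-q}h_{t-s}'\right)$. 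Because $e_t$ is independent of all $\eta_s$ (Condition~\ref{var_assm}) and $h_{t-s}$ is a measurable function of the $\eta$'s, this inner average is mean-zero; Conditions~\ref{ass_exp} and \ref{ass_sm} together with the bounded eigenvalues of $\Sigma_e$ and $\Lambda^{f'}\Lambda^f = \mathcal{O}(N^{\beta})$ then give, via a variance computation, $\spec{\Lambda^{f'}\Phi_e^{-1}\frac{1}{\bar T}\sum_t e_{t-q}h_{t-s}'} = \mathcal{O}_p(N^{\beta/2}/\sqrt T)$, so the $b_t$-contribution is $\mathcal{O}_p(1/\sqrt{N^{\beta}T})$. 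The loading-error rate is bounded below by a multiple of $N^{1/2-\beta}\sqrt{d_T}$, hence (using $d_T \ge \frac{\log N^{\beta}}{N}$) by a multiple of $N^{-\beta}\sqrt{\log N}$, whereas the hypothesis $\frac{N\log N}{T}=o(1)$ with $\beta\le 1$ forces $\frac{1}{\sqrt{N^{\beta}T}} = N^{-\beta}\sqrt{N^{\beta}/T} = o(N^{-\beta}\sqrt{\log N})$; thus the noise term is of strictly smaller order than the loading-error rate, which is why this condition is imposed.

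Finally I would assemble the three claims: \ref{h_b} from the two preceding paragraphs; \ref{h_a} by applying \ref{h_b} to $(\tilde h_{t-q}-h_{t-q})\tilde h_{t-s}'$ and the symmetric bound (now using $\frac{1}{\bar T}\sum_t\spec{h_{t-q}}^2=\mathcal{O}_p(1)$) to $h_{t-q}(\tilde h_{t-s}-h_{t-s})'$; and \ref{h_c} by repeating the $b_t$-argument with $\eta_t$ replacing $h_{t-s}'$, where the relevant cross-average $\frac{1}{\bar T}\sum_t \eta_t e_{t-q}'$ is again mean-zero by the independence of $\eta$ and $e$ in Condition~\ref{var_assm}. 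I expect the main obstacle to be the noise term: demonstrating that time-averaging improves the pointwise $N^{-\beta/2}$ rate to $N^{-\beta/2}/\sqrt T$, and verifying carefully that $\frac{N\log N}{T}=o(1)$ renders this negligible against the loading-error rate for the whole range $1/2\le\beta\le 1$.
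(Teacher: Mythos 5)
Your proposal is correct and follows essentially the same route as the paper's proof: the same observation that $\tilde h_t - h_t = [(\tilde f_t - f_t)',\,0']'$ reduces everything to the latent block, the same decomposition \eqref{diff_f} of $\tilde f_t - f_t$ into a loading-error term (bounded at the target rate via Lemma \ref{lem_est_load}, exactly as in the paper's treatment of $W_{11}$) and a projected-noise term whose estimated weights are first replaced by $\Lambda^{f'}\Phi_e^{-1}$ as in \eqref{f1}--\eqref{f2}. The only variation is that you bound the noise cross-moment by a direct variance computation, $\mathcal{O}_p\bigl(N^{\beta/2}/\sqrt{T}\bigr)$, where the paper instead applies the elementwise bound of Lemma \ref{lemma_fan}(ii) to get $\mathcal{O}_p\bigl(\sqrt{N\log N/T}\bigr)$ in Frobenius norm — both are killed by the hypothesis $N\log N/T = o(1)$, and your explicit mean-zero argument for the $\eta_t e_{t-q}'$ term in part \ref{h_c} fills in a step the paper leaves implicit.
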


\begin{proof}
    We start with expression \ref{h_a}. As $\tilde{h}_t = \left[\tilde{f}_t', g_t'\right]'$ and $h_t = \left[f_t', g_t'\right]'$, the left hand side of \ref{h_a} can be expressed as $\left[\begin{array}{cc}
    W_{11} & W_{12}		\\
    W_{21} & 0
    \end{array}\right]$, where $W_{11} = \frac{1}{\bar{T}}\sum_{t = \bar{p}}^T \left(\tilde{f}_{t-q} - f_{t-q}\right) \left(\tilde{f}_{t-s} - f_{t-s}\right)'$ $+ \frac{1}{\bar{T}}\sum_{t = \bar{p}}^T \left(\tilde{f}_{t-q} - f_{t-q}\right)f_{t-s}' + \frac{1}{\bar{T}}\sum_{t = \bar{p}}^T f_{t-q}\left(\tilde{f}_{t-s} - f_{t-s}\right)'$, $W_{12} = \frac{1}{\bar{T}}\sum_{t = \bar{p}}^T \left(\tilde{f}_{t-q} - f_{t-q}\right)g_{t-s}'$ and $W_{21} = \frac{1}{\bar{T}}\sum_{t = \bar{p}}^T g_{t-q}\left(\tilde{f}_{t-s} - f_{t-s}\right)'$. Now, we analyse each of the three quantities separately.
    Based on \lemref{lem_est_factor}, we can see that the first term on the right hand side of $W_{11}$ is $\mathcal{O}_p\left(\frac{\mu_1^2 L_N + N(d_T + \mu_2)}{N^{2\beta}}\right)$.
    
    Using equation \eqref{diff_f}, we obtain the following expression for the second term in $W_{11}$
    \begin{align}
        \begin{split}\label{f_lag}
            \frac{1}{\bar{T}}\sum_{t = \bar{p}}^T \left(\tilde{f}_{t-q} - f_{t-q}\right)f_{t-s}' = 
            &- \left(\tilde{\Lambda}^{f'} \tilde{\Phi}_e^{-1} \tilde{\Lambda}^f\right)^{-1} \tilde{\Lambda}^{f'} \tilde{\Phi}_e^{-1} \left(\tilde{\Lambda}^f - \Lambda^f\right) \frac{1}{\bar{T}}\sum_{t = \bar{p}}^T f_{t-q} f_{t-s}'		\\
            & + \left(\tilde{\Lambda}^{f'} \tilde{\Phi}_e^{-1} \tilde{\Lambda}^f\right)^{-1} \tilde{\Lambda}^{f'} \tilde{\Phi}_e^{-1} \frac{1}{\bar{T}}\sum_{t = \bar{p}}^T e_{t-q} f_{t-s}'.
        \end{split}
    \end{align} 
    
    The first term on the right hand side of equation \eqref{f_lag} is bounded by the expression $\mathcal{O}_p\left(\frac{\sqrt{L_N}}{N^{\beta}} \left(\mu_1 + \sqrt{\frac{N(d_T + \mu_2)}{L_N}}\right)\right)$, similarly as in \lemref{lem_est_factor}, as $\frac{1}{\bar{T}}\sum_{t = \bar{p}}^T f_{t-q} f_{t-s}' = \mathcal{O}_p(1)$. To bound the second term in equation \eqref{f_lag}, we first analyze the expression $\tilde{\Lambda}^{f'} \tilde{\Phi}_e^{-1} \frac{1}{\bar{T}}\sum_{t = \bar{p}}^T e_{t-q} f_{t-s}'$.	
    \begin{align}
        &\mathcal{O}_p\left(N^{-\beta}\right) \frob{\left(\tilde{\Lambda}^{f'} \tilde{\Phi}_e^{-1} - \Lambda^{f'} \Phi_e^{-1}\right) \frac{1}{\bar{T}}\sum_{t = \bar{p}}^T e_{t-q} f_{t-s}'} 	\notag\\
        &\leq \mathcal{O}_p\left(N^{-\beta}\right) \left[\frob{\left(\tilde{\Lambda}^f - \Lambda^f\right)'\tilde{\Phi}_e^{-1} \frac{1}{\bar{T}}\sum_{t = \bar{p}}^T e_{t-q} f_{t-s}'} + \frob{\Lambda^{f'}\left(\tilde{\Phi}_e^{-1} - \Phi_{e}^{-1}\right)\frac{1}{\bar{T}}\sum_{t = \bar{p}}^T e_{t-q} f_{t-s}'}\right]. \label{f_2}
    \end{align}
    Based on \lemref{lemma_fan} and \lemref{lem_est_load}, the first term in \eqref{f_2} is upper bounded by
    \begin{align*}
        \mathcal{O}_p\left(\frac{\sqrt{L_N}}{N^{\beta}}\right) \sqrt{\max_{i \leq N} \spec{\tilde{\lambda}_{i}^f - \lambda_{i}^f}^2} \mathcal{O}_p\left(\sqrt{\frac{N \log N}{T}}\right).
    \end{align*}
    Hence, as $\frac{N \log N}{T} = o(1)$, the first term of equation \eqref{f_2} is $o_p(1)$. Using \lemref{phi_hat}, the second term in \eqref{f_2} is upper bounded by
    \begin{align*}
        &\mathcal{O}_p\left(N^{-\beta}\right) \frob{\tilde{\Phi}_e - \Phi_{e}} \frob{\frac{1}{\bar{T}}\sum_{t = \bar{p}}^T e_{t-q} f_{t-s}'}	\\
        & \quad \leq o_p(1) \sqrt{\frac{N \log N}{T}} = o_p(1).
    \end{align*} 
    Finally, the second term on the right hand side of equation \eqref{f_lag} is bounded by
    \begin{align*}
        &\mathcal{O}_p\left(N^{-\beta}\right) \sqrt{\max_{i \leq N}\left(\Lambda^{f'}\Phi_{e}^{-1}\right)_i} \frob{\frac{1}{\bar{T}}\sum_{t = \bar{p}}^T e_{t-q} f_{t-s}'}	\\
        & \quad \leq \mathcal{O}_p\left(N^{-\beta}\right) \sqrt{\frac{N \log N}{T}} = o_p(1).
    \end{align*}
    Thus, the second term in $W_{11}$ is $\mathcal{O}_p\left(\frac{\sqrt{L_N}}{N^{\beta}} \left(\mu_1 + \sqrt{\frac{N(d_T + \mu_2)}{L_N}}\right)\right)$. It can be shown that the last term in $W_{11}$ is of the same order. By summarizing these results, we have that $W_{11} = \mathcal{O}_p\left(\frac{\sqrt{L_N}}{N^{\beta}} \left(\mu_1 + \sqrt{\frac{N(d_T + \mu_2)}{L_N}}\right)\right)$. By similar arguments as for $W_{11}$, it can be shown that $W_{12}$ and $W_{21}$ are $\mathcal{O}_p\left(\frac{\sqrt{L_N}}{N^{\beta}} \left(\mu_1 + \sqrt{\frac{N(d_T + \mu_2)}{L_N}}\right)\right)$ as well. Hence, \ref{h_a} follows by this.
    
    Similar as in \cite{bai2016estimation}, expression \ref{h_b} can be written as
    \begin{align*}
        \left[\begin{array}{cc}
            \frac{1}{\bar{T}}\sum_{t = \bar{p}}^T \left(\tilde{f}_{t-q} - f_{t-q}\right)\tilde{f}_{t-s}' & \frac{1}{\bar{T}}\sum_{t = \bar{p}}^T \left(\tilde{f}_{t-q} - f_{t-q}\right)g_{t-s}'	\\
            0 & 0
        \end{array}\right].
    \end{align*}
    Both terms above are of order $\mathcal{O}_p\left(\frac{\sqrt{L_N}}{N^{\beta}} \left(\mu_1 + \sqrt{\frac{N(d_T + \mu_2)}{L_N}}\right)\right)$, as shown in \ref{h_a}. \ref{h_b} follows by this result.
    
    The left hand side of part \ref{h_c} can be expressed as
    \begin{align*}
        \left[\begin{array}{c}
            \frac{1}{\bar{T}}\sum_{t = \bar{p}}^T \eta_t \left(\tilde{f}_{t-q} - f_{t-q}\right)'\\
            0
        \end{array}\right].
    \end{align*}
    By using equation \eqref{diff_f}, $\frac{1}{\bar{T}}\sum_{t = \bar{p}}^T \eta_t \left(\tilde{f}_{t-q} - f_{t-q}\right)'$ can be written as
    \begin{align*}
        \frac{1}{\bar{T}}\sum_{t = \bar{p}}^T \eta_t \left(\tilde{f}_{t-q} - f_{t-q}\right)' = &- \frac{1}{\bar{T}}\sum_{t = \bar{p}}^T \eta_t f_{t-q}' \left(\tilde{\Lambda}^f - \Lambda^f\right)' \tilde{\Phi}_{e}^{-1} \tilde{\Lambda}^f \left(\tilde{\Lambda}^{f'} \tilde{\Phi}_{e}^{-1} \tilde{\Lambda}^f\right)^{-1}	\\
        &+ \frac{1}{\bar{T}}\sum_{t = \bar{p}}^T \eta_t e_{t-q}' \tilde{\Phi}_{e}^{-1} \tilde{\Lambda}^f \left(\tilde{\Lambda}^{f'} \tilde{\Phi}_{e}^{-1} \tilde{\Lambda}^f\right)^{-1},
    \end{align*}	
    which is bounded by
    \begin{align*}
        &\frob{\frac{1}{\bar{T}}\sum_{t = \bar{p}}^T \eta_t f_{t-q}' \left(\tilde{\Lambda}^f - \Lambda^f\right)' \tilde{\Phi}_{e}^{-1} \tilde{\Lambda}^f \left(\tilde{\Lambda}^{f'} \tilde{\Phi}_{e}^{-1} \tilde{\Lambda}^f\right)^{-1}}\\
        & \quad \leq \mathcal{O}_p\left(N^{-\beta}\right) \sqrt{\sum_{i = 1}^N \spec{\left(\tilde{\Lambda}^f_i - \Lambda^f_i\right)'}} = \mathcal{O}_p\left(\frac{\sqrt{L_N}}{N^{\beta}} \left(\mu_1 + \sqrt{\frac{N(d_T + \mu_2)}{L_N}}\right)\right) = o_p(1)
    \end{align*}
    Hence, \ref{h_c} follows from this result.
    
\end{proof}

\begin{lemma}\label{lem_psi}
    \begin{align*}
        \frob{\tilde{\Phi}_i - \Phi_i} = \left(\sum_{t = \bar{p}}^T \eta_t \psi_t'\right) \left(\frac{1}{\bar{T}}\sum_{t = \bar{p}}^T \psi_t \psi_t'\right)^{-1} \left(\iota_i \otimes I_r\right) + \mathcal{O}_p\left(\frac{\sqrt{L_N}}{N^{\beta}} \left(\mu_1 + \sqrt{\frac{N(d_T + \mu_2)}{L_N}}\right)\right),
    \end{align*}	
    where $\Phi = \left(\Phi_1, \dots, \Phi_p\right)$ and $\psi = \left(h_{t-1}', \dots, h_{t-p}'\right)'$.
\end{lemma}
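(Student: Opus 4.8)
The plan is to view $\tilde{\Phi} = (\tilde{\Phi}_1, \dots, \tilde{\Phi}_p)$ as the least-squares coefficient obtained by regressing the estimated factors $\tilde{h}_t$ on their own lags $\tilde{\psi}_t = (\tilde{h}_{t-1}', \dots, \tilde{h}_{t-p}')'$, and to compare it against the infeasible estimator built from the true factors. Writing $\hat{B} = \frac{1}{\bar{T}}\sum_{t=\bar{p}}^T \tilde{\psi}_t \tilde{\psi}_t'$, the normal equations give $\tilde{\Phi} - \Phi = \big(\frac{1}{\bar{T}}\sum_{t=\bar{p}}^T (\tilde{h}_t - \Phi\tilde{\psi}_t)\tilde{\psi}_t'\big)\hat{B}^{-1}$. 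Inserting the true law of motion $h_t = \Phi\psi_t + \eta_t$ produces the decomposition $\tilde{h}_t - \Phi\tilde{\psi}_t = \eta_t + (\tilde{h}_t - h_t) - \Phi(\tilde{\psi}_t - \psi_t)$, which splits the numerator into an innovation term $\frac{1}{\bar{T}}\sum_t \eta_t\tilde{\psi}_t'$ and two terms driven purely by the latent-factor estimation error.

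The second step estimates each piece with \lemref{lem_phi}. Part \ref{h_c}, stacked over the $p$ lags, replaces the innovation term by its infeasible counterpart, so that $\frac{1}{\bar{T}}\sum_t \eta_t\tilde{\psi}_t' = \frac{1}{\bar{T}}\sum_t \eta_t\psi_t' + \mathcal{O}_p\big(\frac{\sqrt{L_N}}{N^{\beta}}(\mu_1 + \sqrt{N(d_T+\mu_2)/L_N})\big)$. The term $\frac{1}{\bar{T}}\sum_t (\tilde{h}_t - h_t)\tilde{\psi}_t'$ is exactly the object bounded in part \ref{h_b} (the $q=0$ case), and $\Phi\,\frac{1}{\bar{T}}\sum_t (\tilde{\psi}_t - \psi_t)\tilde{\psi}_t'$ is a block-stacked version of the same quantity for $q,s \geq 1$; since $\Phi$ has bounded norm under Assumption \ref{data_assum}, part \ref{var_assm}, premultiplication does not change the rate. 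Hence the whole numerator equals $\frac{1}{\bar{T}}\sum_t \eta_t\psi_t'$ plus a remainder of the advertised order.

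The third step removes the estimation error in the denominator. Under Assumption \ref{data_assum}, part \ref{var_assm}, the VAR is stable and $h_t$ stationary, so $B = \frac{1}{\bar{T}}\sum_t \psi_t\psi_t'$ converges to a positive definite matrix and $B^{-1} = \mathcal{O}_p(1)$; part \ref{h_a} gives $\hat{B} - B = \mathcal{O}_p\big(\frac{\sqrt{L_N}}{N^{\beta}}(\mu_1 + \sqrt{N(d_T+\mu_2)/L_N})\big) = o_p(1)$ under the maintained rate conditions, so $\hat{B}$ is invertible with probability approaching one and the identity $\hat{B}^{-1} - B^{-1} = -\hat{B}^{-1}(\hat{B} - B)B^{-1}$ yields $\hat{B}^{-1} = B^{-1} + \mathcal{O}_p(\cdot)$. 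Multiplying the numerator expansion by $\hat{B}^{-1}$ and collecting terms — the bounded infeasible term times the $\mathcal{O}_p$ remainder of $\hat{B}^{-1}$, and the $\mathcal{O}_p$ numerator remainder times the bounded $B^{-1}$, both of the stated order — gives the full-matrix representation; post-multiplication by $\iota_i \otimes I_r$ selects the $i$-th block and delivers the claim.

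I expect the only real obstacle to be the accounting in the third step: verifying that $\hat{B}$ is invertible with a uniformly bounded inverse asymptotically, and that each product of a stochastically bounded factor with an $\mathcal{O}_p$ remainder collapses to the single rate $\frac{\sqrt{L_N}}{N^{\beta}}(\mu_1 + \sqrt{N(d_T+\mu_2)/L_N})$ rather than to a slower one. The rest is a direct bookkeeping assembly of the three bounds already supplied by \lemref{lem_phi}, mirroring \textit{Proposition A.2.} in \cite{bai2016estimation}.
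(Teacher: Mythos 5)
Your proposal is correct and takes essentially the same approach as the paper: the paper likewise writes $\tilde{\Phi}$ as the least-squares estimator from regressing $\tilde{h}_t$ on $\tilde{\psi}_t$, then invokes Lemma \ref{lem_phi} together with ``the same arguments as in Proposition A.2'' of \cite{bai2016estimation} --- which are precisely the numerator decomposition via $h_t = \Phi\psi_t + \eta_t$ and the denominator-perturbation step you carry out --- and finishes by post-multiplying with $\iota_i \otimes I_r$. You have simply made explicit the bookkeeping that the paper delegates to that citation.
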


\begin{proof}
    We denote $\tilde{\Phi}$ the estimator of $\Phi$, which is obtained by estimating the regression
    \begin{align*}
        \tilde{h}_{t} = \Phi_1 \tilde{h}_{t-1} + \cdots + \Phi_p \tilde{h}_{t-p} + error,
    \end{align*}
    which yields the estimator
    \begin{align*}
        \tilde{\Phi} = \left(\sum_{t = \bar{p}}^T \tilde{h}_t \tilde{\psi}_t'\right)\left(\sum_{t = \bar{p}}^T \tilde{\psi}_t \tilde{\psi}_t'\right)^{-1},
    \end{align*}
    with $\psi_t = \left(h_{t-1}', \dots, h_{t-p}'\right)'$ and $h_t = \Phi \psi_t + \eta_t$.
    
    By \lemref{lem_phi} and the same arguments as in the proof of \textit{Proposition A.2.} in \cite{bai2016estimation}, we have that
    \begin{align*}
        \frob{\tilde{\Phi} - \Phi} = \left(\sum_{t = \bar{p}}^T \eta_t \psi_t'\right) \left(\frac{1}{\bar{T}}\sum_{t = \bar{p}}^T \psi_t \psi_t'\right)^{-1} + \mathcal{O}_p\left(\frac{\sqrt{L_N}}{N^{\beta}} \left(\mu_1 + \sqrt{\frac{N(d_T + \mu_2)}{L_N}}\right)\right).
    \end{align*}
    The result follows by post-multiplying $\iota_i \otimes I_r$ on both sides. 
    
\end{proof}

In the following, we will focus on the consistency of the parameter estimates after rotation. As the rotation matrix $A$ in Section \ref{sec:ident} depends on the covariance matrix $\Omega$ of the VAR innovations $\eta_t$ in equation \eqref{favar2}, we first concentrate on the consistency of $\Omega$. For this, we introduce the following two lemmas that are similarly established by \cite{bai2016estimation}.
\begin{lemma}\label{mat_al}
    For any two compatible matrices $\mathcal{B}$ and $\mathcal{C}$ and the corresponding estimates $\tilde{\mathcal{B}}$ and $\tilde{\mathcal{C}}$, we have
    \begin{align*}
        \tilde{\mathcal{B}} \tilde{\mathcal{C}}^{-1}\tilde{\mathcal{B}}' - \mathcal{B} \mathcal{C}^{-1}\mathcal{B}' = \left(\tilde{\mathcal{B}} - \mathcal{B}\right)\mathcal{C}^{-1}\mathcal{B}' + \mathcal{B}\mathcal{C}^{-1}\left(\tilde{\mathcal{B}} - \mathcal{B}\right)' - \mathcal{B}\mathcal{C}^{-1}\left(\tilde{\mathcal{C}} - \mathcal{C}\right)\mathcal{C}^{-1}\mathcal{B}' + \mathcal{R},
    \end{align*}
    where
    \begin{align*}
        \mathcal{R} = &- \left(\tilde{\mathcal{B}} - \mathcal{B}\right)\tilde{\mathcal{C}}^{-1}\left(\tilde{\mathcal{C}} - \mathcal{C}\right)\mathcal{C}^{-1}\mathcal{B}' + \left(\tilde{\mathcal{B}} - \mathcal{B}\right)\tilde{\mathcal{C}}^{-1}\left(\tilde{\mathcal{B}} - \mathcal{B}\right)'	\\
        &+\mathcal{B}\tilde{\mathcal{C}}^{-1}\left(\tilde{\mathcal{C}} - \mathcal{C}\right)\mathcal{C}^{-1}\left(\tilde{\mathcal{C}} - \mathcal{C}\right)\mathcal{C}^{-1}\mathcal{B}' - \mathcal{B}\tilde{\mathcal{C}}^{-1}\left(\tilde{\mathcal{C}} - \mathcal{C}\right)\mathcal{C}^{-1}\left(\tilde{\mathcal{B}} - \mathcal{B}\right)'.
    \end{align*}
\end{lemma}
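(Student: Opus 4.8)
The identity is purely algebraic and holds for any invertible $\mathcal{C},\tilde{\mathcal{C}}$ of compatible dimensions; no probabilistic or asymptotic input is needed. The plan is therefore to expand the left-hand side by bilinearity and then repeatedly apply the resolvent (difference-of-inverses) identity, peeling off the first-order pieces as the stated main part and collecting everything else into $\mathcal{R}$. First I would introduce the shorthand $\Delta_{\mathcal{B}} = \tilde{\mathcal{B}} - \mathcal{B}$ and $\Delta_{\mathcal{C}} = \tilde{\mathcal{C}} - \mathcal{C}$, and record the two equivalent forms of the resolvent identity that both follow from $\tilde{\mathcal{C}}^{-1} - \mathcal{C}^{-1} = -\tilde{\mathcal{C}}^{-1}(\tilde{\mathcal{C}} - \mathcal{C})\mathcal{C}^{-1}$:
\begin{align*}
\tilde{\mathcal{C}}^{-1} = \mathcal{C}^{-1} - \tilde{\mathcal{C}}^{-1}\Delta_{\mathcal{C}}\mathcal{C}^{-1}, \qquad \tilde{\mathcal{C}}^{-1} = \mathcal{C}^{-1} - \mathcal{C}^{-1}\Delta_{\mathcal{C}}\tilde{\mathcal{C}}^{-1}.
\end{align*}

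Writing $\tilde{\mathcal{B}} = \mathcal{B} + \Delta_{\mathcal{B}}$ on both factors and expanding the quadratic form gives
\begin{align*}
\tilde{\mathcal{B}}\tilde{\mathcal{C}}^{-1}\tilde{\mathcal{B}}' = \mathcal{B}\tilde{\mathcal{C}}^{-1}\mathcal{B}' + \Delta_{\mathcal{B}}\tilde{\mathcal{C}}^{-1}\mathcal{B}' + \mathcal{B}\tilde{\mathcal{C}}^{-1}\Delta_{\mathcal{B}}' + \Delta_{\mathcal{B}}\tilde{\mathcal{C}}^{-1}\Delta_{\mathcal{B}}'.
\end{align*}
The last summand is already second order and is kept verbatim; it is the term $\Delta_{\mathcal{B}}\tilde{\mathcal{C}}^{-1}\Delta_{\mathcal{B}}'$ of $\mathcal{R}$. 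Into each of the two cross terms I would substitute the first form of the resolvent identity, obtaining the first-order pieces $\Delta_{\mathcal{B}}\mathcal{C}^{-1}\mathcal{B}'$ and $\mathcal{B}\mathcal{C}^{-1}\Delta_{\mathcal{B}}'$ together with the remainder pieces $-\Delta_{\mathcal{B}}\tilde{\mathcal{C}}^{-1}\Delta_{\mathcal{C}}\mathcal{C}^{-1}\mathcal{B}'$ and $-\mathcal{B}\tilde{\mathcal{C}}^{-1}\Delta_{\mathcal{C}}\mathcal{C}^{-1}\Delta_{\mathcal{B}}'$, which are exactly the first and fourth terms of the stated $\mathcal{R}$.

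The only summand requiring a second application is the pure term $\mathcal{B}\tilde{\mathcal{C}}^{-1}\mathcal{B}'$: substituting the first form once and then again for the residual $\tilde{\mathcal{C}}^{-1}$ yields the exact second-order expansion
\begin{align*}
\tilde{\mathcal{C}}^{-1} = \mathcal{C}^{-1} - \mathcal{C}^{-1}\Delta_{\mathcal{C}}\mathcal{C}^{-1} + \tilde{\mathcal{C}}^{-1}\Delta_{\mathcal{C}}\mathcal{C}^{-1}\Delta_{\mathcal{C}}\mathcal{C}^{-1},
\end{align*}
so that $\mathcal{B}\tilde{\mathcal{C}}^{-1}\mathcal{B}'$ contributes the first-order piece $-\mathcal{B}\mathcal{C}^{-1}\Delta_{\mathcal{C}}\mathcal{C}^{-1}\mathcal{B}'$ and the remaining quadratic term $\mathcal{B}\tilde{\mathcal{C}}^{-1}\Delta_{\mathcal{C}}\mathcal{C}^{-1}\Delta_{\mathcal{C}}\mathcal{C}^{-1}\mathcal{B}'$ of $\mathcal{R}$. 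Subtracting $\mathcal{B}\mathcal{C}^{-1}\mathcal{B}'$ and grouping the four first-order pieces against the four remainder pieces reproduces the claim verbatim. There is no genuine obstacle: the statement is an exact identity, and the only care required is to use the correct left/right form of the resolvent identity in each cross term so that every surviving $\tilde{\mathcal{C}}^{-1}$ lands in the position demanded by the stated $\mathcal{R}$, and to expand the pure term to second rather than first order.
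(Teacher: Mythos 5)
Your proof is correct and complete. Note that the paper itself does not prove this lemma internally: its ``proof'' is a one-line citation to Lemma B.1 of \cite{bai2016estimation}, so your self-contained derivation is more informative than what appears in the paper, and the expansion you give (write $\tilde{\mathcal{B}} = \mathcal{B} + \Delta_{\mathcal{B}}$, expand the quadratic form by bilinearity, and apply the resolvent identity once in each cross term and twice in the pure term $\mathcal{B}\tilde{\mathcal{C}}^{-1}\mathcal{B}'$) is precisely the standard argument underlying the cited result; there is no shorter route, since the statement is an exact algebraic identity. One small remark: your closing caveat about needing to choose ``the correct left/right form'' of the resolvent identity in each cross term is unnecessary caution. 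The single form $\tilde{\mathcal{C}}^{-1} = \mathcal{C}^{-1} - \tilde{\mathcal{C}}^{-1}\Delta_{\mathcal{C}}\mathcal{C}^{-1}$ suffices everywhere: applied to $\Delta_{\mathcal{B}}\tilde{\mathcal{C}}^{-1}\mathcal{B}'$ it yields the first remainder term $-\Delta_{\mathcal{B}}\tilde{\mathcal{C}}^{-1}\Delta_{\mathcal{C}}\mathcal{C}^{-1}\mathcal{B}'$, applied to $\mathcal{B}\tilde{\mathcal{C}}^{-1}\Delta_{\mathcal{B}}'$ it yields the fourth term $-\mathcal{B}\tilde{\mathcal{C}}^{-1}\Delta_{\mathcal{C}}\mathcal{C}^{-1}\Delta_{\mathcal{B}}'$, and applied twice to $\mathcal{B}\tilde{\mathcal{C}}^{-1}\mathcal{B}'$ it yields both the first-order piece $-\mathcal{B}\mathcal{C}^{-1}\Delta_{\mathcal{C}}\mathcal{C}^{-1}\mathcal{B}'$ and the third remainder term, so every surviving $\tilde{\mathcal{C}}^{-1}$ lands exactly where the stated $\mathcal{R}$ requires without any left/right bookkeeping.
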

\begin{proof}
    See \textit{Lemma B.1.} in \cite{bai2016estimation}.\\
\end{proof}
\begin{lemma}\label{lem_hmh}
    \begin{align*}
        \frac{1}{\bar{T}}\tilde{H}'M_{\tilde{\Phi}}\tilde{H} - \frac{1}{\bar{T}}H'M_{\Phi}H = \mathcal{O}_p\left(\frac{\sqrt{L_N}}{N^{\beta}} \left(\mu_1 + \sqrt{\frac{N(d_T + \mu_2)}{L_N}}\right)\right),
    \end{align*}
    where
    \begin{align*}
        \frac{1}{\bar{T}} \tilde{H}'M_{\tilde{\Phi}}\tilde{H} = \frac{1}{\bar{T}} \sum_{t = \bar{p}}^T\tilde{h}_t\tilde{h}_t' - \left( \frac{1}{\bar{T}} \sum_{t = \bar{p}}^T \tilde{h}_t\tilde{\psi}_t'\right)\left( \frac{1}{\bar{T}} \sum_{t = \bar{p}}^T \tilde{\psi}_t\tilde{\psi}_t'\right)^{-1}\left( \frac{1}{\bar{T}} \sum_{t = \bar{p}}^T \tilde{\psi}_t\tilde{h}_t'\right),
    \end{align*}
    and $\frac{1}{\bar{T}}H'M_{\Phi}H$ is defined similarly.
\end{lemma}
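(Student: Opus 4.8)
The plan is to recognize both objects as projection residuals and reduce their difference to the increments already controlled in Lemma~\ref{lem_phi}, with the matrix identity of Lemma~\ref{mat_al} organizing the bookkeeping. Writing $\mathcal{B} = \frac{1}{\bar{T}}\sum_{t = \bar{p}}^T h_t \psi_t'$, $\mathcal{C} = \frac{1}{\bar{T}}\sum_{t = \bar{p}}^T \psi_t \psi_t'$ and their estimated counterparts $\tilde{\mathcal{B}} = \frac{1}{\bar{T}}\sum_{t = \bar{p}}^T \tilde{h}_t \tilde{\psi}_t'$, $\tilde{\mathcal{C}} = \frac{1}{\bar{T}}\sum_{t = \bar{p}}^T \tilde{\psi}_t \tilde{\psi}_t'$, the two quantities read $\frac{1}{\bar{T}}\tilde{H}'M_{\tilde{\Phi}}\tilde{H} = \frac{1}{\bar{T}}\sum_{t = \bar{p}}^T \tilde{h}_t\tilde{h}_t' - \tilde{\mathcal{B}}\tilde{\mathcal{C}}^{-1}\tilde{\mathcal{B}}'$ and $\frac{1}{\bar{T}}H'M_{\Phi}H = \frac{1}{\bar{T}}\sum_{t = \bar{p}}^T h_t h_t' - \mathcal{B}\mathcal{C}^{-1}\mathcal{B}'$. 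Subtracting, I would split the target into the ``level'' increment $\frac{1}{\bar{T}}\sum_{t = \bar{p}}^T(\tilde{h}_t\tilde{h}_t' - h_t h_t')$ and the ``projection'' increment $\tilde{\mathcal{B}}\tilde{\mathcal{C}}^{-1}\tilde{\mathcal{B}}' - \mathcal{B}\mathcal{C}^{-1}\mathcal{B}'$, and bound each at the rate $\frac{\sqrt{L_N}}{N^{\beta}}\left(\mu_1 + \sqrt{\frac{N(d_T + \mu_2)}{L_N}}\right)$.

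The level increment is immediate: it is exactly the case $q = s = 0$ of part~\ref{h_a} of Lemma~\ref{lem_phi}, hence of the required order. For the projection increment I would invoke Lemma~\ref{mat_al} with $(\mathcal{B}, \mathcal{C})$ as above. Its three leading terms are $(\tilde{\mathcal{B}} - \mathcal{B})\mathcal{C}^{-1}\mathcal{B}'$, $\mathcal{B}\mathcal{C}^{-1}(\tilde{\mathcal{B}} - \mathcal{B})'$ and $-\mathcal{B}\mathcal{C}^{-1}(\tilde{\mathcal{C}} - \mathcal{C})\mathcal{C}^{-1}\mathcal{B}'$. The key observation is that $\tilde{\mathcal{B}} - \mathcal{B}$ and $\tilde{\mathcal{C}} - \mathcal{C}$ are block matrices whose blocks are precisely the differences $\frac{1}{\bar{T}}\sum_{t = \bar{p}}^T(\tilde{h}_{t-q}\tilde{h}_{t-s}' - h_{t-q}h_{t-s}')$, so by part~\ref{h_a} of Lemma~\ref{lem_phi} each is $\mathcal{O}_p$ at the target rate. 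Meanwhile $\mathcal{B}$ and $\mathcal{C}^{-1}$ are $\mathcal{O}_p(1)$: under Assumption~\ref{var_assm} the factor process $h_t$ is a stable, stationary VAR and, by the stationarity and mixing conditions of Assumption~\ref{data_assum}, the sample second moments converge, $\mathcal{B} \overset{p}{\to} \mathbb{E}[h_t\psi_t']$ and $\mathcal{C} \overset{p}{\to} \mathbb{E}[\psi_t\psi_t']$, the latter being positive definite and thus invertible in the limit. Multiplying a rate-$\mathcal{O}_p$ factor by $\mathcal{O}_p(1)$ factors leaves each leading term at the target rate.

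It then remains to show the remainder $\mathcal{R}$ of Lemma~\ref{mat_al} is negligible. Every summand of $\mathcal{R}$ contains at least two increment factors drawn from $\{\tilde{\mathcal{B}} - \mathcal{B}, \tilde{\mathcal{C}} - \mathcal{C}\}$, so it is of smaller order than the leading terms (a product of two $o_p(1)$ quantities), provided the sandwiched inverse $\tilde{\mathcal{C}}^{-1}$ stays $\mathcal{O}_p(1)$. This last point is the main obstacle and the only place requiring genuine care: I would establish it by a perturbation argument, noting that $\tilde{\mathcal{C}} - \mathcal{C} = o_p(1)$ together with $\pi_{\min}(\mathcal{C})$ bounded away from zero (from positive definiteness of $\mathbb{E}[\psi_t\psi_t']$) forces $\pi_{\min}(\tilde{\mathcal{C}})$ to be bounded away from zero with probability tending to one, whence $\tilde{\mathcal{C}}^{-1} = \mathcal{O}_p(1)$. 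Collecting the level increment, the three leading projection terms and the negligible remainder then yields the stated bound.
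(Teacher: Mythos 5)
Your proposal is correct and takes essentially the same route as the paper: the paper's proof likewise bounds the three sample second-moment increments (for $\tilde{h}_t\tilde{h}_t'$, $\tilde{h}_t\tilde{\psi}_t'$ and $\tilde{\psi}_t\tilde{\psi}_t'$) via part \ref{h_a} of Lemma \ref{lem_phi} and then concludes through the matrix identity of Lemma \ref{mat_al}. The only difference is that you spell out the bookkeeping the paper leaves implicit, namely the $\mathcal{O}_p(1)$ bounds on $\mathcal{B}$, $\mathcal{C}^{-1}$ and $\tilde{\mathcal{C}}^{-1}$ and the negligibility of the remainder $\mathcal{R}$.
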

\begin{proof}
    The proof is conducted as in \cite{bai2016estimation}. Hereby, we consider the following three expressions that are bounded using \lemref{lem_phi} \ref{h_a}.
    \begin{align*}
        \frac{1}{\bar{T}} \sum_{t = \bar{p}}^T \tilde{h}_t\tilde{h}_t' - \frac{1}{\bar{T}} \sum_{t = \bar{p}}^T h_t h_t' &= \mathcal{O}_p\left(\frac{\sqrt{L_N}}{N^{\beta}} \left(\mu_1 + \sqrt{\frac{N(d_T + \mu_2)}{L_N}}\right)\right), \\
        \frac{1}{\bar{T}} \sum_{t = \bar{p}}^T \tilde{h}_t\tilde{\psi}_t' - \frac{1}{\bar{T}} \sum_{t = \bar{p}}^T h_t\psi_t' &=	\mathcal{O}_p\left(\frac{\sqrt{L_N}}{N^{\beta}} \left(\mu_1 + \sqrt{\frac{N(d_T + \mu_2)}{L_N}}\right)\right), \\
        \frac{1}{\bar{T}} \sum_{t = \bar{p}}^T \tilde{\psi}_t\tilde{\psi}_t' - \frac{1}{\bar{T}} \sum_{t = \bar{p}}^T \psi_t\psi_t' &= \mathcal{O}_p\left(\frac{\sqrt{L_N}}{N^{\beta}} \left(\mu_1 + \sqrt{\frac{N(d_T + \mu_2)}{L_N}}\right)\right).
    \end{align*}
    The result follows based on the above expressions and \lemref{mat_al}.
    
\end{proof}

The covariance matrix estimator of the innovations $\eta_t$ can be bounded according to the following lemma.
\begin{lemma}\label{lem_omega}
    \begin{align*}
        \frob{\tilde{\Omega} - \Omega} = \mathcal{O}_p\left(\frac{\sqrt{L_N}}{N^{\beta}} \left(\mu_1 + \sqrt{\frac{N(d_T + \mu_2)}{L_N}}\right)\right).
    \end{align*}
\end{lemma}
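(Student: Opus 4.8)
The plan is to recognise $\tilde{\Omega}$ as exactly the projection-residual quantity already controlled in Lemma~\ref{lem_hmh}, and then to split the error into the part induced by factor estimation and the ordinary sampling error of a VAR residual covariance. By construction in Step~8 of the estimation procedure, $\tilde{\Omega} = \frac{1}{\bar{T}}\sum_{t=\bar{p}}^T \tilde{\eta}_t \tilde{\eta}_t'$ with $\tilde{\eta}_t = \tilde{h}_t - \tilde{\Phi}\tilde{\psi}_t$ the least-squares residuals from regressing $\tilde{h}_t$ on $\tilde{\psi}_t$. Since $\tilde{\Phi} = \big(\sum_t \tilde{h}_t \tilde{\psi}_t'\big)\big(\sum_t \tilde{\psi}_t \tilde{\psi}_t'\big)^{-1}$, this residual covariance coincides with the projection form $\frac{1}{\bar{T}}\tilde{H}'M_{\tilde{\Phi}}\tilde{H}$ introduced in Lemma~\ref{lem_hmh}. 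I would introduce the infeasible analogue $\Omega^{\dagger} = \frac{1}{\bar{T}}H'M_{\Phi}H$ built from the true factors and the true regressors $\psi_t$, and decompose
\[
\tilde{\Omega} - \Omega = \Big(\tfrac{1}{\bar{T}}\tilde{H}'M_{\tilde{\Phi}}\tilde{H} - \tfrac{1}{\bar{T}}H'M_{\Phi}H\Big) + \big(\Omega^{\dagger} - \Omega\big).
\]

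First I would bound the first bracket, which is precisely the quantity estimated in Lemma~\ref{lem_hmh}: it gives $\frob{\tfrac{1}{\bar{T}}\tilde{H}'M_{\tilde{\Phi}}\tilde{H} - \tfrac{1}{\bar{T}}H'M_{\Phi}H} = \mathcal{O}_p\big(\tfrac{\sqrt{L_N}}{N^{\beta}}(\mu_1 + \sqrt{N(d_T+\mu_2)/L_N})\big)$, exactly the rate claimed in the lemma. Thus the entire factor-estimation contribution is already in hand, and no new computation is needed beyond invoking that result, which in turn rests on the three bounds of Lemma~\ref{lem_phi}.

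Second I would dispatch the infeasible sampling term. Substituting $h_t = \Phi\psi_t + \eta_t$ into $\Omega^{\dagger}$, the $\Phi$-dependent pieces cancel against the projection, leaving
\[
\Omega^{\dagger} = \frac{1}{\bar{T}}\sum_{t=\bar{p}}^T \eta_t\eta_t' - \Big(\frac{1}{\bar{T}}\sum_{t=\bar{p}}^T \eta_t\psi_t'\Big)\Big(\frac{1}{\bar{T}}\sum_{t=\bar{p}}^T \psi_t\psi_t'\Big)^{-1}\Big(\frac{1}{\bar{T}}\sum_{t=\bar{p}}^T \psi_t\eta_t'\Big).
\]
Under condition~\ref{var_assm} of Assumption~\ref{data_assum} ($\eta_t$ iid with $\E{\eta_t}=0$ and predetermined with respect to $\psi_t$), together with the strong-mixing and exponential-tail conditions~\ref{ass_sm} and~\ref{ass_exp} and the stationarity of $h_t$, a law of large numbers yields $\frac{1}{\bar{T}}\sum_t\eta_t\eta_t' - \Omega = \mathcal{O}_p(T^{-1/2})$ and each cross term $\frac{1}{\bar{T}}\sum_t\eta_t\psi_t' = \mathcal{O}_p(T^{-1/2})$, whence $\frob{\Omega^{\dagger} - \Omega} = \mathcal{O}_p(T^{-1/2})$. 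The final step is then a triangle inequality combining the two brackets.

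The genuinely routine parts are the algebraic identification of $\tilde{\Omega}$ with the projection form and the law of large numbers for $\Omega^{\dagger}-\Omega$; the real content has already been discharged in Lemmas~\ref{lem_phi} and~\ref{lem_hmh}. The point that needs care, and the main obstacle, is confirming that the ordinary $\mathcal{O}_p(T^{-1/2})$ sampling error of the VAR residual covariance is absorbed into the factor-estimation rate $\frac{\sqrt{L_N}}{N^{\beta}}(\mu_1 + \sqrt{N(d_T+\mu_2)/L_N})$ rather than dominating it; this comparison is where the joint-limit restriction linking $T$, $N$, $\beta$ and the penalties $\mu_1,\mu_2$ is used, and it is the only step in which the weak-factor exponent $\beta$ must be tracked explicitly.
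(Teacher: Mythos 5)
Your proposal follows the same route as the paper's own proof: recognize $\tilde{\Omega}$ as the projection form $\frac{1}{\bar{T}}\tilde{H}'M_{\tilde{\Phi}}\tilde{H}$ and invoke Lemma~\ref{lem_hmh}. The one genuine difference is your explicit decomposition $\tilde{\Omega}-\Omega=(\tilde{\Omega}-\Omega^{\dagger})+(\Omega^{\dagger}-\Omega)$ with $\Omega^{\dagger}=\frac{1}{\bar{T}}H'M_{\Phi}H$. The paper instead writes $\tilde{\Omega}-\Omega$ directly as the difference of the two projection forms, i.e.\ it silently identifies the population covariance $\Omega=\E{\eta_t\eta_t'}$ with the infeasible sample residual covariance $\Omega^{\dagger}$, and the $\mathcal{O}_p(T^{-1/2})$ sampling error that you isolate never appears in its argument. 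So your version is, if anything, more complete: the term you flag as ``the point that needs care'' is precisely the term the paper drops. Your caution is also substantively warranted. Under the conditions in force ($\frac{N\log N}{T}=o(1)$, $1/2\leq\beta\leq 1$), the penalty-free part of the claimed rate behaves like $\sqrt{\log N}/N^{\beta}$ (since $Nd_T\asymp \beta\log N$ once $N^{1-\beta}\log N/T=o(1)$), so absorbing the $T^{-1/2}$ term requires $N^{2\beta}\lesssim T\log N$; this is automatic for $\beta$ near $1/2$ but is an additional restriction on the relative growth of $T$ and $N$ when $\beta$ is close to one, and neither your sketch nor the paper verifies it. Modulo stating that restriction (or writing the bound as the maximum of the two rates), your argument is sound, and the law-of-large-numbers step for $\Omega^{\dagger}-\Omega$ under Assumption~\ref{data_assum}(vi) is indeed routine.
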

\begin{proof}
    The estimator of $\tilde{\eta}_t$ is defined as
    \begin{align*}
        \tilde{\Omega} = \frac{1}{\bar{T}} \sum_{t = \bar{p}}^T \tilde{\eta}_t\tilde{\eta}_t',
    \end{align*}
    where $\tilde{\eta}_t$ are the residuals of the regression $\tilde{h}_t = \Phi_1 \tilde{h}_{t-1} + \cdots + \Phi_p \tilde{h}_{t-p} + error$. Hence,
    \begin{align*}
        \tilde{\eta}_t = \tilde{h}_t - \left(\sum_{t = \bar{p}}^T  \tilde{h}_t \tilde{\psi}_t'\right)\left(\sum_{t = \bar{p}}^T  \tilde{\psi}_t \tilde{\psi}_t'\right)^{-1} \tilde{\psi}_t,
    \end{align*}
    where $\tilde{\psi}_t = \left(\tilde{h}_{t-1}', \dots, \tilde{h}_{t-p}'\right)'$. Based on the previous result we have
    \begin{align}
        \tilde{\Omega} = \frac{1}{\bar{T}} \sum_{t = \bar{p}}^T \tilde{h}_t\tilde{h}_t' - \left(\frac{1}{\bar{T}} \sum_{t = \bar{p}}^T  \tilde{h}_t \tilde{\psi}_t'\right)\left(\frac{1}{\bar{T}} \sum_{t = \bar{p}}^T  \tilde{\psi}_t \tilde{\psi}_t'\right)^{-1}\left(\frac{1}{\bar{T}} \sum_{t = \bar{p}}^T  \tilde{\psi}_t \tilde{h}_t'\right).\label{omeg_est}
    \end{align} 
    Hence, given \eqref{omeg_est} we obtain
    \begin{align}\label{omega_est1}
        \begin{split}
            \tilde{\Omega} - \Omega &= \frac{1}{\bar{T}} \sum_{t = \bar{p}}^T \tilde{h}_t\tilde{h}_t' - \left(\frac{1}{\bar{T}} \sum_{t = \bar{p}}^T  \tilde{h}_t \tilde{\psi}_t'\right)\left(\frac{1}{\bar{T}} \sum_{t = \bar{p}}^T  \tilde{\psi}_t \tilde{\psi}_t'\right)^{-1}\left(\frac{1}{\bar{T}} \sum_{t = \bar{p}}^T  \tilde{\psi}_t \tilde{h}_t'\right) \\
            &- \frac{1}{\bar{T}} \sum_{t = \bar{p}}^T h_t h_t' + \left(\frac{1}{\bar{T}} \sum_{t = \bar{p}}^T  h_t \psi_t'\right)\left(\frac{1}{\bar{T}} \sum_{t = \bar{p}}^T  \psi_t \psi_t'\right)^{-1}\left(\frac{1}{\bar{T}} \sum_{t = \bar{p}}^T  \psi_t h_t'\right).
        \end{split}
    \end{align}
    The expression \eqref{omega_est1} is bounded by \lemref{lem_hmh}. These results yield
    \begin{align*}
        \frob{\tilde{\Omega} - \Omega} = \mathcal{O}_p\left(\frac{\sqrt{L_N}}{N^{\beta}} \left(\mu_1 + \sqrt{\frac{N(d_T + \mu_2)}{L_N}}\right)\right).
    \end{align*} 
\end{proof}

The estimated rotation matrix $\tilde{A} = \left[\begin{array}{cc}
I_{r_1} &  -\tilde{\Omega}_{fg}\tilde{\Omega}_{gg}^{-1}	\\
0 & I_{r_2} 
\end{array} \right]$ can be bounded using the following lemma.
\begin{lemma}\label{lem_a21}
    \begin{align*}
        \frob{\tilde{A}_{21} - A_{21}} =\mathcal{O}_p\left(\frac{\sqrt{L_N}}{N^{\beta}} \left(\mu_1 + \sqrt{\frac{N(d_T + \mu_2)}{L_N}}\right)\right), 
    \end{align*}
    where $\tilde{A}_{21} = - \tilde{\Omega}_{fg}\tilde{\Omega}_{gg}^{-1}$ and $A_{21} = - \Omega_{fg}\Omega_{gg}^{-1}$.
\end{lemma}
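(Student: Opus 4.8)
The plan is to reduce the claim to the convergence of the full innovation covariance estimator already obtained in Lemma \ref{lem_omega}, combined with a standard first-order perturbation argument for the matrix product $\tilde{\Omega}_{fg}\tilde{\Omega}_{gg}^{-1}$. First I would observe that $\Omega_{fg}$ and $\Omega_{gg}$ are sub-blocks of $\Omega$, so the blockwise estimation errors are dominated by the error of the whole matrix: writing $\Delta_{fg} = \tilde{\Omega}_{fg} - \Omega_{fg}$ and $\Delta_{gg} = \tilde{\Omega}_{gg} - \Omega_{gg}$, we have $\frob{\Delta_{fg}} \leq \frob{\tilde{\Omega} - \Omega}$ and $\frob{\Delta_{gg}} \leq \frob{\tilde{\Omega} - \Omega}$, each of order $\mathcal{O}_p\!\left(\frac{\sqrt{L_N}}{N^{\beta}} \left(\mu_1 + \sqrt{\frac{N(d_T + \mu_2)}{L_N}}\right)\right)$ by Lemma \ref{lem_omega}.

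Next, using the resolvent identity $\tilde{\Omega}_{gg}^{-1} - \Omega_{gg}^{-1} = -\tilde{\Omega}_{gg}^{-1}\Delta_{gg}\Omega_{gg}^{-1}$, I would decompose the target difference as
\begin{equation*}
\tilde{A}_{21} - A_{21} = -\left(\Delta_{fg}\tilde{\Omega}_{gg}^{-1} - \Omega_{fg}\tilde{\Omega}_{gg}^{-1}\Delta_{gg}\Omega_{gg}^{-1}\right),
\end{equation*}
which is the two-block analogue of the expansion in Lemma \ref{mat_al}. Taking Frobenius norms and applying submultiplicativity $\frob{AB} \leq \spec{A}\frob{B}$ gives
\begin{equation*}
\frob{\tilde{A}_{21} - A_{21}} \leq \spec{\tilde{\Omega}_{gg}^{-1}}\frob{\Delta_{fg}} + \spec{\Omega_{fg}}\,\spec{\tilde{\Omega}_{gg}^{-1}}\,\spec{\Omega_{gg}^{-1}}\,\frob{\Delta_{gg}}.
\end{equation*}
I would then argue that every factor other than the $\Delta$-terms is $\mathcal{O}_p(1)$: since $r_1,r_2$ are fixed and $\Omega$ is a fixed positive-definite matrix (a nondegeneracy condition implicit in Assumption \ref{var_assm}), the fixed-dimensional block $\Omega_{fg}$ has bounded spectral norm and $\Omega_{gg}^{-1}$ is bounded. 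Substituting the blockwise rates from the first step then yields the stated bound.

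The main obstacle is establishing $\spec{\tilde{\Omega}_{gg}^{-1}} = \mathcal{O}_p(1)$, i.e.\ that the estimated lower-right block remains uniformly invertible. I would treat this by a perturbation argument on eigenvalues: as a principal submatrix of the positive-definite $\Omega$, the block $\Omega_{gg}$ has smallest eigenvalue bounded below by some constant $c > 0$, and by Weyl's inequality $\pi_{\min}(\tilde{\Omega}_{gg}) \geq \pi_{\min}(\Omega_{gg}) - \spec{\Delta_{gg}} \geq c - o_p(1)$. Hence $\pi_{\min}(\tilde{\Omega}_{gg})$ is bounded away from zero with probability approaching one, so that $\spec{\tilde{\Omega}_{gg}^{-1}} = 1/\pi_{\min}(\tilde{\Omega}_{gg}) = \mathcal{O}_p(1)$. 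This is the only nonroutine ingredient; the remaining steps are bookkeeping with norm inequalities and the rate already delivered by Lemma \ref{lem_omega}.
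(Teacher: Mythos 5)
Your proposal is correct, and it follows the same basic strategy as the paper: a first-order perturbation decomposition of $\tilde{\Omega}_{fg}\tilde{\Omega}_{gg}^{-1} - \Omega_{fg}\Omega_{gg}^{-1}$ whose error terms are then controlled by the rate for $\frob{\tilde{\Omega} - \Omega}$ from Lemma \ref{lem_omega}. The differences are worth recording, because your version is actually tighter on the one genuinely delicate point. The paper expands the difference into three terms that each involve the factor $\tilde{\Omega}_{gg}^{-1} - \Omega_{gg}^{-1}$, and then bounds $\frob{\tilde{\Omega}_{gg}^{-1} - \Omega_{gg}^{-1}}$ by appealing directly to Lemma \ref{lem_omega} --- but that lemma only controls $\frob{\tilde{\Omega} - \Omega}$, not differences of inverses, so a justification is silently skipped there (the paper's displayed inequality also carries spurious minus signs where the triangle inequality gives sums). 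You close exactly this gap: the resolvent identity $\tilde{\Omega}_{gg}^{-1} - \Omega_{gg}^{-1} = -\tilde{\Omega}_{gg}^{-1}\left(\tilde{\Omega}_{gg} - \Omega_{gg}\right)\Omega_{gg}^{-1}$ converts the inverse difference into the block difference, and your Weyl-inequality argument, $\pi_{\min}(\tilde{\Omega}_{gg}) \geq \pi_{\min}(\Omega_{gg}) - \spec{\tilde{\Omega}_{gg} - \Omega_{gg}} \geq c - o_p(1)$, delivers the needed $\spec{\tilde{\Omega}_{gg}^{-1}} = \mathcal{O}_p(1)$, which is the implicit premise of the paper's bound. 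Your observation that blockwise Frobenius errors are dominated by the full-matrix error is also the correct (if unstated in the paper) way to pass from Lemma \ref{lem_omega} to the sub-blocks. In short: same route, but your write-up supplies the uniform-invertibility step the paper takes for granted, and since $r_1, r_2$ are fixed so that all norms here are equivalent, every step of your bookkeeping goes through.
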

\begin{proof}
    Given the definitions of $\tilde{A}_{21}$ and $A_{21}$, we obtain
    \begin{align}
        \tilde{A}_{21} - A_{21} &= \Omega_{fg}\Omega_{gg}^{-1} - \tilde{\Omega}_{fg}\tilde{\Omega}_{gg}^{-1}	\notag\\
        &= - \left(\tilde{\Omega}_{fg} - \Omega_{fg}\right)\left(\tilde{\Omega}_{gg}^{-1} - \Omega_{gg}^{-1}\right) - \Omega_{fg} \left(\tilde{\Omega}_{gg}^{-1} - \Omega_{gg}^{-1}\right) - \left(\tilde{\Omega}_{fg} - \Omega_{fg}\right) \Omega_{gg}^{-1}\label{a21_1}.
    \end{align}
    Based on \lemref{lem_omega}, we can upper bound the expression \eqref{a21_1} by
    \begin{align*}
        \frob{\tilde{A}_{21} - A_{21}} &\leq - \frob{\tilde{\Omega}_{fg} - \Omega_{fg}}\frob{\tilde{\Omega}_{gg}^{-1} - \Omega_{gg}^{-1}} - \spec{\Omega_{fg}} \frob{\tilde{\Omega}_{gg}^{-1} - \Omega_{gg}^{-1}} - \frob{\tilde{\Omega}_{fg} - \Omega_{fg}} \spec{\Omega_{gg}^{-1}}    \\
        &= \mathcal{O}_p\left(\frac{\sqrt{L_N}}{N^{\beta}} \left(\mu_1 + \sqrt{\frac{N(d_T + \mu_2)}{L_N}}\right)\right).
    \end{align*}
\end{proof}

\begin{lemma}
    \begin{align*}
        \max_{i \leq N}\spec{\hat{\lambda}_i^g - \lambda_i^{*g}} = \mathcal{O}_p\left(\sqrt{\mu_1} + \sqrt{\mu_2} + \sqrt{d_T}\right).
    \end{align*}
\end{lemma}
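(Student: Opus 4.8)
The plan is to reduce the rotated loadings to quantities already controlled by Lemmas~\ref{lem_est_load} and~\ref{lem_a21}. Under scheme \textit{IRa} the rotated loadings of the observed factors are $\hat{\Lambda}^{g} = \tilde{\Lambda}^{g} + \tilde{\Lambda}^{f}\tilde{\Omega}_{fg}\tilde{\Omega}^{-1}_{gg}$ with population target $\Lambda^{*g} = \Lambda^{g} + \Lambda^{f}\Omega_{fg}\Omega^{-1}_{gg}$. Writing $P = \Omega_{fg}\Omega^{-1}_{gg}$ and $\tilde{P} = \tilde{\Omega}_{fg}\tilde{\Omega}^{-1}_{gg}$, the $i$-th row admits the decomposition
\begin{align*}
\hat{\lambda}_i^g - \lambda_i^{*g} = \left(\tilde{\lambda}_i^g - \lambda_i^g\right) + \left(\tilde{\lambda}_i^f - \lambda_i^f\right)\tilde{P} + \lambda_i^f\left(\tilde{P} - P\right),
\end{align*}
so it suffices to bound the three terms uniformly in $i$ via the triangle inequality and submultiplicativity of the spectral norm.

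First I would observe that the combined bound in Lemma~\ref{lem_est_load} simplifies considerably: since $L_N = \mathcal{O}(N)$ the ratio $N/L_N$ stays bounded, and since $\mu_1,\mu_2 = o(1)$ we have $\max(\mu_1,\mu_2) = \mathcal{O}(\sqrt{\mu_1}+\sqrt{\mu_2})$ together with $\sqrt{\max(\mu_1,\mu_2)} \leq \sqrt{\mu_1}+\sqrt{\mu_2}$, so that
\begin{align*}
\max_{i \leq N}\spec{\tilde{\lambda}_i - \lambda_i} = \mathcal{O}_p\left(\sqrt{\mu_1}+\sqrt{\mu_2}+\sqrt{d_T}\right).
\end{align*}
As $\tilde{\lambda}_i = [\tilde{\lambda}_i^f \; \tilde{\lambda}_i^g]$, both blocks inherit this rate, which disposes of the first term and of the factor $\spec{\tilde{\lambda}_i^f - \lambda_i^f}$ in the second. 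For the remaining factors I would note that $\tilde{P} - P = -(\tilde{A}_{21} - A_{21}) = o_p(1)$ by Lemma~\ref{lem_a21} while $P$ is fixed and bounded, hence $\spec{\tilde{P}} = \mathcal{O}_p(1)$; and that condition~\ref{assum_sig} in Assumption~\ref{data_assum} bounds the entries of $\Lambda$ by $c_3$ with $r$ fixed, so $\max_i\spec{\lambda_i^f} = \mathcal{O}(1)$. The second and third terms are therefore $\mathcal{O}_p(\sqrt{\mu_1}+\sqrt{\mu_2}+\sqrt{d_T})$ and $\mathcal{O}_p(\frob{\tilde{A}_{21}-A_{21}})$, respectively.

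The only step requiring care is checking that Lemma~\ref{lem_a21} delivers the target order rather than something larger: it gives $\frob{\tilde{A}_{21}-A_{21}} = \mathcal{O}_p\big(\frac{\sqrt{L_N}}{N^{\beta}}(\mu_1 + \sqrt{N(d_T+\mu_2)/L_N})\big)$, where $L_N = \mathcal{O}(N)$ forces $\sqrt{L_N}/N^{\beta} = \mathcal{O}(N^{1/2-\beta})$. This prefactor is $\mathcal{O}(1)$ exactly when $\beta \geq 1/2$, and the parenthetical factor is again $\mathcal{O}_p(\sqrt{\mu_1}+\sqrt{\mu_2}+\sqrt{d_T})$ by the same simplification as above, so the third term matches the others. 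Collecting the three contributions through the triangle inequality yields the claim. I anticipate no substantive obstacle: the analytic content is entirely contained in Lemmas~\ref{lem_est_load} and~\ref{lem_a21}, and the role of the weak-factor restriction $\beta \geq 1/2$ is precisely to keep the rotation-error prefactor $N^{1/2-\beta}$ from diverging.
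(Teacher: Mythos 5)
Your proposal is correct and follows essentially the same route as the paper: the same row-wise decomposition of $\hat{\lambda}_i^g - \lambda_i^{*g}$ (merely grouping $(\tilde{\lambda}_i^f - \lambda_i^f)\tilde{P}$ as one term rather than splitting it through $A_{21}$), with the same two inputs, Lemma~\ref{lem_est_load} for the loadings and Lemma~\ref{lem_a21} for the rotation error, followed by the same $N/L_N$ and $\mu \leq \sqrt{\mu}$ simplifications. If anything, you are slightly more careful than the paper in making explicit that the prefactor $\sqrt{L_N}/N^{\beta} = \mathcal{O}(N^{1/2-\beta})$ stays bounded precisely because $\beta \geq 1/2$, a point the paper's proof leaves implicit.
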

\begin{proof}
    The rotated factor loadings of the observed factors are defined as $\hat{\lambda}_i^g = \tilde{\lambda}_i^g - \tilde{\lambda}_i^f\tilde{A}_{21}$ and $\lambda_i^{*g} = \lambda_i^g - \lambda_i^f A_{21}$. Hence, their difference can be expressed as
    \begin{align}
        \hat{\lambda}_i^g - \lambda_i^{*g} = \tilde{\lambda}_i^g - \lambda_i^g - \left(\tilde{\lambda}_i^f - \lambda_i^f\right)A_{21} - \lambda_i^f \left(\tilde{A}_{21} - A_{21}\right) - \left(\tilde{\lambda}_i^f - \lambda_i^f\right) \left(\tilde{A}_{21} - A_{21}\right).
    \end{align}
    The above expression can be bounded using \lemref{lem_est_load} and \lemref{lem_a21} according to
    \begin{align*}
        \max_{i \leq N}\spec{\hat{\lambda}_i^g - \lambda_i^{*g}} &\leq \max_{i \leq N}\spec{\tilde{\lambda}_i^g - \lambda_i^g } - \max_{i \leq N}\spec{\tilde{\lambda}_i^f - \lambda_i^f}\frob{A_{21}} \\
        &- \max_{i \leq N}\spec{\lambda_i^f} \frob{\tilde{A}_{21} - A_{21}} - \max_{i \leq N}\spec{\tilde{\lambda}_i^f - \lambda_i^f} \frob{\tilde{A}_{21} - A_{21}}\\
        &= \mathcal{O}_p\left(\mu_1 + \mu_2 + \sqrt{\frac{N (d_T + \mu_1)}{L_N}} + \sqrt{\frac{N (d_T + \mu_2)}{L_N}}\right)    \\
        &\leq \mathcal{O}_p\left(\sqrt{\mu_1} + \sqrt{\mu_2} + \sqrt{d_T}\right). 
    \end{align*}
\end{proof}

\begin{lemma}
    \begin{align*}
        \spec{\hat{f}_t - f_t^*} = \mathcal{O}_p\left(N^{-\beta/2}\right) + \mathcal{O}_p\left(\frac{\mu_1 \sqrt{L_N}}{N^{\beta}}+\sqrt{\frac{N(d_T + \mu_2)}{N^{2\beta}}}\right).
    \end{align*}
\end{lemma}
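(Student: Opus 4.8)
The plan is to exploit the additive structure of the rotation under scheme \textit{IRa}. The rotated latent factor and its estimator satisfy $f_t^* = f_t + A_{21} g_t$ and $\hat{f}_t = \tilde{f}_t + \tilde{A}_{21} g_t$, with $A_{21} = -\Omega_{fg}\Omega_{gg}^{-1}$ and $\tilde{A}_{21} = -\tilde{\Omega}_{fg}\tilde{\Omega}_{gg}^{-1}$ as in \lemref{lem_a21} (the $+\tilde{A}_{21} g_t$ appears because $\hat{h}_t = \tilde{A}\tilde{h}_t$). Subtracting yields the clean decomposition
\begin{align*}
\hat{f}_t - f_t^* = \left(\tilde{f}_t - f_t\right) + \left(\tilde{A}_{21} - A_{21}\right) g_t,
\end{align*}
so that by the triangle inequality and submultiplicativity of the Euclidean/spectral norm,
\begin{align*}
\spec{\hat{f}_t - f_t^*} \leq \spec{\tilde{f}_t - f_t} + \frob{\tilde{A}_{21} - A_{21}}\spec{g_t}.
\end{align*}

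First I would bound the leading term using the pointwise rate already established inside the proof of \lemref{lem_est_factor}, namely $\spec{\tilde{f}_t - f_t} = \mathcal{O}_p\left(N^{-\beta/2}\right) + \mathcal{O}_p\left(\frac{\mu_1 \sqrt{L_N}}{N^{\beta}}+\sqrt{\frac{N(d_T + \mu_2)}{N^{2\beta}}}\right)$. This already supplies both groups of terms appearing in the claimed rate, so nothing further is needed from it.

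Second I would control the rotation-error term. Since $g_t$ is a fixed-dimensional, strictly stationary vector with exponential-type tails by the conditions in Assumption \ref{data_assum}, its norm is $\spec{g_t} = \mathcal{O}_p(1)$, while \lemref{lem_a21} gives $\frob{\tilde{A}_{21} - A_{21}} = \mathcal{O}_p\left(\frac{\sqrt{L_N}}{N^{\beta}}\left(\mu_1 + \sqrt{\frac{N(d_T+\mu_2)}{L_N}}\right)\right)$. The only genuine computation is to expand this product: distributing $\frac{\sqrt{L_N}}{N^{\beta}}$ produces a first piece $\frac{\mu_1\sqrt{L_N}}{N^{\beta}}$ and a second piece $\frac{\sqrt{L_N}}{N^{\beta}}\sqrt{N(d_T+\mu_2)/L_N}$ in which the $\sqrt{L_N}$ factors cancel, leaving $\sqrt{N(d_T+\mu_2)/N^{2\beta}}$. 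Hence the second summand contributes exactly $\mathcal{O}_p\left(\frac{\mu_1\sqrt{L_N}}{N^{\beta}} + \sqrt{\frac{N(d_T+\mu_2)}{N^{2\beta}}}\right)$, coinciding with the second group already present in the bound on $\spec{\tilde{f}_t - f_t}$.

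Combining the two displays, the rotation error is absorbed into the rate of the unrotated factor estimator and the result follows. There is essentially no hard step here: the argument is short bookkeeping assembling \lemref{lem_est_factor} and \lemref{lem_a21}. The one place warranting care is tracking the sign and placement of the $(1,2)$ rotation block — the factors pick up $+\tilde{A}_{21} g_t$, whereas the observed loadings in the preceding lemma pick up $-\tilde{\lambda}^f_i \tilde{A}_{21}$, because $\hat{h}_t = \tilde{A}\tilde{h}_t$ while $\hat{\Lambda} = \tilde{\Lambda}\tilde{A}^{-1}$ — together with verifying that the $\sqrt{L_N}$ cancellation leaves a term of exactly the stated order rather than a strictly larger one.
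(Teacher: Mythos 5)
Your proposal is correct and follows essentially the same route as the paper: the identical decomposition $\hat{f}_t - f_t^* = (\tilde{f}_t - f_t) + (\tilde{A}_{21} - A_{21})g_t$, the triangle inequality, the pointwise rate $\spec{\tilde{f}_t - f_t} = \mathcal{O}_p\left(N^{-\beta/2}\right) + \mathcal{O}_p\left(\frac{\mu_1\sqrt{L_N}}{N^{\beta}} + \sqrt{\frac{N(d_T+\mu_2)}{N^{2\beta}}}\right)$ established inside the proof of Lemma \ref{lem_est_factor}, and the bound $\frob{\tilde{A}_{21}-A_{21}}$ from Lemma \ref{lem_a21}. Your write-up is in fact slightly more careful than the paper's, which omits the norm on $g_t$ (writing $\frob{\tilde{A}_{21}-A_{21}}\,g_t$) and does not spell out the $\sqrt{L_N}$ cancellation that you verify explicitly.
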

\begin{proof}
    The rotated unobserved factors are defined as $\hat{f}_t = \tilde{f}_t + \tilde{A}_{21} g_t$ and $f_t^* = f_t + A_{21} g_t$. Thus, the difference of both terms is given by
    \begin{align}
        \hat{f}_t - f_t^* = \tilde{f}_t - f_t + \left(\tilde{A}_{21} - A_{21}\right) g_t.\label{f_diff}
    \end{align}
    Using the euclidean norm, we determine the upper bound of \eqref{f_diff}, using \lemref{lem_est_factor} and \lemref{lem_a21}, by
    \begin{align*}
        \spec{\hat{f}_t - f_t^*} &\leq \spec{\tilde{f}_t - f_t} + \frob{\tilde{A}_{21} - A_{21}} g_t.   \\
        &= \mathcal{O}_p\left(N^{-\beta/2}\right) + \mathcal{O}_p\left(\frac{\mu_1 \sqrt{L_N}}{N^{\beta}}+\sqrt{\frac{N(d_T + \mu_2)}{N^{2\beta}}}\right) = o_p(1).
    \end{align*}
\end{proof}

Finally, for the rotated coefficient matrices of the dynamic equation in \eqref{favar2} we establish the following lemma.
\begin{lemma}
    \begin{align*}
        \frob{\hat{\Phi}_i - \Phi_i^*} = \left(\sum_{t = \bar{p}}^T \eta_t \psi_t'\right) \left(\frac{1}{\bar{T}}\sum_{t = \bar{p}}^T \psi_t \psi_t'\right)^{-1} \left(\iota_i \otimes I_r\right)+ \mathcal{O}_p\left(\frac{\sqrt{L_N}}{N^{\beta}} \left(\mu_1 + \sqrt{\frac{N(d_T + \mu_2)}{L_N}}\right)\right),
    \end{align*}
    where $\iota_i$ is the $i$-th column of the $r \times r$ identity matrix.
\end{lemma}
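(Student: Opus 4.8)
The plan is to reduce the claim to the already-established expansion for the unrotated coefficients in Lemma \ref{lem_psi}, combined with the convergence rate of the estimated rotation matrix in Lemma \ref{lem_a21}. Recall from Section \ref{sec:ident} that $\hat{\Phi}_i = \tilde{A}\tilde{\Phi}_i\tilde{A}^{-1}$ and $\Phi_i^* = A\Phi_i A^{-1}$, where $A$ and $A^{-1}$ are fixed block-triangular matrices with unit diagonal blocks, and hence have bounded Frobenius norm under Assumption \ref{data_assum}. First I would write the telescoping decomposition
\begin{align*}
\hat{\Phi}_i - \Phi_i^* = \tilde{A}\left(\tilde{\Phi}_i - \Phi_i\right)\tilde{A}^{-1} + \left(\tilde{A} - A\right)\Phi_i\tilde{A}^{-1} + A\Phi_i\left(\tilde{A}^{-1} - A^{-1}\right),
\end{align*}
so that the estimation error of the autoregressive coefficients and the estimation error of the rotation are separated into distinct summands.

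For the first summand I would insert the expansion from Lemma \ref{lem_psi}, writing $\tilde{\Phi}_i - \Phi_i = L_i + R_i$, where $L_i = \left(\sum_{t=\bar{p}}^T \eta_t\psi_t'\right)\left(\frac{1}{\bar{T}}\sum_{t=\bar{p}}^T\psi_t\psi_t'\right)^{-1}\left(\iota_i\otimes I_r\right)$ is the leading stochastic term and $R_i$ is of order $\mathcal{O}_p\!\left(\frac{\sqrt{L_N}}{N^{\beta}}\left(\mu_1+\sqrt{\frac{N(d_T+\mu_2)}{L_N}}\right)\right)$. Since $\tilde{A}$ and $\tilde{A}^{-1}$ are $\mathcal{O}_p(1)$, differing from the bounded matrices $A$, $A^{-1}$ only through the $o_p(1)$ off-diagonal block controlled in Lemma \ref{lem_a21}, the conjugated remainder $\tilde{A}R_i\tilde{A}^{-1}$ stays of the same order, and the conjugation of $L_i$ reproduces the stated leading term up to the remainder, the discrepancies being products of $(\tilde{A}-A)$ or $(\tilde{A}^{-1}-A^{-1})$ with $L_i$.

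For the second and third summands I would use that $\Phi_i$, $A$ and $\tilde{A}^{-1}$ have bounded (respectively $\mathcal{O}_p(1)$) norms, so each summand is controlled by $\frob{\tilde{A}-A}$ or $\frob{\tilde{A}^{-1}-A^{-1}}$. Because $\tilde{A}$ and $A$ are block-triangular with identical unit diagonal blocks, both of these reduce to $\frob{\tilde{A}_{21}-A_{21}}$, which Lemma \ref{lem_a21} bounds by $\mathcal{O}_p\!\left(\frac{\sqrt{L_N}}{N^{\beta}}\left(\mu_1+\sqrt{\frac{N(d_T+\mu_2)}{L_N}}\right)\right)$; the inverse block $\tilde{A}^{-1}-A^{-1}$ equals $-(\tilde{A}_{21}-A_{21})$ in the single off-diagonal position and therefore obeys the same bound. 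Collecting the three summands then yields the claimed expansion, with $\iota_i\otimes I_r$ already carried along from Lemma \ref{lem_psi}.

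The main obstacle I anticipate is the bookkeeping around the leading term: one must verify that conjugating $L_i$ by the consistent rotation matrices $\tilde{A},\tilde{A}^{-1}$ leaves the stated leading term intact, i.e.\ that the cross terms $(\tilde{A}-A)L_i\tilde{A}^{-1}$ and $A L_i(\tilde{A}^{-1}-A^{-1})$ are dominated by the remainder rate. This requires combining the $\mathcal{O}_p$ rate of $\tilde{A}-A$ from Lemma \ref{lem_a21} with the slower, OLS-type stochastic order of $L_i$, which is where the argument is most delicate; the remaining manipulations are routine given Lemmas \ref{lem_psi} and \ref{lem_a21}.
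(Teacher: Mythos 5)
Your proposal is correct and follows essentially the same route as the paper: both telescope $\hat{\Phi}_i - \Phi_i^* = \tilde{A}\tilde{\Phi}_i\tilde{A}^{-1} - A\Phi_i A^{-1}$ into terms isolating $\tilde{\Phi}_i-\Phi_i$ (handled by the expansion in the preceding lemma on $\frob{\tilde{\Phi}_i - \Phi_i}$) and terms driven by $\tilde{A}-A$ and $\tilde{A}^{-1}-A^{-1}$ (handled by the bound on $\frob{\tilde{A}_{21}-A_{21}}$), then post-multiply by $\iota_i\otimes I_r$. The only difference is cosmetic: you use a three-term telescoping with estimated matrices left in some slots, while the paper expands fully into seven difference terms; your explicit observation that $\tilde{A}^{-1}-A^{-1}$ reduces to $-(\tilde{A}_{21}-A_{21})$ by unitriangularity, and your flagging of the cross terms with the leading stochastic term, are points the paper treats implicitly at the same level of rigor.
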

\begin{proof}
    Given the definitions $\hat{\Phi}_i = \tilde{A}\tilde{\Phi}_i\tilde{A}^{-1}$ and $\Phi_i^* = A \Phi_i A^{-1}$, we obtain
    \begin{align}
        \begin{split}
            \hat{\Phi} - \Phi^* &= \left(\tilde{A} - A\right) \left(\tilde{\Phi} - \Phi\right)\left(\tilde{A}^{-1} - A^{-1}\right) + A\left(\tilde{\Phi} - \Phi\right)\left(\tilde{A}^{-1} - A^{-1}\right) \\
            &\quad + \left(\tilde{A} - A\right) \left(\tilde{\Phi} - \Phi\right) A^{-1} + A\left(\tilde{\Phi} - \Phi\right)A^{-1}	+ \left(\tilde{A} - A\right) \Phi A^{-1}\\
            &\quad + \left(\tilde{A} - A\right) \Phi \left(\tilde{A}^{-1} - A^{-1}\right) + A \Phi \left(\tilde{A}^{-1} - A^{-1}\right) 
        \end{split}
    \end{align}
    Based on \lemref{lem_psi} and \lemref{lem_a21}, we get the expression
    \begin{align*}
        \frob{\hat{\Phi} - \Phi^*} &\leq \frob{\tilde{A} - A} \frob{\tilde{\Phi} - \Phi} \frob{\tilde{A}^{-1} - A^{-1}} + \spec{A} \frob{\tilde{\Phi} - \Phi} \frob{\tilde{A}^{-1} - A^{-1}} \\
        &\quad + \frob{\tilde{A} - A} \frob{\tilde{\Phi} - \Phi} \spec{A^{-1}} + \spec{A}\frob{\tilde{\Phi} - \Phi}\spec{A^{-1}}	+ \frob{\tilde{A} - A} \spec{\Phi A^{-1}}\\
        &\quad + \frob{\tilde{A} - A}\spec{ \Phi} \frob{\tilde{A}^{-1} - A^{-1}} + \spec{A \Phi} \frob{\tilde{A}^{-1} - A^{-1}}  \\
        &= \left(\sum_{t = \bar{p}}^T \eta_t \psi_t'\right) \left(\frac{1}{\bar{T}}\sum_{t = \bar{p}}^T \psi_t \psi_t'\right)^{-1} + \mathcal{O}_p\left(\frac{\sqrt{L_N}}{N^{\beta}} \left(\mu_1 + \sqrt{\frac{N(d_T + \mu_2)}{L_N}}\right)\right).
    \end{align*}
    The result follows by post-multiplying $\iota_i \otimes I_r$ on both sides.
    
\end{proof}

%\newpage
%\section{Iterative Algorithm}
%\label{sec:Algo}
%\input{A_Algorithm}
\newpage
\section{Tables}
\setcounter{table}{0}
\label{sec:A_tables}
\renewcommand{\arraystretch}{0.8}
\begin{ThreePartTable}
	\setlength{\extrarowheight}{-2pt}
	\begin{TableNotes}
		\footnotesize
		\singlespacing
		\item \leavevmode\kern-\scriptspace\kern-\labelsep 
		Note: This table shows the different observed variables on monthly frequency for a sampling period January 1985 until December 2016. The data is retrieved from the \cite{McCrackenNg2016} FRED-MD data base and datastream. The transformation codes are labeled as follows: 1 =	no transformation, 2	= $\Delta x_t$, 3 = $\Delta^2 x_t$, 4 = $\text{log}(x_t)$, 5 = $\Delta \text{log}(x_t)$, 6	= $\Delta^2 \text{log}(x_t)$. The last two columns are associated with our regularized FAVAR model. The column ``Latent factor'' shows the latent factors the specific time series load on. The latent factors are denotes as 1 = labor market, 2 = price, 3 = industrial production, 4 = stock market, 5 = credit spread. The column ``FFR'' denotes the estimated loadings onto the observable factor.
	\end{TableNotes}

\begin{longtable}{lp{9cm}ccc}
	\caption{Data and Transformations}\\
	\hhline{=====}
	\multicolumn{2}{c}{\multirow{2}[0]{*}{Description of the variables}} & Transformation & Latent & \multirow{2}[0]{*}{FFR} \\
	\multicolumn{2}{c}{} & code  & factor &  \\
	\hline
	\endfirsthead
	\multicolumn{5}{c}%
	{\tablename\ \thetable\ -- \textit{cont.}} \\
	\hhline{=====}
	\multicolumn{2}{c}{\multirow{2}[0]{*}{Description of the variables}} & Transformation & Latent & \multirow{2}[0]{*}{FFR} \\
	\multicolumn{2}{c}{} & code  & factor &  \\
	\hline
	\endhead
	\hline \multicolumn{5}{r}{\textit{Continued on next page}}
	\endfoot
	\hhline{=====}
	\insertTableNotes
	\endlastfoot       
    1     & real personal income & 5     &       & 0 \\
	2     & real personal income ex transfer receipts  & 5     &       & 0 \\
	3     & real personal consumption expenditures & 5     &       & 0 \\
	4     & real manufacturing and trade industries sales & 5     & 5     & -0.0025 \\
	5     & retail and food services sales & 5     &       & 0 \\
	6     & IP index & 5     & 1,5   & -0.0444 \\
	7     & IP final products and nonindustrial supplies & 5     & 1,5   & -0.0494 \\
	8     & IP final products & 5     & 5     & -0.0473 \\
	9     & IP consumer goods & 5     & 5     & -0.0362 \\
	10    & IP durable consumer goods & 5     & 5     & -0.0267 \\
	11    & IP nondurable consumer goods & 5     & 5     & -0.0103 \\
	12    & IP business equipment & 5     & 1,5   & -0.0283 \\
	13    & IP materials & 5     & 1,5   & -0.0230 \\
	14    & IP durable materials & 5     & 1,5   & -0.0300 \\
	15    & IP nondurable materials & 5     & 5     & -0.0026 \\
	16    & IP manufacturing & 5     & 1,5   & -0.0445 \\
	17    & IP residuential utilities & 5     &       & 0 \\
	18    & IP fuels & 5     &       & 0 \\
	19    & capacity utilization & 2     & 1,5   & -0.0402 \\
	20    & US ISM Manufacturers survey: production index & 1     & 1     & -0.0110 \\
	21    & US ISM Manufacturers survey: employment index & 1     & 1     & -0.0119 \\
	22    & US ISM Purchasing Managers Index & 1     & 1     & -0.0138 \\
	23    & US ISM Manufacturers survey: supplier delivery index & 1     & 1     & -0.0038 \\
	24    & US ISM Manufacturers survey: new orders index  & 1     & 1     & -0.0096 \\
	25    & US ISM Manufacturers survey: inventories index & 1     & 1     & -0.0075 \\
	26    & US ISM Manufacturers survey: prices paid index & 1     & 1,2   & -0.0021 \\
	27    & help wanted index & 2     &       & 0 \\
	28    & ratio of help wanted/number unemployed & 2     & 1     & -0.0005 \\
	29    & civilian labor force & 5     &       & 0 \\
	30    & civilian employment & 5     & 1     & -0.0041 \\
	31    & civilian unemployment rate & 2     & 1     & 0.0060 \\
	32    & average duration of unemployment & 2     &       & 0 \\
	33    & civilians unemployed - less than 5 weeks & 5     &       & 0 \\
	34    & civilians unemployed for 5-14 weeks & 5     &       & 0 \\
	35    & civilians unemployed - 15 weeks and over & 5     & 1     & 0.0050 \\
	36    & civilians unemployed for 15-26 weeks & 5     &       & 0 \\
	37    & civilians unemployed for 27 weeks and over & 5     & 1     & 0.0030 \\
	38    & initial claims & 5     &       & 0 \\
	39    & All employees: total nonfarm & 5     & 1     & -0.0194 \\
	40    & All employees: goods-producing industries & 5     & 1     & -0.0222 \\
	41    & All employees: mining  & 5     &       & 0 \\
	42    & All employees: construction & 5     & 1     & -0.0120 \\
	43    & All employees: manufacturing & 5     & 1,5   & -0.0217 \\
	44    & All employees: durable goods & 5     & 1,5   & -0.0219 \\
	45    & All employees: nondurable goods & 5     & 1     & -0.0130 \\
	46    & All employees: service-providing industries & 5     & 1     & -0.0139 \\
	47    & All employees: trade, transportation and utilities & 5     & 1     & -0.0152 \\
	48    & All employees: wholesale trade & 5     & 1     & -0.0159 \\
	49    & All employees: retail trade & 5     & 1     & -0.0098 \\
	50    & All employees: financial activities & 5     & 1     & -0.0044 \\
	51    & All employees: government & 5     &       & 0 \\
	52    & Average hourly earnings: goods-producing & 1     & 1     & -0.0110 \\
	53    & Average weekly overtime hours: manufacturing & 2     &       & 0 \\
	54    & Average weekly hours: manufacturing & 1     & 1     & -0.0121 \\
	55    & Housing starts: total new privately owed & 5     &       & 0 \\
	56    & Housing starts: NE & 5     &       & 0 \\
	57    & Housing starts: MW & 5     &       & 0 \\
	58    & Housing starts: S & 5     &       & 0 \\
	59    & Housing starts: W & 5     &       & 0 \\
	60    & New private housing permits & 5     &       & 0 \\
	61    & New private housing permits: NE & 5     &       & 0 \\
	62    & New private housing permits: MW & 5     &       & 0 \\
	63    & New private housing permits: S & 5     &       & 0 \\
	64    & New private housing permits: W & 5     &       & 0 \\
	65    & Moodys Seasoned Aaa Corporate Bond Yield, Percent & 1     & 3     & -0.0912 \\
	66    & Moodys Seasoned Baa Corporate Bond Yield, Percent & 2     & 3,4   & -0.0722 \\
	67    & 30-Year Fixed Rate Mortgage Average in the United States, Percent & 2     & 3     & -0.0644 \\
	68    & Excess Bond Premium & 1     & 1     & 0.0119 \\
	69    & Spread UK-US & 2     &       & 0 \\
	70    & Spread CAN-US & 2     &       & 0 \\
	71    & Spread SW-US & 2     &       & 0 \\
	72    & Spread JPN-US & 2     &       & -0.1383 \\
	73    & New orders for durable goods & 5     &       & 0 \\
	74    & New orders for nondefense capital goods & 5     &       & 0 \\
	75    & Unfilled orders for durable goods & 5     & 1     & -0.0019 \\
	76    & Business Inventories & 5     & 1     & -0.0100 \\
	77    & Inventories to sales ratio & 2     &       & 0 \\
	78    & M1 money stock & 5     &       & 0 \\
	79    & M2 money stock & 5     &       & 0 \\
	80    & Real M2 money stock & 5     & 2     & 0.0232 \\
	81    & St. Louis adjusted monetary base & 5     & 2     & 0.0023 \\
	82    & Total reserves of depository institutions & 5     &       & 0 \\
	83    & Commercial and industrial loans & 5     & 1     & -0.0029 \\
	84    & Real estate loans at all commercial banks & 5     &       & 0 \\
	85    & Total nonrevolving credit & 5     &       & 0 \\
	86    & nonrevolving comsumer credit to personal income & 2     &       & 0 \\
	87    & S\&P 500 composite & 5     & 4     & 0.0092 \\
	88    & S\&P industrials & 5     & 4     & 0.0091 \\
	89    & S\&P common stock dividend yields & 2     & 4     & -0.0083 \\
	90    & S\&P common stock price-earnings ratio & 5     & 1,4   & 0.0066 \\
	91    & VXO   & 1     & 1,4   & 0.0063 \\
	92    & Dow Jones Industrial & 5     & 4     & 0.0034 \\
	93    & Dow Jones Utilities & 5     &       & 0 \\
	94    & Nasdaq Composite & 5     & 4     & 0.0015 \\
	95    & Dow Jones & 5     & 4     & 0.0015 \\
	96    & Nasdaq Industrial & 5     & 4     & 0.0013 \\
	97    & 3 month financial commercial paper rate & 2     & 1     & 0.2361 \\
	98    & 3 month Treasury bill & 2     &       & 0.4455 \\
	99    & 6 month Treasury bill & 2     & 1,3   & 0.4731 \\
	100   & 1 year Treasury rate & 2     & 3     & 0.4023 \\
	101   & 5 year Treasury rate & 2     & 3     & 0.0871 \\
	102   & 10 year Treasury rate & 2     & 3     & -0.0207 \\
	103   & Moody AAA corporate bond yield & 2     & 3     & -0.0912 \\
	104   & Moody BAA corporate bond yield & 2     & 3,4   & -0.0722 \\
	105   & Trade weighted US Dollar index major currencies & 5     &       & 0 \\
	106   & Switzerland US foreign exchange rate & 5     &       & 0 \\
	107   & Japan US foreign exchange rate & 5     &       & 0 \\
	108   & US UK foreign exchange rate & 5     &       & 0 \\
	109   & Canada US foreign exchange rate & 5     &       & 0 \\
	110   & PPI: finished goods & 5     & 2     & -0.0349 \\
	111   & PPI: finished consumer goods & 5     & 2     & -0.0353 \\
	112   & PPI: intermediate materials & 5     & 2     & -0.0332 \\
	113   & PPI: crude materials & 5     & 2     & -0.0145 \\
	114   & crude oil & 5     & 2     & -0.0113 \\
	115   & PPI: metals & 5     &       & 0 \\
	116   & CPI: all items & 5     & 2     & -0.0505 \\
	117   & CPI: apparel & 5     &       & 0 \\
	118   & CPI: transportation & 5     & 2     & -0.0480 \\
	119   & CPI: commodities & 5     & 2     & -0.0508 \\
	120   & CPI: durables & 5     &       & 0 \\
	121   & CPI: all items less food & 5     & 2     & -0.0501 \\
	122   & CPI: all items less shelter & 5     & 2     & -0.0516 \\
	123   & CPI: all items less medical care & 5     & 2     & -0.0509 \\
	124   & Personal consumption expenditure: chain index & 6     & 2     & -0.0090 \\
	125   & Personal consumption expenditure: durable goods & 5     &       & 0 \\
	126   & Personal consumption expenditure: nondurable goods & 5     & 2     & -0.0491
    \label{tab:data}
\end{longtable}
\end{ThreePartTable}

%%%%%%%%%%%%%%%%%%%%%%%%%%%%%%%%%%%%%%%%%%%%%%%%%%%%%%%%%%%%%%%%%%%%%%%%%%%%%%%%%%%%%%%%%%%%%%%%%%%%%%%%%%%%
\renewcommand{\arraystretch}{1}
\begin{table}[h]
	\setlength{\extrarowheight}{2pt}
	\centering
	\caption{Naming variables scheme}
	\begin{threeparttable}
		\begin{tabular}{ll}
			\hhline{==}
			Index & Description of the variables \\\hline
			6      & Industrial Production (IP) index \\    
			31     & Civilian unemployment rate \\
%			55     & Housing starts: total new privately owed \\
			87     & S\&P 500 composite \\    
			66     & Moody's Seasoned Baa Corporate Bond Yield
			\\
			116    & Consumer Price Index (CPI): all items \\
			\hhline{==}
		\end{tabular}
		%\vspace*{-0.2cm}
		\begin{tablenotes}
		\footnotesize
		\singlespacing
		\item \leavevmode\kern-\scriptspace\kern-\labelsep 
		Note: This table lists the observed time series that are used as naming variables for the named factor identification case (\textit{IRb}).
		\end{tablenotes}
\end{threeparttable}
	\label{tab:named1}%
\end{table}%

%%%%%%%%%%%%%%%%%%%%%%%%%%%%%%%%%%%%%%%%%%%%%%%%%%%%%%%%%%%%%%%%%%%%%%%%%%%%%%%%%%%%%%%%%%%%%%%%%%%%%%%%%%%%

\begin{table}[htbp]
	\setlength{\extrarowheight}{2pt}
  \centering
  \caption{Contemporaneous impact matrices}
  \begin{threeparttable}
	    \begin{tabular}{cccccc}
	    \hhline{======}
	    \multicolumn{6}{c}{\textbf{Panel A: regularized FAVAR model - RFAVAR}} \\	    
		\multirow{2}[0]{*}{1} & \multirow{2}[0]{*}{0} & \multirow{2}[0]{*}{0} & \multirow{2}[0]{*}{0} & \multirow{2}[0]{*}{0} & -0.0555 \\
		&       &       &       &       & (0.0316) \\
		\multirow{2}[0]{*}{0} & \multirow{2}[0]{*}{1} & \multirow{2}[0]{*}{0} & \multirow{2}[0]{*}{0} & \multirow{2}[0]{*}{0} & -0.1093 \\
		&       &       &       &       & (0.0676) \\
		\multirow{2}[0]{*}{0} & \multirow{2}[0]{*}{0} & \multirow{2}[0]{*}{1} & \multirow{2}[0]{*}{0} & \multirow{2}[0]{*}{0} & -0.1014 \\
		&       &       &       &       & (0.0721) \\
		\multirow{2}[0]{*}{0} & \multirow{2}[0]{*}{0} & \multirow{2}[0]{*}{0} & \multirow{2}[0]{*}{1} & \multirow{2}[0]{*}{0} & 0.0169 \\
		&       &       &       &       & (0.0895) \\
		\multirow{2}[0]{*}{0} & \multirow{2}[0]{*}{0} & \multirow{2}[0]{*}{0} & \multirow{2}[0]{*}{0} & \multirow{2}[0]{*}{1} &  -0.1742\\
		&       &       &       &       &  (0.0836) \\
		0     & 0     & 0     & 0     & 0     & 1 \\\\
	    
	    \multicolumn{6}{c}{\textbf{Panel B: named factors scheme - FAVAR-NF}} \\
		\multirow{2}[0]{*}{-1.2426} & \multirow{2}[0]{*}{0.3996} & \multirow{2}[0]{*}{0.2934} & \multirow{2}[0]{*}{0.2029} & \multirow{2}[0]{*}{0.0824} & -0.4235 \\
		&       &       &       &       & (0.0449) \\
		\multirow{2}[0]{*}{-0.8293} & \multirow{2}[0]{*}{-0.9389} & \multirow{2}[0]{*}{0.0315} & \multirow{2}[0]{*}{-0.2053} & \multirow{2}[0]{*}{0.0276} & -0.1201 \\
		&       &       &       &       & (0.0738) \\
		\multirow{2}[0]{*}{-0.6111} & \multirow{2}[0]{*}{-0.0810} & \multirow{2}[0]{*}{-0.2619} & \multirow{2}[0]{*}{-0.1330} & \multirow{2}[0]{*}{-1.1376} & 0.0337 \\
		&       &       &       &       & (0.0870) \\
		\multirow{2}[0]{*}{1.7036} & \multirow{2}[0]{*}{-0.3395} & \multirow{2}[0]{*}{0.9813} & \multirow{2}[0]{*}{1.1365} & \multirow{2}[0]{*}{0.3018} & 0.1589 \\
		&       &       &       &       & (0.0752) \\
		\multirow{2}[0]{*}{-1.1194} & \multirow{2}[0]{*}{-0.1157} & \multirow{2}[0]{*}{-0.9908} & \multirow{2}[0]{*}{0.7985} & \multirow{2}[0]{*}{0.5888} & -0.1626 \\
		&       &       &       &       & (0.0836) \\
		0     & 0     & 0     & 0     & 0     & 1 \\
	    \hhline{======}
	    \end{tabular}
			%\vspace*{-0.2cm}
		\begin{tablenotes}
			
			\footnotesize
			\singlespacing
			\item \leavevmode\kern-\scriptspace\kern-\labelsep 
			Note: This tables shows the contemporaneous impact matrices associated with the different identification schemes. Panel A shows the impact matrix for our structural regularized FAVAR model (\textit{IRa}). Panel B gives the impact matrix associated with the named factor identification scheme (\textit{IRb}). Standard errors are given in brackets.
		\end{tablenotes}
\end{threeparttable}

  \label{tab:imp}%
\end{table}%

%% Table generated by Excel2LaTeX from sheet 'Sheet1'
%\begin{table}[htbp]
%  \centering
%\caption{{\textbf{percentage of explained variation by common components}}\\
%  \footnotesize Explain how we obtain this. Deltas are the differences
%}
%    \begin{tabular}{rrrrrrrrr}
%%          & \multicolumn{8}{c}{\textbf{percentage of explained variation by common components:}} \\
%%          &       &       &       &       &       &       &       &  \\
%          & \multicolumn{1}{l}{rFAVAR} & \multicolumn{1}{l}{$\Delta$} & \multicolumn{1}{l}{named factor 1} & \multicolumn{1}{l}{$\Delta$} & \multicolumn{1}{l}{named factor 2} & \multicolumn{1}{l}{$\Delta$} & \multicolumn{1}{l}{PCA} & \multicolumn{1}{l}{$\Delta$} \\
%    1     & 13.41 & 13.41 & 29.31 & 29.31 & 54.70 & 54.70 & 16.80 & 16.80 \\
%    2     & 25.59 & 12.18 & 37.59 & 8.29  & 66.91 & 12.21 & 26.08 & 9.28 \\
%    3     & 29.40 & 3.81  & 45.48 & 7.89  & 71.58 & 4.67  & 34.23 & 8.15 \\
%    4     & 32.95 & 3.56  & 51.61 & 6.13  & 75.96 & 4.39  & 40.05 & 5.83 \\
%    5     & 35.63 & 2.68  & 56.51 & 4.90  & 78.59 & 2.63  & 45.47 & 5.42 \\
%    6     & 37.62 & 1.99  & 61.17 & 4.66  & 81.00 & 2.41  & 50.67 & 5.20 \\
%    7     & 39.56 & 1.94  & 64.65 & 3.48  & 81.90 & 0.90  & 54.59 & 3.92 \\
%    8     & 41.35 & 1.79  & 65.43 & 0.78  & 82.30 & 0.40  & 57.75 & 3.17 \\
%    9     & 43.06 & 1.71  & 66.08 & 0.65  & 82.64 & 0.33  & 60.49 & 2.73 \\
%    10    & 44.75 & 1.68  & 66.71 & 0.63  & 82.96 & 0.32  & 63.10 & 2.62 \\
%    \end{tabular}%
%  \label{tab:explained}%
%\end{table}%

\newpage
\graphicspath{ {figures/p1_3s1/} }
\setcounter{figure}{0}
\section{Figures}
\label{sec:A_figures}

\begin{figure}[H]
	\centering
	\includegraphics[width=0.75\linewidth]{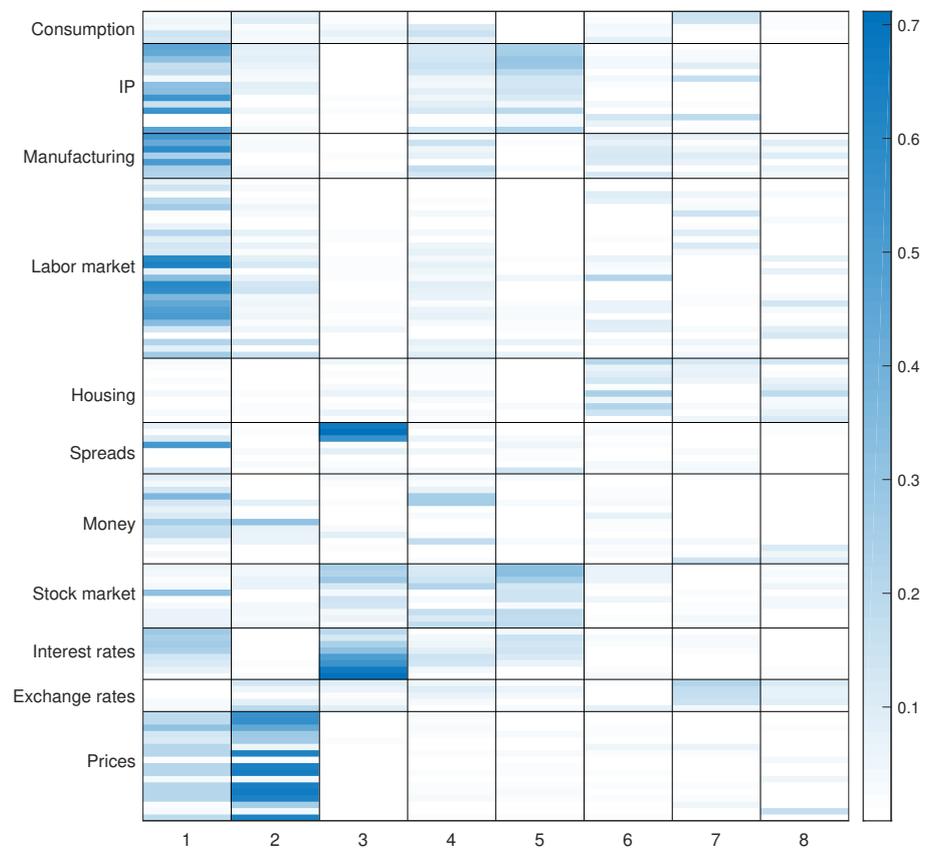}
	\caption{{$R^2$ plot for a factor model based on principal components analysis}\label{fig:hm_r2_pca} \\
	\footnotesize This graph shows $R^2$ for univariate regressions of the observed variables on each of the eight factors. The number of factors is chosen based on the \cite{Bai2002} $IC_1$ criterion.}
\end{figure}

%%%%%%%%%%%%%%%%%%%%%%%%%%%%%%%%%%%%%%%%%%%%%%%%%%%%%%%%%%%%%%%%%%%%%%%%%%%%%%%%%%%%%%%%%%%%%%%%%%%%%%%%%%%%%%%

\begin{figure}[H]
	\centering
	\includegraphics[width=0.75\linewidth]{hm_lam_r2_4}
	\caption{{$R^2$ plot for the regularized FAVAR}\label{fig:hm_r2_4} \\
	\footnotesize This graph shows $R^2$ for univariate regressions of the observed variables on each of the factors. The factors are abbreviated as follows: 'LM' labor market, 'P' price, 'IP' industrial production, 'SM' stock market, 'CS' credit spread and 'FFR' Federal Funds rate.}
\end{figure}

%%%%%%%%%%%%%%%%%%%%%%%%%%%%%%%%%%%%%%%%%%%%%%%%%%%%%%%%%%%%%%%%%%%%%%%%%%%%%%%%%%%%%%%%%%%%%%%%%%%%%%%%%%%%%%%

\begin{figure}[H]
	\centering
	\begin{minipage}{.4\textwidth}
		\centering
		\captionsetup{width=0.95\linewidth}
		\subfloat[Labor Market]{\includegraphics[width=0.95\linewidth]{factor_4_1}}
	\end{minipage}\hspace*{1cm}
	\begin{minipage}{.4\textwidth}
		\centering
		\captionsetup{width=0.95\linewidth}
		\subfloat[Price]{\includegraphics[width=0.95\linewidth]{factor_4_2}}
	\end{minipage}\\\vspace{0.3cm}
	
	\begin{minipage}{.4\textwidth}
		\centering
		\captionsetup{width=0.95\linewidth}
		\subfloat[Industrial Production]{\includegraphics[width=0.95\linewidth]{factor_4_3}}
	\end{minipage}\hspace*{1cm}
	\begin{minipage}{.4\textwidth}
		\centering
		\captionsetup{width=0.95\linewidth}
		\subfloat[Stock Market]{\includegraphics[width=0.95\linewidth]{factor_4_4}}
	\end{minipage}\\\vspace{0.3cm}
	
	\begin{minipage}{.4\textwidth}
		\centering
		\captionsetup{width=0.95\linewidth}
		\subfloat[Credit Spread]{\includegraphics[width=0.95\linewidth]{factor_4_5}}
	\end{minipage}\hspace*{1cm}
	\begin{minipage}{.4\textwidth}
		\centering
		\captionsetup{width=0.95\linewidth}
		\subfloat[Federal Funds rate]{\includegraphics[width=0.95\linewidth]{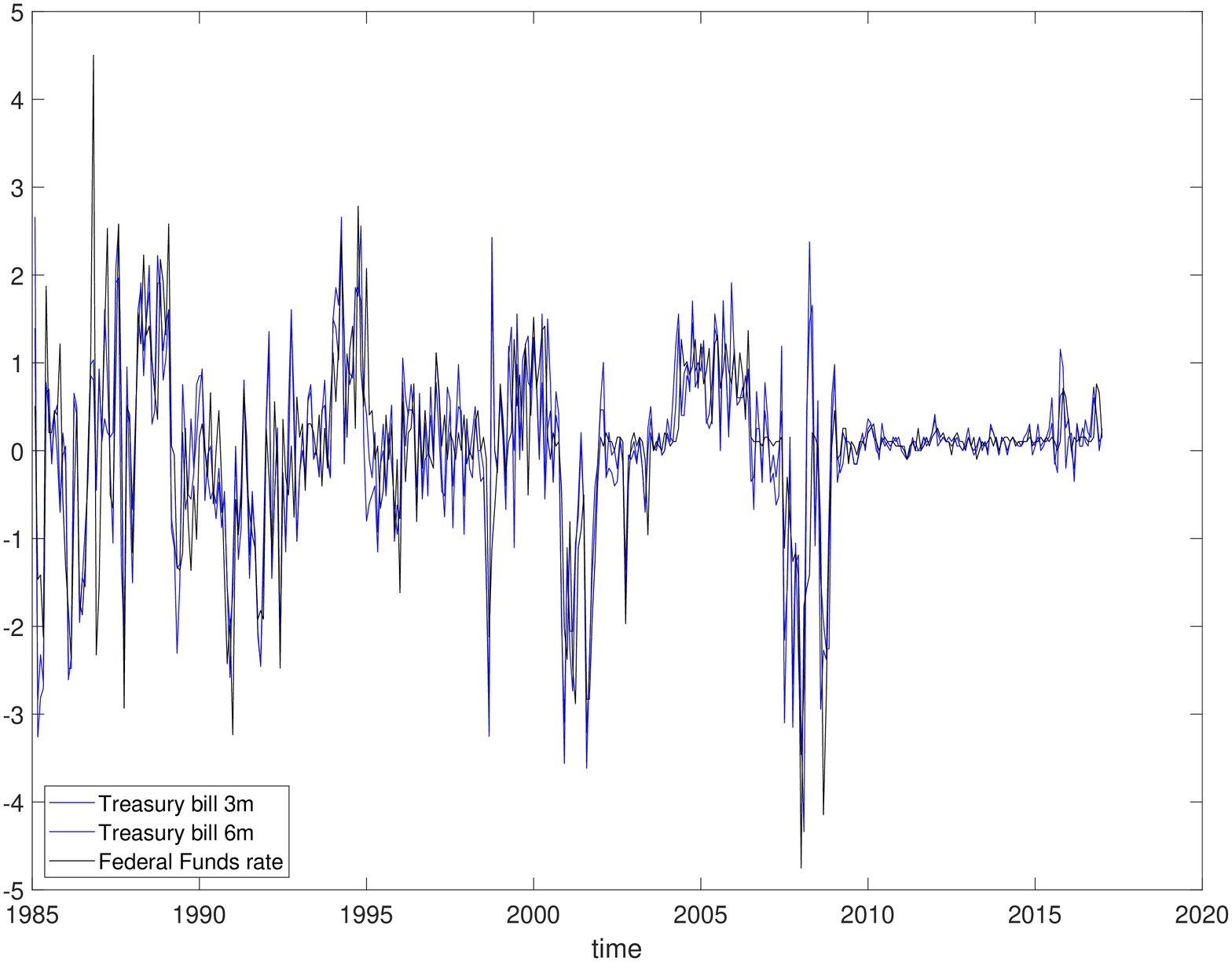}}
	\end{minipage}\\
	\caption{Factor plots for the regularized FAVAR} \label{fig:factor_rfavar}
\end{figure}

%%%%%%%%%%%%%%%%%%%%%%%%%%%%%%%%%%%%%%%%%%%%%%%%%%%%%%%%%%%%%%%%%%%%%%%%%%%%%%%%%%%%%%%%%%%%%%%%%%%%%%%%%%%%%%%
\begin{figure}[H]
	\centering
	\includegraphics[width=0.75\linewidth]{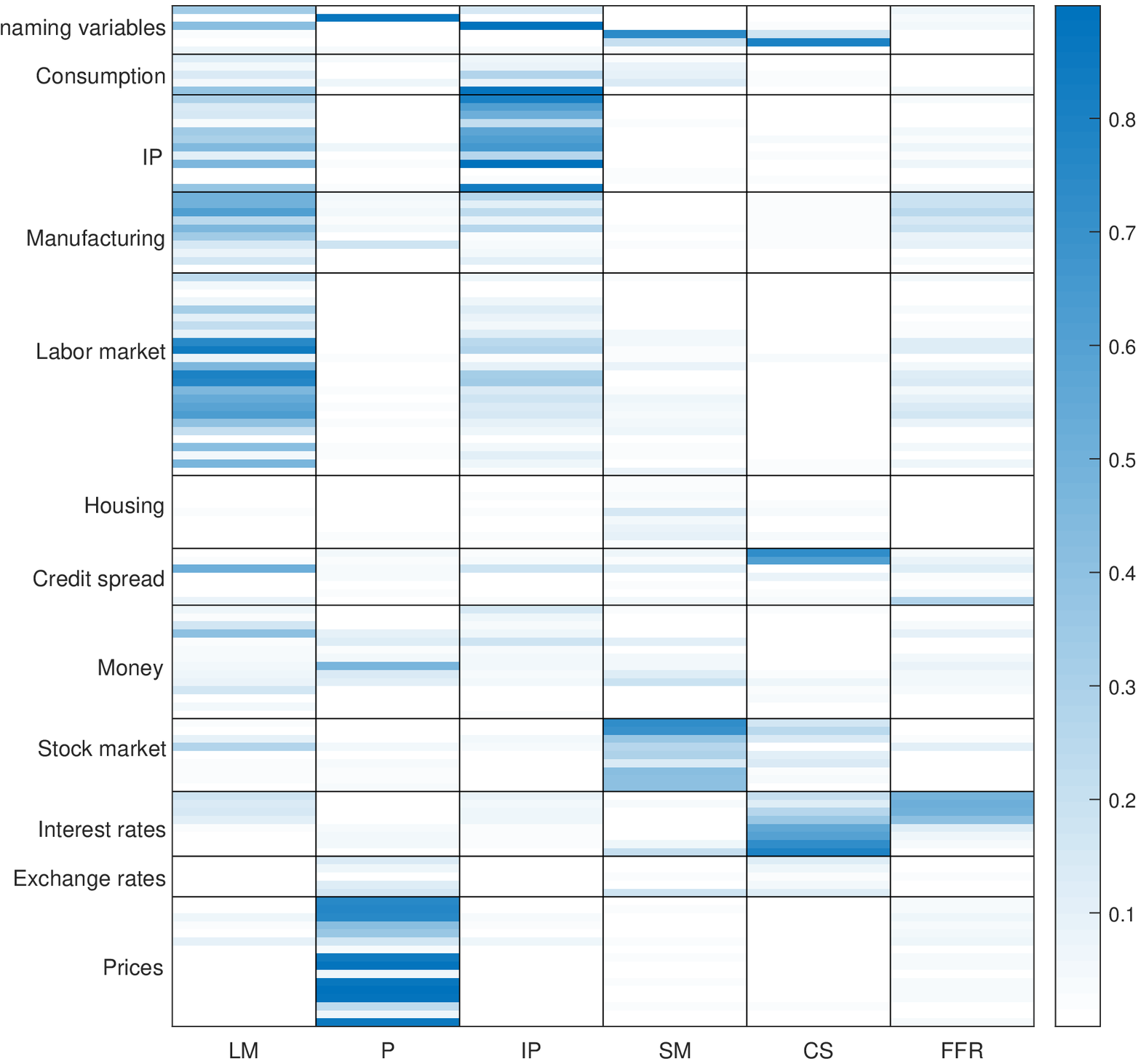}
	\caption{{$R^2$ plot for the named factor FAVAR} \label{fig:hm_r2_s2}\\
	\footnotesize This graph shows $R^2$ for univariate regressions of the observed variables on each of the factors. The factors are abbreviated as follows: 'LM' labor market, 'P' price, 'IP' industrial production, 'SM' stock market, 'CS' credit spread and 'FFR' Federal Funds rate. The naming variables are given in Table \ref{tab:named1}.}
\end{figure}

%%%%%%%%%%%%%%%%%%%%%%%%%%%%%%%%%%%%%%%%%%%%%%%%%%%%%%%%%%%%%%%%%%%%%%%%%%%%%%%%%%%%%%%%%%%%%%%%%%%%%%%%%%%%%%%

\begin{figure}[H]
	\centering
	\begin{minipage}{.4\textwidth}
		\centering
		\captionsetup{width=0.95\linewidth}
		\subfloat[Labor Market]{\includegraphics[width=0.95\linewidth]{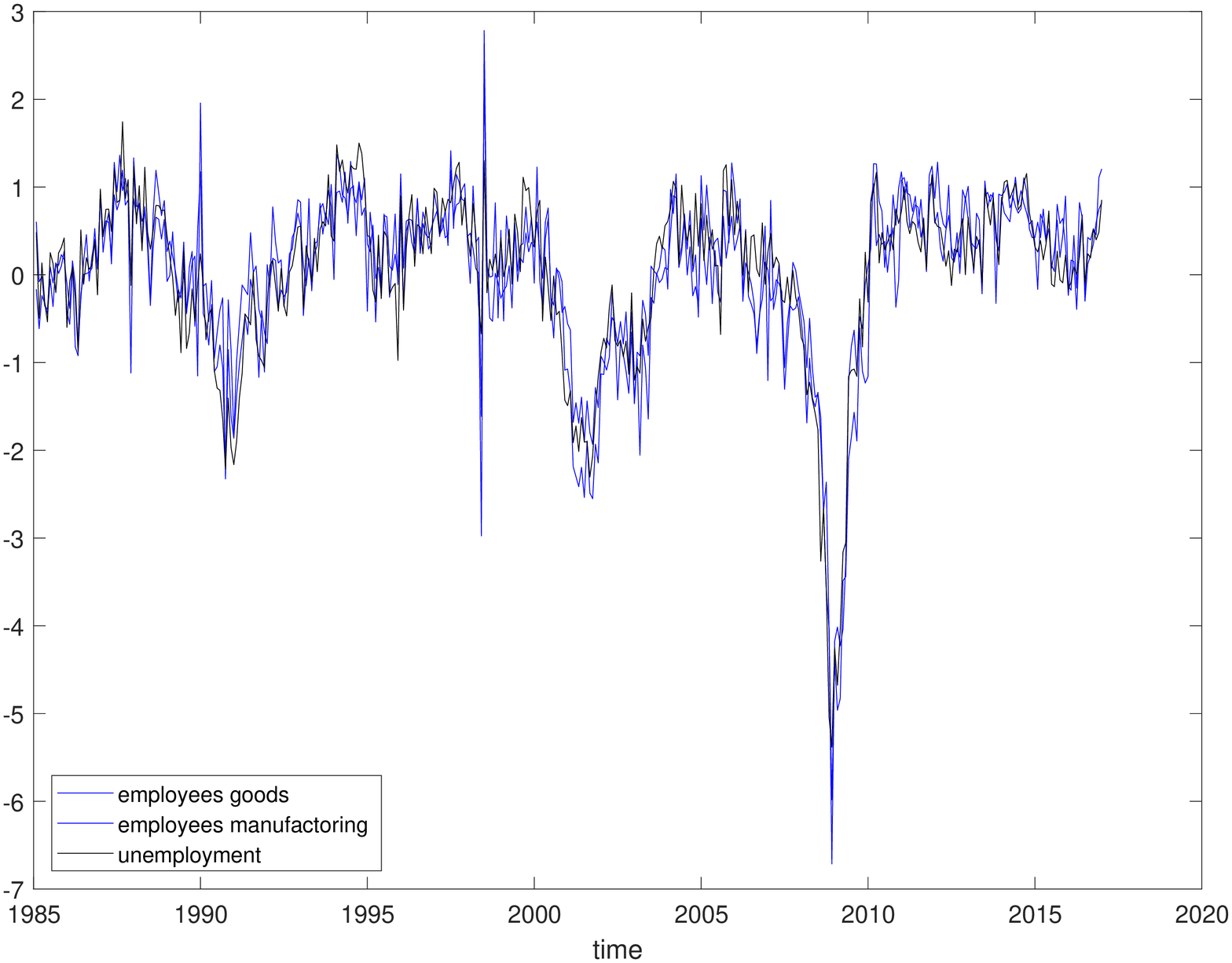}}
	\end{minipage}\hspace*{1cm}
	\begin{minipage}{.4\textwidth}
		\centering
		\captionsetup{width=0.95\linewidth}
		\subfloat[Price]{\includegraphics[width=0.95\linewidth]{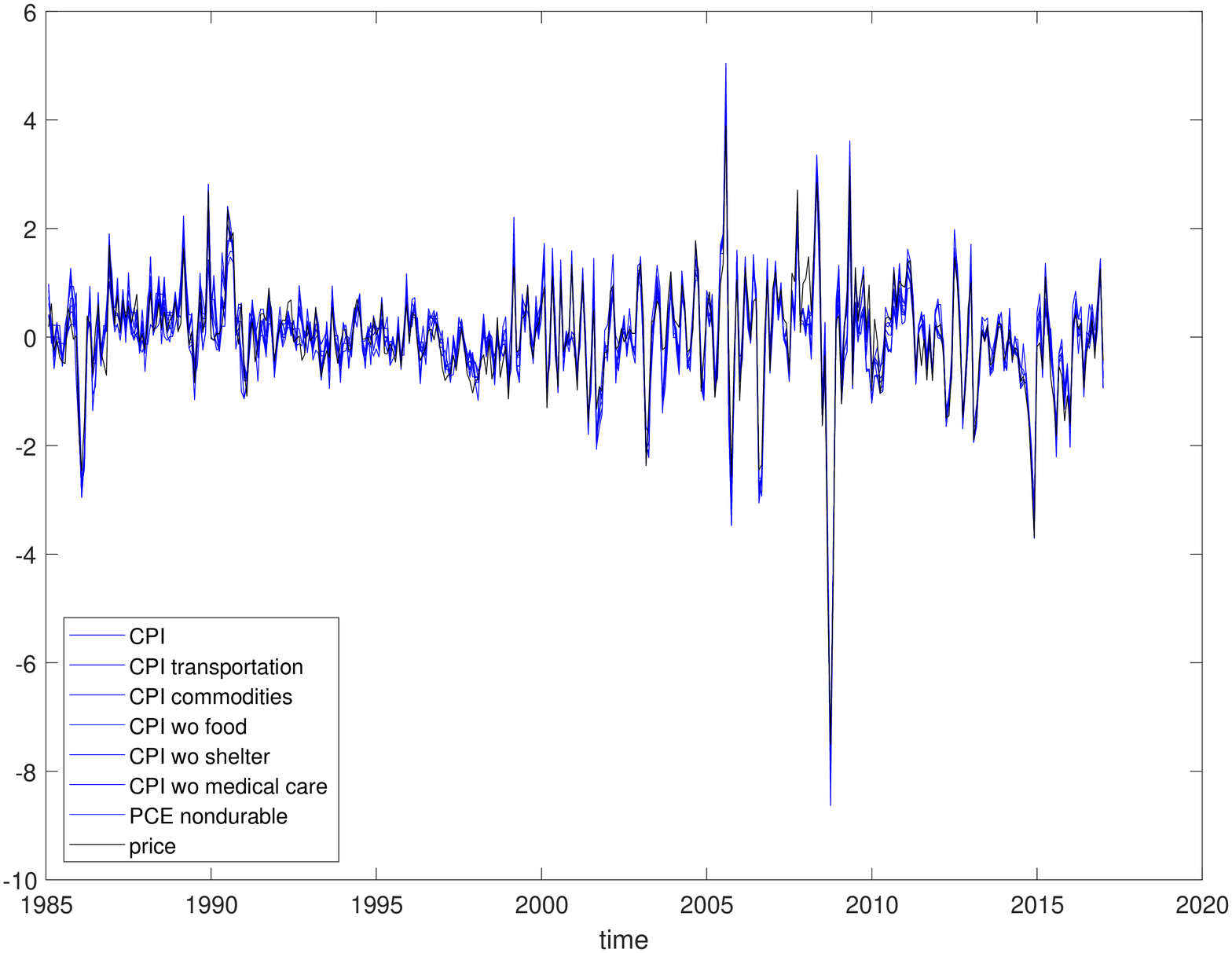}}
	\end{minipage}\\\vspace{0.3cm}
	
	\begin{minipage}{.4\textwidth}
		\centering
		\captionsetup{width=0.95\linewidth}
		\subfloat[Industrial Production]{\includegraphics[width=0.95\linewidth]{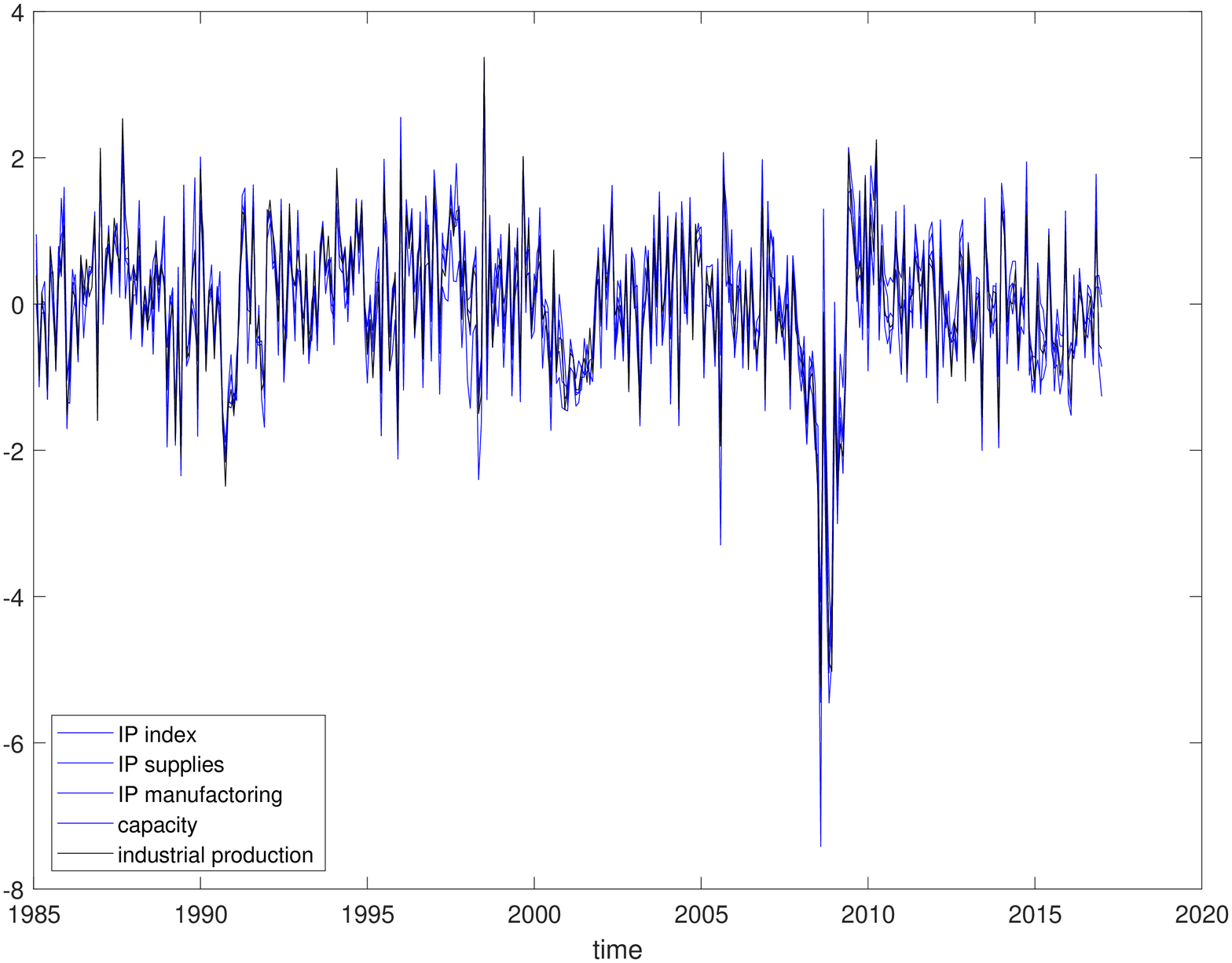}}
	\end{minipage}\hspace*{1cm}
	\begin{minipage}{.4\textwidth}
		\centering
		\captionsetup{width=0.95\linewidth}
		\subfloat[Stock Market]{\includegraphics[width=0.95\linewidth]{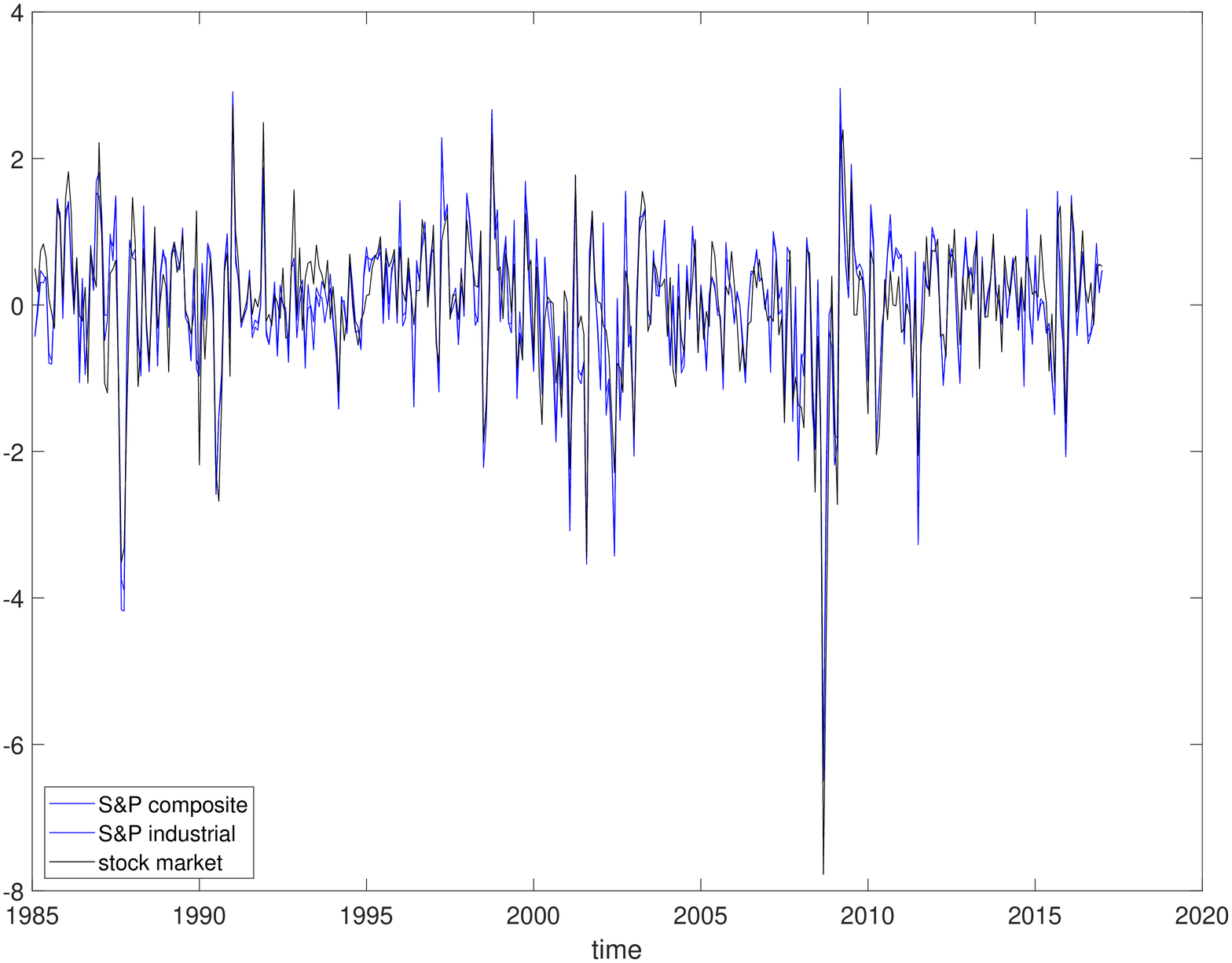}}
	\end{minipage}\\\vspace{0.3cm}
	
	\begin{minipage}{.4\textwidth}
		\centering
		\captionsetup{width=0.95\linewidth}
		\subfloat[Credit Spread]{\includegraphics[width=0.95\linewidth]{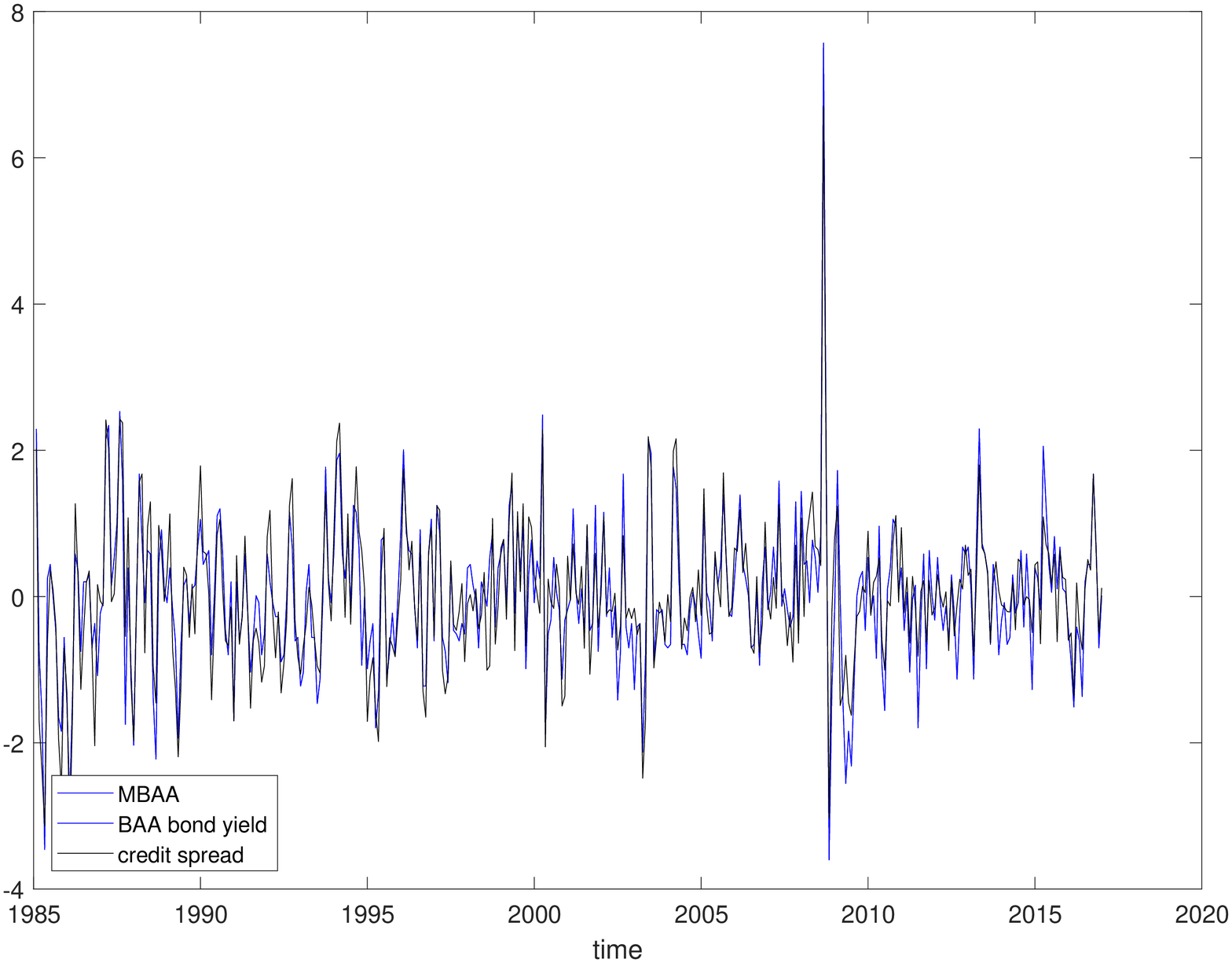}}
	\end{minipage}\hspace*{1cm}
	\begin{minipage}{.4\textwidth}
		\centering
		\captionsetup{width=0.95\linewidth}
		\subfloat[Federal Funds rate]{\includegraphics[width=0.95\linewidth]{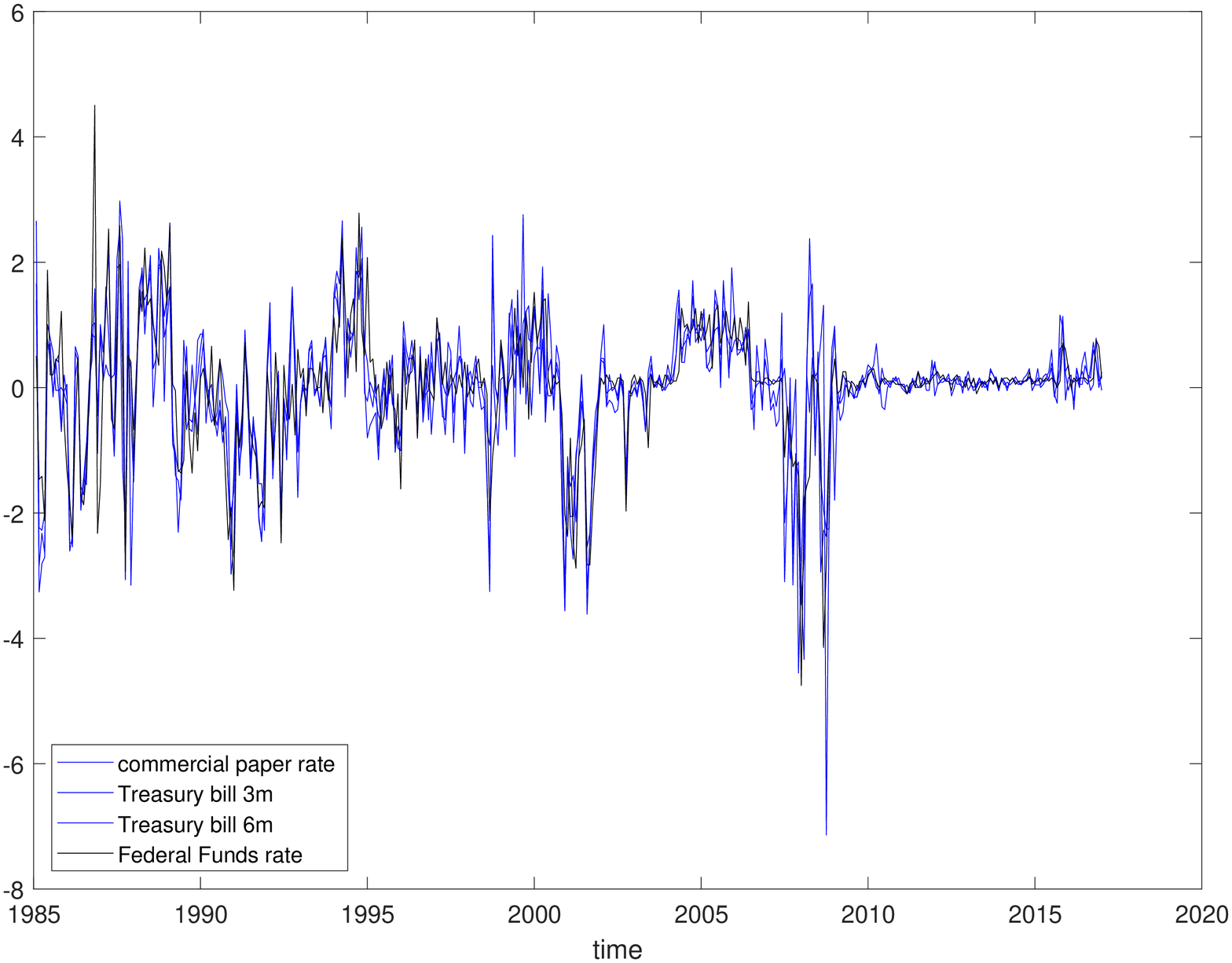}}
	\end{minipage}\\\vspace{0.3cm}
	\caption{Factor plots for the named factor FAVAR}\label{fig:factor_nfs2}
\end{figure}

%%%%%%%%%%%%%%%%%%%%%%%%%%%%%%%%%%%%%%%%%%%%%%%%%%%%%%%%%%%%%%%%%%%%%%%%%%%%%%%%%%%%%%%%%%%%%%%%%%%%%%%%%%%%%%%

\begin{figure}[H]
	\centering
	\includegraphics[width=0.75\linewidth]{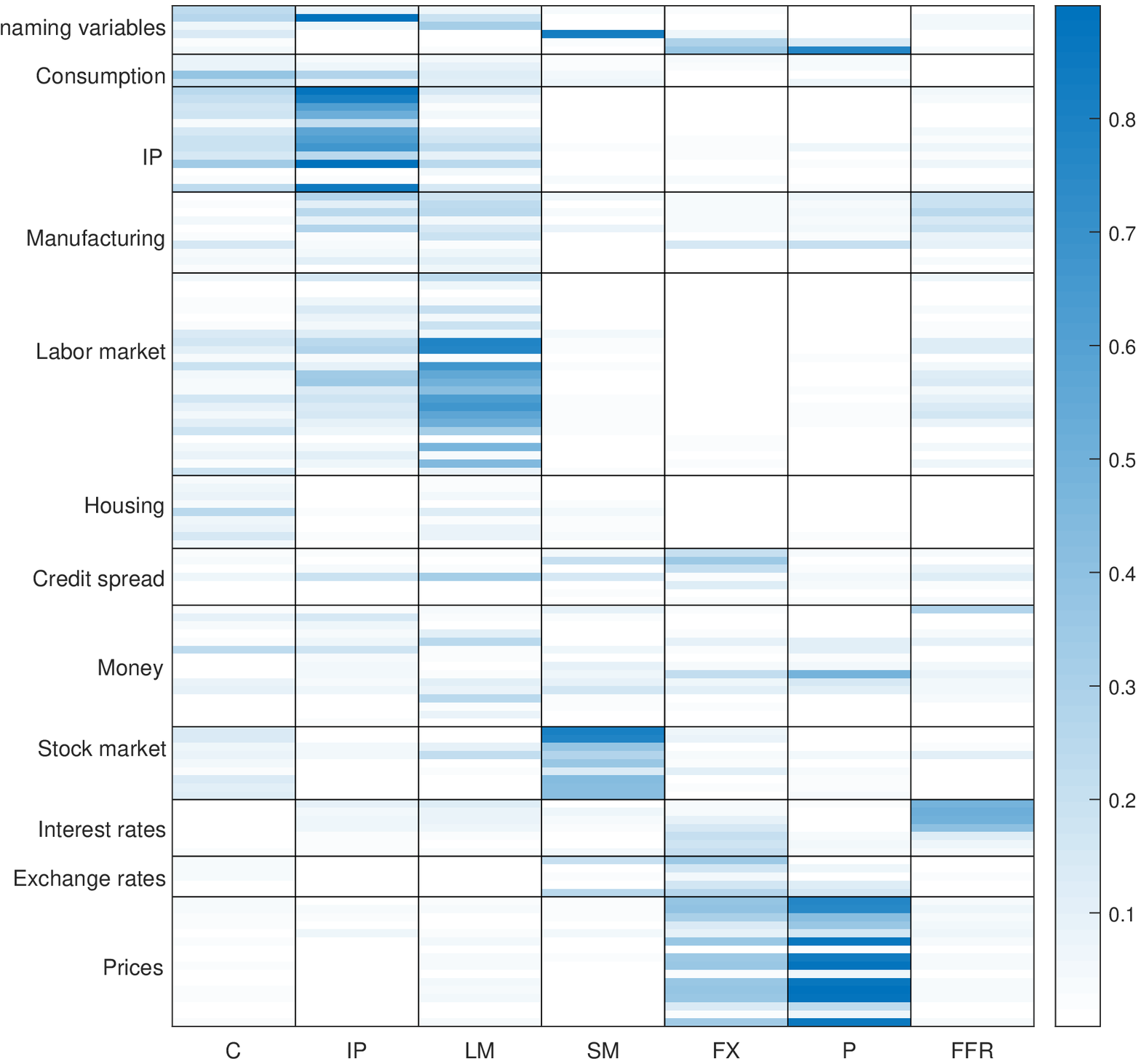}
	\caption{{$R^2$ plot for the named factor FAVAR: oil price application} \label{fig:hm_r2_s1}\\
		\footnotesize This graph shows $R^2$ for univariate regressions of the observed variables on each of the factors. The factors are abbreviated as follows: 'LM' labor market, 'P' price, 'IP' industrial production, 'SM' stock market, 'C' consumption, 'FX' is a trade weighted currency index and 'FFR' Federal Funds rate. The naming variables are based on \cite{stock2016factor}.}
\end{figure}

\vspace{1cm}

\begin{figure}[H]
	\centering
	\includegraphics[width=1\linewidth]{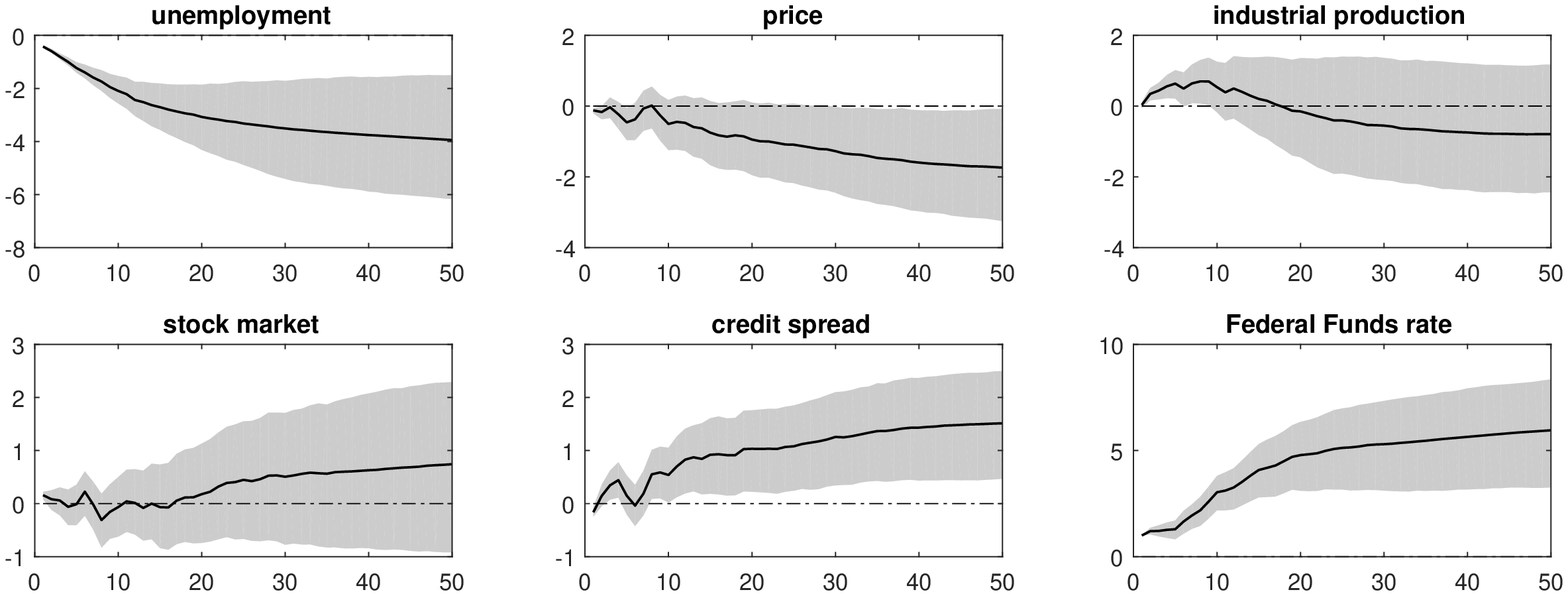}
	\caption{{Accumulated impulse responses to a monetary policy shock on the factors for the named factor FAVAR} \label{fig:irf_per_3s1_1}\\
		\footnotesize The graph shows accumulated impulse responses to a 100bp shock in the innovation of the FFR. The dashed lines correspond to 68\% bootstrap confidence intervals.}
\end{figure}

%%%%%%%%%%%%%%%%%%%%%%%%%%%%%%%%%%%%%%%%%%%%%%%%%%%%%%%%%%%%%%%%%%%%%%%%%%%%%%%%%%%%%%%%%%%%%%%%%%%%%%%%%%%%%%%

\begin{figure}[H]
	\centering
	\includegraphics[width=0.75\linewidth]{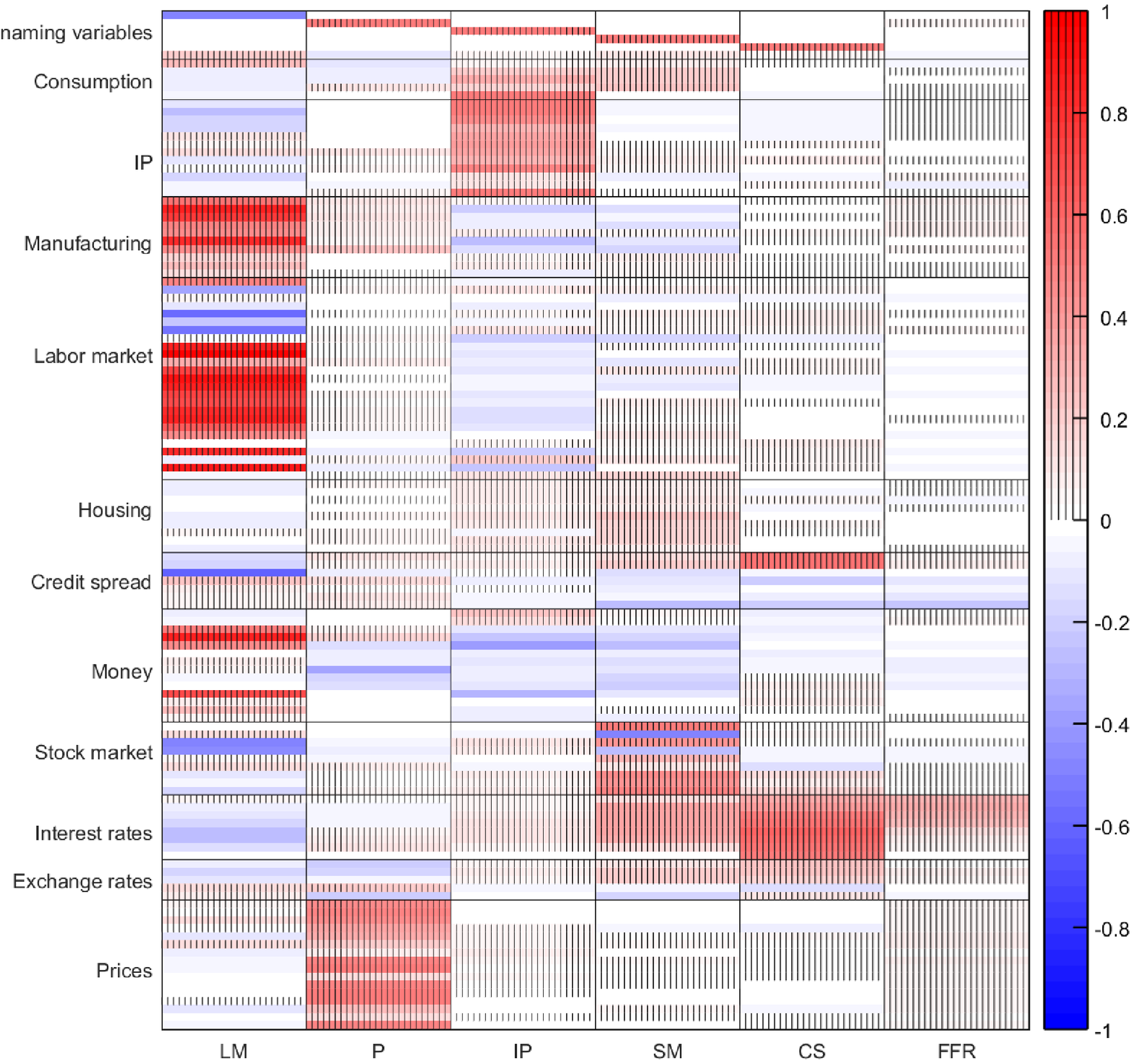}\vspace*{-1.5cm}
	\caption{{Impact matrix on the observed variables $x_t$ for the named factor FAVAR} \label{fig:hm_lam_e_3s1}\\
		\footnotesize This graph shows the contemporaneous impact matrix of a 100bp shock in the factor innovations to the observed time series for a factor model with named factor identification. The factors are abbreviated as follows: 'LM' labor market, 'P' price, 'IP' industrial production, 'SM' stock market, 'CS' credit spread and 'FFR' Federal Funds rate. The naming variables for the first scheme are given in Table \ref{tab:named1}.}
\end{figure}

\newpage

%%%%%%%%%%%%%%%%%%%%%%%%%%%%%%%%%%%%%%%%%%%%%%%%%%%%%%%%%%%%%%%%%%%%%%%%%%%%%%%%%%%%%%%%%%%%%%%%%%%%%%%%%%%%%%%

%\begin{figure}[H]
%	\centering
%	\includegraphics[width=1\linewidth]{irf_x_per_3s1_0}
%	\caption{{\textbf{Impulse responses on the observed variables $x_t$ for the named factor FAVAR}}\label{fig:irf_x_per_3s1_0}\\
%		\footnotesize The graph shows impulse responses to an unit shock in the innovation of the FFR. The dashed lines correspond to 68\% bootstrap confidence intervals.}
%\end{figure}

%%%%%%%%%%%%%%%%%%%%%%%%%%%%%%%%%%%%%%%%%%%%%%%%%%%%%%%%%%%%%%%%%%%%%%%%%%%%%%%%%%%%%%%%%%%%%%%%%%%%%%%%%%%%%%%
\begin{figure}[H]
	\centering
	\includegraphics[width=1\linewidth]{irf_x_per_3s1_1}
	\caption{{Accumulated impulse responses to a monetary policy shock on the observed variables $x_t$ for the named factor FAVAR}\label{fig:irf_x_per_3s1_1}\\
		\footnotesize The graph shows accumulated impulse responses to a 100bp shock in the innovation of the FFR. The dashed lines correspond to 68\% bootstrap confidence intervals.}
\end{figure}

%%%%%%%%%%%%%%%%%%%%%%%%%%%%%%%%%%%%%%%%%%%%%%%%%%%%%%%%%%%%%%%%%%%%%%%%%%%%%%%%%%%%%%%%%%%%%%%%%%%%%%%%%%%%%%%

%\begin{figure}[H]
%    \centering
%    \includegraphics[width=1\linewidth]{}
%    \caption{{Accumulated impulse responses to a monetary policy shock on the factors for the named factor FAVAR and $p = 3$} \label{fig:irf_x_per_3s1_1_p3}\\
%        \footnotesize The graph shows accumulated impulse responses to a 100bp shock in the innovation of the FFR. The dashed lines correspond to 68\% bootstrap confidence intervals.}
%\end{figure}

%%%%%%%%%%%%%%%%%%%%%%%%%%%%%%%%%%%%%%%%%%%%%%%%%%%%%%%%%%%%%%%%%%%%%%%%%%%%%%%%%%%%%%%%%%%%%%%%%%%%%%%%%%%%%%%
\begin{figure}[H]
	\centering
	\includegraphics[width=1\linewidth]{FAVAR_6}
	\caption{{Accumulated impulse responses on the observed variables $x_t$ for the dynamic factor model}\label{fig:irf_x_per_6}\\
		\footnotesize The graph shows accumulated impulse responses to a 100bp shock in the innovation of the FFR. The dashed lines correspond to 68\% bootstrap confidence intervals.}
\end{figure}

%%%%%%%%%%%%%%%%%%%%%%%%%%%%%%%%%%%%%%%%%%%%%%%%%%%%%%%%%%%%%%%%%%%%%%%%%%%%%%%%%%%%%%%%%%%%%%%%%%%%%%%%%%%%%%%

\begin{figure}[H]
	\centering
	\includegraphics[width=1\linewidth]{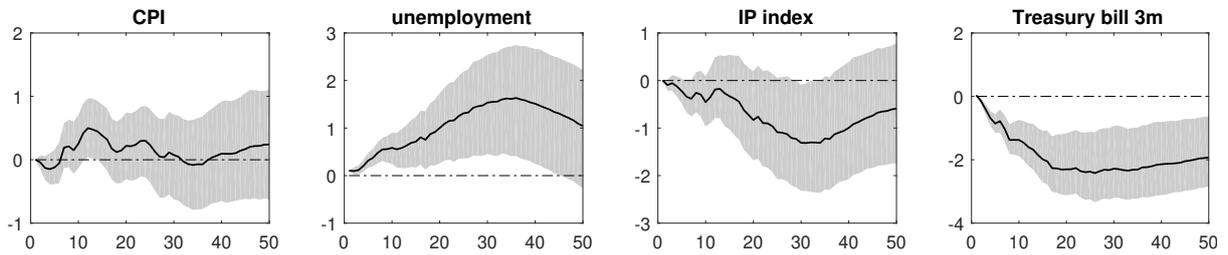}
	\caption{{Accumulated impulse responses on the observed variables $x_t$ for the VAR-F model}\label{fig:irf_x_per_5}\\
		\footnotesize The graph shows impulse responses to a 100bp shock in the innovation of the FFR. The dashed lines correspond to 68\% bootstrap confidence intervals.}
\end{figure}

\section*{Robustness Checks}
\label{sec:Figures_robust}
\subsection{Shadow rate}
\graphicspath{ {figures/shadow_rate/} }

\begin{figure}[H]
	\centering
	\includegraphics[width=1\linewidth]{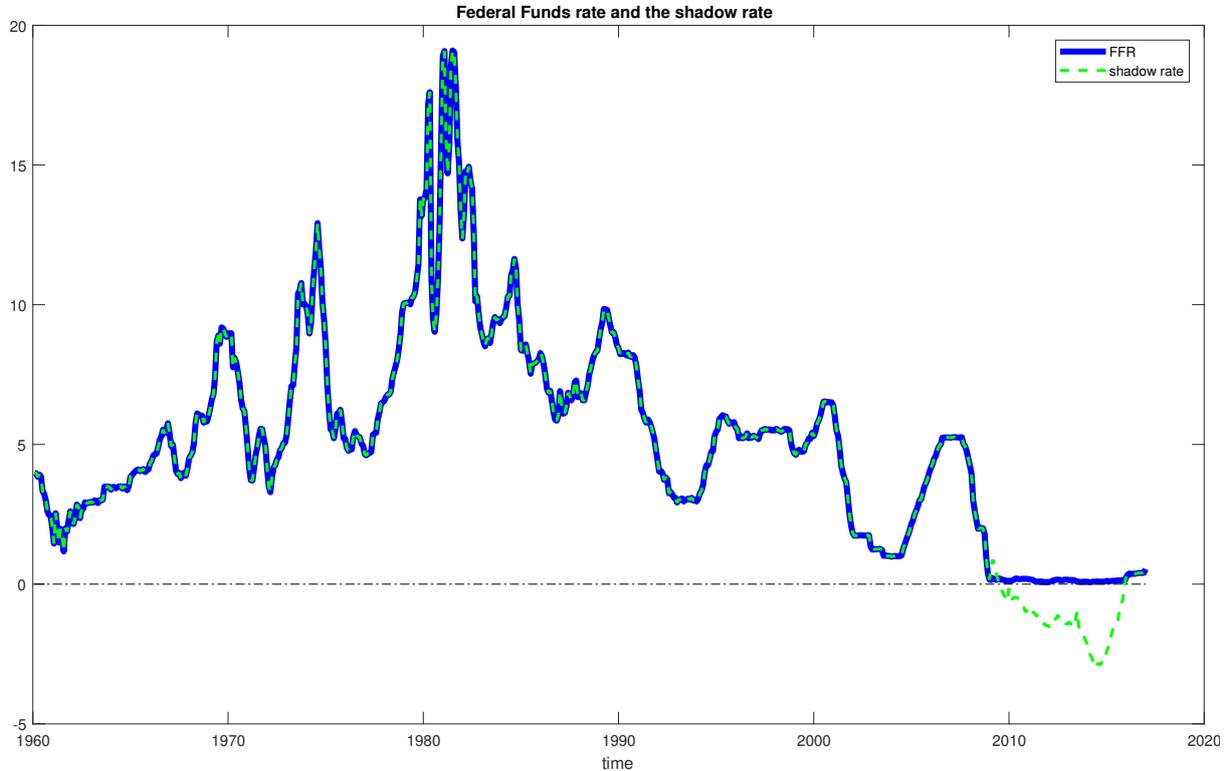}
	\caption{{Shadow and Federal funds rate}\label{fig:SR} \\
	\footnotesize This graph shows the shadow rate and the Federal Funds rate. The shadow rate takes effect whenever the FFR is at the zero lower bound, i.e. when it is below 25bp.}
\end{figure}

%%%%%%%%%%%%%%%%%%%%%%%%%%%%%%%%%%%%%%%%%%%%%%%%%%%%%%%%%%%%%%%%%%%%%%%%%%%%%%%%%%%%%%%%%%%%%%%%%%%%%%%%%%%%%%%

\begin{figure}[H]
	
	\centering
	\includegraphics[width=1\linewidth]{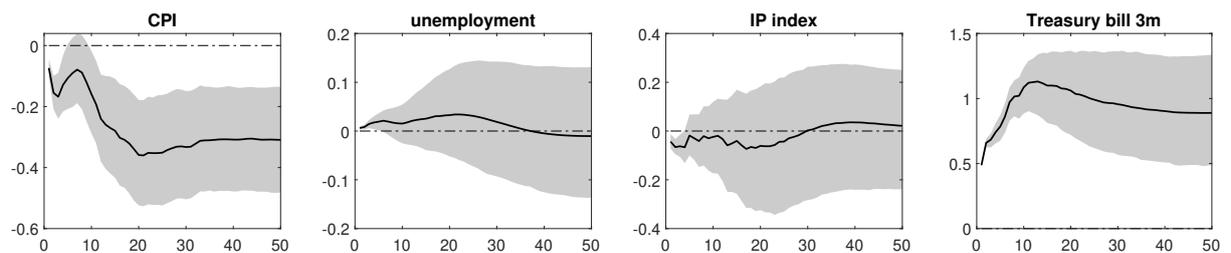}
	\caption{{Accumulated impulse responses to a monetary policy shock on the observed variables $x_t$ for the regularized FAVAR} \label{fig:irf_x_per_4_1_p1_sr}\\
		\footnotesize The graph shows accumulated impulse responses to a 100bp shock in the innovation of the shadow rate. The dashed lines correspond to 68\% bootstrap confidence intervals. The impulse responses are based on a RFAVAR(12) model.}
\end{figure}

%%%%%%%%%%%%%%%%%%%%%%%%%%%%%%%%%%%%%%%%%%%%%%%%%%%%%%%%%%%%%%%%%%%%%%%%%%%%%%%%%%%%%%%%%%%%%%%%%%%%%%%%%%%%%%%

%\begin{figure}[H]
%	
%	\centering
%	\includegraphics[width=1\linewidth]{p3/irf_x_per_4_1}
%	\caption{{\textbf{Accumulated impulse responses to a monetary policy shock on the observed variables $x_t$ for the regularized FAVAR}} \label{fig:irf_x_per_4_1_p3_sr}\\
%		\footnotesize The graph shows accumulated impulse responses to a 100bp shock in the innovation of the shadow rate. The dashed lines correspond to 68\% bootstrap confidence intervals. The impulse responses are based on a FAVAR(3) model.}
%\end{figure}

%%%%%%%%%%%%%%%%%%%%%%%%%%%%%%%%%%%%%%%%%%%%%%%%%%%%%%%%%%%%%%%%%%%%%%%%%%%%%%%%%%%%%%%%%%%%%%%%%%%%%%%%%%%%%%%

\begin{figure}[H]
	
	\centering
	\includegraphics[width=1\linewidth]{irf_x_per_3s1_1}
	\caption{{Accumulated impulse responses to a monetary policy shock on the observed variables $x_t$ for the named factor FAVAR} \label{fig:irf_x_per_3s1_1_sr}\\
		\footnotesize The graph shows accumulated impulse responses to a 100bp shock in the innovation of the shadow rate. The dashed lines correspond to 68\% bootstrap confidence intervals.}
\end{figure}

%%%%%%%%%%%%%%%%%%%%%%%%%%%%%%%%%%%%%%%%%%%%%%%%%%%%%%%%%%%%%%%%%%%%%%%%%%%%%%%%%%%%%%%%%%%%%%%%%%%%%%%%%%%%%%%
\begin{figure}[H]
	\centering
	\includegraphics[width=1\linewidth]{FAVAR_6}
	\caption{{Accumulated impulse responses on the observed variables $x_t$ for the dynamic factor model}\label{fig:irf_x_per_6_sr}\\
		\footnotesize The graph shows accumulated impulse responses to a 100bp shock in the innovation of the shadow rate. The dashed lines correspond to 68\% bootstrap confidence intervals.}
\end{figure}

%%%%%%%%%%%%%%%%%%%%%%%%%%%%%%%%%%%%%%%%%%%%%%%%%%%%%%%%%%%%%%%%%%%%%%%%%%%%%%%%%%%%%%%%%%%%%%%%%%%%%%%%%%%%%%%

\begin{figure}[H]
	\centering
	\includegraphics[width=1\linewidth]{FAVAR_5}
	\caption{{Accumulated impulse responses on the observed variables $x_t$ for the VAR-F model}\label{fig:irf_x_per_5_sr}\\
		\footnotesize The graph shows impulse responses to a 100bp shock in the innovation of the shadow rate. The dashed lines correspond to 68\% bootstrap confidence intervals.}
\end{figure}

\graphicspath{ {figures/p1_r20/} }
\subsection{Different initial number of factors}
\label{sec:A_figures_different_r}
\begin{figure}[H]
	\centering
	\includegraphics[width=0.75\linewidth]{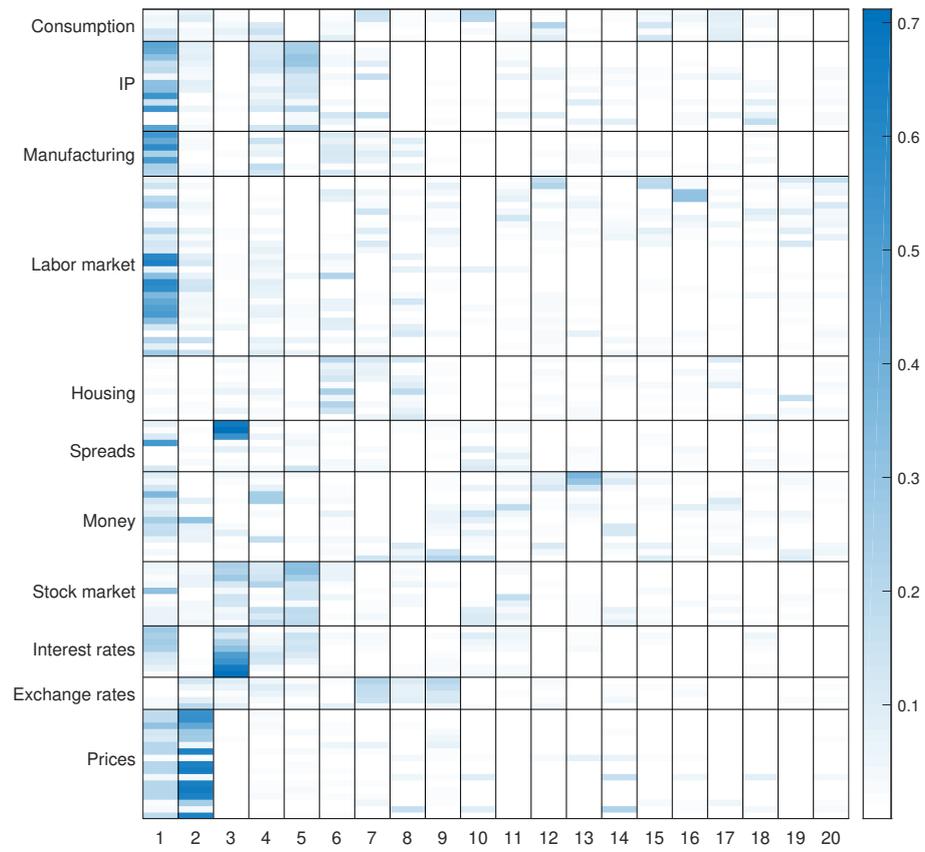}
	\caption{{$R^2$ plot for a factor model based on principal components analysis}\label{fig:hm_r2_pca_r} \\
	\footnotesize This graph shows $R^2$ for univariate regressions of the observed variables on each of the 20 factors.}
\end{figure}

%%%%%%%%%%%%%%%%%%%%%%%%%%%%%%%%%%%%%%%%%%%%%%%%%%%%%%%%%%%%%%%%%%%%%%%%%%%%%%%%%%%%%%%%%%%%%%%%%%%%%%%%%%%%%%%

\begin{figure}[H]
	\centering
	\includegraphics[width=0.75\linewidth]{hm_lam_r2_4}
	\caption{{$R^2$ plot for the regularized FAVAR}\label{fig:hm_r2_4_r} \\
	\footnotesize This graph shows $R^2$ for univariate regressions of the observed variables on each of the factors. The factors are abbreviated as follows: 'LM' labor market, 'P' price, 'IP' industrial production, 'SM' stock market, 'CS' credit spread and 'FFR' Federal Funds rate.}
\end{figure}

\graphicspath{ {figures/p2_r10/} }
\subsection{Different lag order}
\subsubsection{RFAVAR(2) model}
\label{sec:A_figures_different_p2}

%%%%%%%%%%%%%%%%%%%%%%%%%%%%%%%%%%%%%%%%%%%%%%%%%%%%%%%%%%%%%%%%%%%%%%%%%%%%%%%%%%%%%%%%%%%%%%%%%%%%%%%%%%%%%%%

%\begin{figure}[H]
%	\centering
%	\includegraphics[width=1\linewidth]{}
%	\caption{{\textbf{Impulse responses for the regularized FAVAR}} \label{fig:irf_per_4_0_p2}\\
%		\footnotesize The graph shows impulse responses to an unit shock in the innovation of the FFR. The dashed lines correspond to 68\% bootstrap confidence intervals. The impulse responses are based on a FAVAR(2) model.}
%\end{figure}

%%%%%%%%%%%%%%%%%%%%%%%%%%%%%%%%%%%%%%%%%%%%%%%%%%%%%%%%%%%%%%%%%%%%%%%%%%%%%%%%%%%%%%%%%%%%%%%%%%%%%%%%%%%%%%%

\begin{figure}[H]
	\centering
	\includegraphics[width=1\linewidth]{irf_per_4_1}
	\caption{{Accumulated impulse responses to a monetary policy shock on the factors for the regularized FAVAR} \label{fig:irf_per_4_1_p2}\\
		\footnotesize The graph shows accumulated impulse responses to a 100bp shock in the innovation of the FFR. The dashed lines correspond to 68\% bootstrap confidence intervals. The impulse responses are based on a RFAVAR(2) model.}
\end{figure}

%%%%%%%%%%%%%%%%%%%%%%%%%%%%%%%%%%%%%%%%%%%%%%%%%%%%%%%%%%%%%%%%%%%%%%%%%%%%%%%%%%%%%%%%%%%%%%%%%%%%%%%%%%%%%%%

%\begin{figure}[H]
%	\centering
%	\includegraphics[width=1\linewidth]{}
%	\caption{{\textbf{Impulse responses on the observed variables $x_t$ for the regularized FAVAR}} \label{fig:irf_x_per_4_0_p2}\\
%		\footnotesize The graph shows impulse responses to an unit shock in the innovation of the FFR. The dashed lines correspond to 68\% bootstrap confidence intervals. The impulse responses are based on a FAVAR(2) model.}
%\end{figure}

%%%%%%%%%%%%%%%%%%%%%%%%%%%%%%%%%%%%%%%%%%%%%%%%%%%%%%%%%%%%%%%%%%%%%%%%%%%%%%%%%%%%%%%%%%%%%%%%%%%%%%%%%%%%%%%
\vspace{1cm}

\begin{figure}[H]
	
	\centering
	\includegraphics[width=1\linewidth]{irf_x_per_4_1}
	\caption{{Accumulated impulse responses to a monetary policy shock on the observed variables $x_t$ for the regularized FAVAR} \label{fig:irf_x_per_4_1_p2}\\
		\footnotesize The graph shows accumulated impulse responses to a 100bp shock in the innovation of the FFR. The dashed lines correspond to 68\% bootstrap confidence intervals. The impulse responses are based on a RFAVAR(2) model.}
\end{figure}

%%%%%%%%%%%%%%%%%%%%%%%%%%%%%%%%%%%%%%%%%%%%%%%%%%%%%%%%%%%%%%%%%%%%%%%%%%%%%%%%%%%%%%%%%%%%%%%%%%%%%%%%%%%%%%%
\subsubsection{RFAVAR(3) model}
\graphicspath{ {figures/p3_r10/} }
\label{sec:A_figures_different_p3}

%%%%%%%%%%%%%%%%%%%%%%%%%%%%%%%%%%%%%%%%%%%%%%%%%%%%%%%%%%%%%%%%%%%%%%%%%%%%%%%%%%%%%%%%%%%%%%%%%%%%%%%%%%%%%%%

%\begin{figure}[H]
%	\centering
%	\includegraphics[width=1\linewidth]{irf_per_4_0}
%	\caption{{\textbf{Impulse responses for the regularized FAVAR}} \label{fig:irf_per_4_0_p3}\\
%		\footnotesize The graph shows impulse responses to an unit shock in the innovation of the FFR. The dashed lines correspond to 68\% bootstrap confidence intervals. The impulse responses are based on a FAVAR(3) model.}
%\end{figure}

%%%%%%%%%%%%%%%%%%%%%%%%%%%%%%%%%%%%%%%%%%%%%%%%%%%%%%%%%%%%%%%%%%%%%%%%%%%%%%%%%%%%%%%%%%%%%%%%%%%%%%%%%%%%%%%

\begin{figure}[H]
	\centering
	\includegraphics[width=1\linewidth]{irf_per_4_1}
	\caption{{Accumulated impulse responses to a monetary policy shock on the factors for the regularized FAVAR} \label{fig:irf_per_4_1_p3}\\
		\footnotesize The graph shows accumulated impulse responses to a 100bp shock in the innovation of the FFR. The dashed lines correspond to 68\% bootstrap confidence intervals. The impulse responses are based on a RFAVAR(3) model.}
\end{figure}

%%%%%%%%%%%%%%%%%%%%%%%%%%%%%%%%%%%%%%%%%%%%%%%%%%%%%%%%%%%%%%%%%%%%%%%%%%%%%%%%%%%%%%%%%%%%%%%%%%%%%%%%%%%%%%%

%\begin{figure}[H]
%	\centering
%	\includegraphics[width=1\linewidth]{irf_x_per_4_0}
%	\caption{{\textbf{Impulse responses on the observed variables $x_t$ for the regularized FAVAR}} \label{fig:irf_x_per_4_0_p3}\\
%		\footnotesize The graph shows impulse responses to an unit shock in the innovation of the FFR. The dashed lines correspond to 68\% bootstrap confidence intervals. The impulse responses are based on a FAVAR(3) model.}
%\end{figure}

%%%%%%%%%%%%%%%%%%%%%%%%%%%%%%%%%%%%%%%%%%%%%%%%%%%%%%%%%%%%%%%%%%%%%%%%%%%%%%%%%%%%%%%%%%%%%%%%%%%%%%%%%%%%%%%
\vspace{1cm}

\begin{figure}[H]
	
	\centering
	\includegraphics[width=1\linewidth]{irf_x_per_4_1}
	\caption{{Accumulated impulse responses to a monetary policy shock on the observed variables $x_t$ for the regularized FAVAR} \label{fig:irf_x_per_4_1_p3}\\
		\footnotesize The graph shows accumulated impulse responses to a 100bp shock in the innovation of the FFR. The dashed lines correspond to 68\% bootstrap confidence intervals. The impulse responses are based on a RFAVAR(3) model.}
\end{figure}

%%%%%%%%%%%%%%%%%%%%%%%%%%%%%%%%%%%%%%%%%%%%%%%%%%%%%%%%%%%%%%%%%%%%%%%%%%%%%%%%%%%%%%%%%%%%%%%%%%%%%%%%%%%%%%%
\subsubsection{RFAVAR(6) model}
\graphicspath{ {figures/p6_r10/} }
\label{sec:A_figures_different_p6}

%%%%%%%%%%%%%%%%%%%%%%%%%%%%%%%%%%%%%%%%%%%%%%%%%%%%%%%%%%%%%%%%%%%%%%%%%%%%%%%%%%%%%%%%%%%%%%%%%%%%%%%%%%%%%%%

%\begin{figure}[H]
%	\centering
%	\includegraphics[width=1\linewidth]{irf_per_4_0}
%	\caption{{\textbf{Impulse responses for the regularized FAVAR}} \label{fig:irf_per_4_0_p3}\\
%		\footnotesize The graph shows impulse responses to an unit shock in the innovation of the FFR. The dashed lines correspond to 68\% bootstrap confidence intervals. The impulse responses are based on a FAVAR(3) model.}
%\end{figure}

%%%%%%%%%%%%%%%%%%%%%%%%%%%%%%%%%%%%%%%%%%%%%%%%%%%%%%%%%%%%%%%%%%%%%%%%%%%%%%%%%%%%%%%%%%%%%%%%%%%%%%%%%%%%%%%

\begin{figure}[H]
	\centering
	\includegraphics[width=1\linewidth]{irf_per_4_1}
	\caption{{Accumulated impulse responses to a monetary policy shock on the factors for the regularized FAVAR} \label{fig:irf_per_4_1_p6}\\
		\footnotesize The graph shows accumulated impulse responses to a 100bp shock in the innovation of the FFR. The dashed lines correspond to 68\% bootstrap confidence intervals. The impulse responses are based on a RFAVAR(6) model.}
\end{figure}

%%%%%%%%%%%%%%%%%%%%%%%%%%%%%%%%%%%%%%%%%%%%%%%%%%%%%%%%%%%%%%%%%%%%%%%%%%%%%%%%%%%%%%%%%%%%%%%%%%%%%%%%%%%%%%%

%\begin{figure}[H]
%	\centering
%	\includegraphics[width=1\linewidth]{irf_x_per_4_0}
%	\caption{{\textbf{Impulse responses on the observed variables $x_t$ for the regularized FAVAR}} \label{fig:irf_x_per_4_0_p3}\\
%		\footnotesize The graph shows impulse responses to an unit shock in the innovation of the FFR. The dashed lines correspond to 68\% bootstrap confidence intervals. The impulse responses are based on a FAVAR(3) model.}
%\end{figure}

%%%%%%%%%%%%%%%%%%%%%%%%%%%%%%%%%%%%%%%%%%%%%%%%%%%%%%%%%%%%%%%%%%%%%%%%%%%%%%%%%%%%%%%%%%%%%%%%%%%%%%%%%%%%%%%
\vspace{1cm}

\begin{figure}[H]
	
	\centering
	\includegraphics[width=1\linewidth]{irf_x_per_4_1}
	\caption{{Accumulated impulse responses to a monetary policy shock on the observed variables $x_t$ for the regularized FAVAR} \label{fig:irf_x_per_4_1_p6}\\
		\footnotesize The graph shows accumulated impulse responses to a 100bp shock in the innovation of the FFR. The dashed lines correspond to 68\% bootstrap confidence intervals. The impulse responses are based on a RFAVAR(6) model.}
\end{figure}

%%%%%%%%%%%%%%%%%%%%%%%%%%%%%%%%%%%%%%%%%%%%%%%%%%%%%%%%%%%%%%%%%%%%%%%%%%%%%%%%%%%%%%%%%%%%%%%%%%%%%%%%%%%%%%%
%\subsubsection{RFAVAR(12) model}
%\graphicspath{ {figures/p12_r10/} }
%\label{sec:A_figures_different_p12}
%\input{A_Figures_different_p12}

\graphicspath{ {figures/p1_2006/} }
\subsection{Pre-crisis time span}

\label{sec:A_figures_different_period}
\begin{figure}[H]
	\centering
	\includegraphics[width=0.75\linewidth]{hm_lam_r2_pca}
	\caption{{$R^2$ plot for the FAVAR based on principal components analysis}\label{fig:hm_r2_pca_2006} \\
	\footnotesize This graph shows $R^2$ for univariate regressions of the observed variables on each of the nine factors. The number of factors is chosen based on the \cite{Bai2002} $IC_1$ criterion. The sample spans the pre-crisis period from January 1985 to December 2006.}
\end{figure}

%%%%%%%%%%%%%%%%%%%%%%%%%%%%%%%%%%%%%%%%%%%%%%%%%%%%%%%%%%%%%%%%%%%%%%%%%%%%%%%%%%%%%%%%%%%%%%%%%%%%%%%%%%%%%%%

\begin{figure}[H]
	\centering
	\includegraphics[width=0.75\linewidth]{hm_lam_r2_4}
	\caption{{$R^2$ plot for the regularized FAVAR}\label{fig:hm_r2_4_2006} \\
	\footnotesize This graph shows $R^2$ for univariate regressions of the observed variables on each of the factors. The factors are abbreviated as follows: 'P' price, 'CS' credit spreads, 'IP' industrial production, 'SM' stock market  and 'FFR' Federal Funds rate. The sample spans the pre-crisis period from January 1985 to December 2006.}
\end{figure}

%%%%%%%%%%%%%%%%%%%%%%%%%%%%%%%%%%%%%%%%%%%%%%%%%%%%%%%%%%%%%%%%%%%%%%%%%%%%%%%%%%%%%%%%%%%%%%%%%%%%%%%%%%%%%%%

\begin{figure}[H]
	\centering
	\begin{minipage}{.4\textwidth}
		\centering
		\captionsetup{width=0.95\linewidth}
		\subfloat[Price]{\includegraphics[width=0.95\linewidth]{factor_4_1}}
	\end{minipage}\hspace*{1cm}
	\begin{minipage}{.4\textwidth}
		\centering
		\captionsetup{width=0.95\linewidth}
		\subfloat[Credit Spread]{\includegraphics[width=0.95\linewidth]{factor_4_2}}
	\end{minipage}\\\vspace{0.3cm}
	
	\begin{minipage}{.4\textwidth}
		\centering
		\captionsetup{width=0.95\linewidth}
		\subfloat[Industrial Production]{\includegraphics[width=0.95\linewidth]{factor_4_3}}
	\end{minipage}\hspace*{1cm}
	\begin{minipage}{.4\textwidth}
		\centering
		\captionsetup{width=0.95\linewidth}
		\subfloat[Stock Market]{\includegraphics[width=0.95\linewidth]{factor_4_4}}
	\end{minipage}\\\vspace{0.3cm}
	\begin{minipage}{.4\textwidth}
		\centering
		\captionsetup{width=0.95\linewidth}
		\subfloat[Federal Funds Rate]{\includegraphics[width=0.95\linewidth]{factor_4_5}}
	\end{minipage}\\\vspace{0.3cm}
	\caption{Factor plots for the regularized FAVAR} \label{fig:factor_rfavar_2006}
\end{figure}

\begin{figure}[H]
	\centering
	\includegraphics[width=0.75\linewidth]{hm_lam_e_4}
	\caption{{Impact matrix on the observed variables $x_t$ for the regularized FAVAR} \label{fig:hm_lam_e_4_2006} \\
		\footnotesize This graph shows the contemporaneous impact matrix of a 100bp shock in the factor innovations to the observed time series for the regularized FAVAR model. The factors are abbreviated as follows: 'P' price, 'CS' credit spreads, 'IP' industrial production, 'SM' stock market  and 'FFR' Federal Funds rate. The sample spans the pre-crisis period from January 1985 to December 2006.}
\end{figure}

%%%%%%%%%%%%%%%%%%%%%%%%%%%%%%%%%%%%%%%%%%%%%%%%%%%%%%%%%%%%%%%%%%%%%%%%%%%%%%%%%%%%%%%%%%%%%%%%%%%%%%%%%%%%%%%

%\begin{figure}[H]
%	\centering
%	\includegraphics[width=1\linewidth]{irf_x_per_4_0}
%	\caption{{\textbf{Impulse responses on the observed variables $x_t$ for the regularized FAVAR}} \label{fig:irf_x_per_4_0_2006}\\
%		\footnotesize The graph shows impulse responses to an unit shock in the innovation of the FFR. The dashed lines correspond to 68\% bootstrap confidence intervals. The impulse responses are based on a FAVAR(2) model. The sample spans the pre-crisis period from January 1985 to December 2006.}
%\end{figure}

%%%%%%%%%%%%%%%%%%%%%%%%%%%%%%%%%%%%%%%%%%%%%%%%%%%%%%%%%%%%%%%%%%%%%%%%%%%%%%%%%%%%%%%%%%%%%%%%%%%%%%%%%%%%%%%

\begin{figure}[H]
	
	\centering
	\includegraphics[width=1\linewidth]{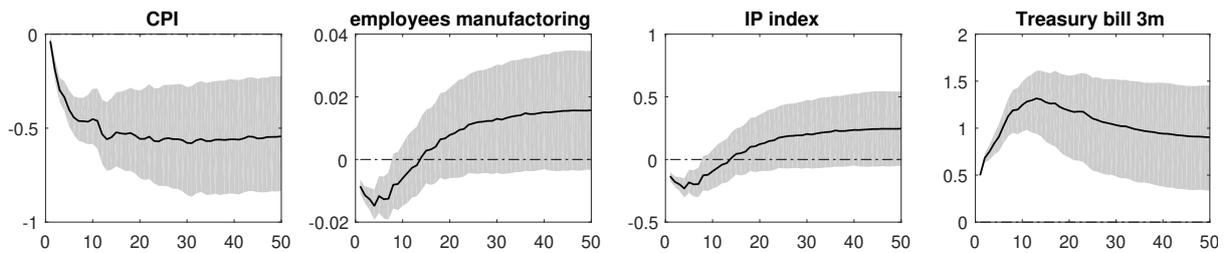}
	\caption{{Accumulated impulse responses to a monetary policy shock on the observed variables $x_t$ for the regularized FAVAR} \label{fig:irf_x_per_4_1_2006}\\
		\footnotesize The graph shows accumulated impulse responses to a 100bp shock in the innovation of the FFR. The dashed lines correspond to 68\% bootstrap confidence intervals. The impulse responses are based on a RFAVAR(12) model. The sample spans the pre-crisis period from January 1985 to December 2006.}
\end{figure}

%%%%%%%%%%%%%%%%%%%%%%%%%%%%%%%%%%%%%%%%%%%%%%%%%%%%%%%%%%%%%%%%%%%%%%%%%%%%%%%%%%%%%%%%%%%%%%%%%%%%%%%%%%%%%%%

\end{appendix}
%============= END APPENDIX =======================
\end{document}